%% file: main_supermaps.tex
\documentclass[a4paper,onecolumn,11pt,unpublished]{quantumarticle}
\pdfoutput=1

\input{preamble.tex}
\usepackage{mathtools}
\usepackage[utf8]{inputenc}
\usepackage[english]{babel}
\usepackage[T1]{fontenc}
\usepackage{comment}

\usepackage{amsthm}
\usepackage{thmtools}
\usepackage{thm-restate}
\newtheorem{theorem}{Theorem}
\newtheorem{lemma}{Lemma}
\newtheorem{corollary}{Corollary}

\theoremstyle{definition}
\newtheorem{definition}{Definition}
\newtheorem{example}{Example}

\theoremstyle{remark}
\newtheorem{remark}{Remark}

\pgfsetlayers{background,strings,edgelayer,nodelayer,foreground,main}

\begin{document}

\title{Supermaps on generalised theories}

\author[1]{Matt Wilson}
\author[2]{James Hefford}
\author[3]{Timothée Hoffreumon}
\affiliation{Université Paris-Saclay, CNRS, ENS Paris-Saclay, Inria, CentraleSupélec,  Laboratoire Méthodes Formelles, France.}
\affiliation[3]{Mathematical Institute, Slovak Academy of Sciences, \v{S}tef\'{a}nikova 49, 814 73 Bratislava, Slovakia.}

\maketitle

\begin{abstract}
Categorical supermaps generalise higher-order quantum operations from finite-dimensional quantum theory to arbitrary circuit theories. In this paper, we establish the Yoneda lemma for categorical supermaps, which states that whenever a physical theory has a suitable notion of channel-state duality, then categorical supermaps on that theory can be concretely represented in terms of that duality. This lemma eliminates any guesswork or ambiguity when defining the appropriate notion of supermap for these theories. As a concrete application, we show that the categorical supermaps on Boxworld are in general characterised by a physically well-motivated weaking of the recently proposed NSWSE principle for higher-order processes on boxworld.
Furthermore, via the same Yoneda lemma we put forward a stable definition for supermaps in real quantum theory.
\end{abstract}

In standard quantum information theory, the information resources are typically represented by quantum states, while the transformations they undergo are represented by quantum channels (see e.g. \cite{Wilde_2013}). In this framework, the spatial and temporal compositions arise naturally from, respectively, the tensor product and channel composition, forming the circuit model of quantum information \cite{deutsch1989quantum}. However, this `bottom-up' approach to spacetime compatibility overlooks the broader, `top-down' perspective: rather than `building' the environment around resources by composing channels, one can instead consider every possible spacetime environment compatible with the quantum mechanics locally obeyed by the resources \cite{chiribella_switch, oreshkov}. This has motivated a shift from studying quantum channels to studying higher-order quantum operations \cite{taranto2025higherorderquantumoperations}.

Such higher-order operations are often referred to as \textit{quantum supermaps} \cite{chiribella_supermaps, chiribella_circuits, chiribella_networks} or \textit{process matrices} \cite{oreshkov}. These treat channels as the fundamental resources to be transformed, capitalising on state preparation being a special kind of channel. 
This concept is in fact quite canonical, as evidenced by its independent rediscovery through various motivations \cite{Gutoski_2007,Ziman_2008,process_tensor, chiribella_supermaps, oreshkov}. It is therefore natural to ask whether this notion can be generalised from quantum theory to arbitrary physical theories cast in a suitably close framework, like \textit{Generalised Probabilistic Theories} (GPTs) \cite{barrett_gpts}, \textit{Operational Probabilitistic Theories} (OPTs) \cite{chiribella_informational}, or \textit{Categorical Probabilistic Theories} (CPTs) \cite{gogioso_cpt} for instance.

There are at least two compelling reasons to pursue such a generalisation.
First, just as generalised probabilistic theories have probed the fundamental limits of physically achievable correlations under a no-signalling constraint \cite{Popescu:1994aa}, a stable higher-order framework could be used to study the fundamental limits of causation, for example by studying the higher-order theory of logically consistent no-signalling constraints \cite{bavaresco2024indefinitecausalorderboxworld}.
Second, even within quantum theory itself, the appropriate notion of higher-order processes for infinite-dimensional systems remains unclear. Several alternative definitions exist, including \cite{Chiribella2013NormalCP, Giacomini_2016, wilson_locality, wilson_polycategories}, but their interrelationships are unexplored, hindering connections with quantum field theory \cite{paunkovic2023challenges} as well as the traditional approaches to quantum gravity.

Several works have begun to explore the generalisation of higher-order processes beyond quantum theory. 
Early works on indefinite causal order beyond quantum theory discovered that even the generalisation to classical theory leads to non-causal yet logically consistent environments \cite{baumeler_incompatibility, baumeler_logically}.
The definitions of higher-order processes on both classical and quantum theory were then generalised using the concept of a \textit{pre-causal category}, leading to a first categorical definition of higher-order processes on generalised theories \cite{kissinger_caus}.
This approach introduced pre-causal categories in order to abstract some of the core features of classical and quantum theory and then provided a general recipe for obtaining a higher-order theory from a pre-causal category by applying a canonical construction from linear logic. This construction establishes that higher-order processes are models of linear and spatiotemporal logics \cite{SimmonsKissinger2022, simmons_completelogic} --- in other words, that the different ways to compose processes together reflects the signalling relations between the composed processes \cite{apadula2022nosignalling,hoffreumon2022projective,jencova2024structurehigherorderquantum} --- thereby taming their composition rules \cite{SimmonsKissinger2022, simmons_completelogic, apadula2022nosignalling, Bisio_2019, jencova2024structurehigherorderquantum, hoffreumon2022projective}. 

However, this construction is only as general as the abstract structure it was built upon: while recent works showed that the notion of pre-causality is not necessary to define supermaps \cite{wilson_locality, wilson_polycategories, jencova2024structurehigherorderquantum}, they still rely on the weaker notion of \textit{compact closedness}. In a nutshell, requiring supermaps to be defined over compact closed categories is a technical abstraction of saying that supermaps can only be defined on theories admitting a notion of channel-state duality. In the quantum case, it means that the definition of supermaps relies on having a well-defined Choi-Jamio{\l}kowski (CJ) isomorphism \cite{Jamiolkowski_1972,Choi_1975}. This assumption remains quite limiting, excluding, for instance, infinite-dimensional quantum theory.

`Categorical Supermaps' is the phrase we will use to refer to a series of works \cite{wilson_locality, wilson_polycategories, hefford_supermaps, hefford2025bvcategoryspacetimeinterventions} which aim to evade these pitfalls by relying only on a single assumption: that the physical theory forms a \textit{symmetric monoidal category} \cite{coecke_picturalism}, meaning that the physical theory comes equipped with a reasonable notion of `parallel composition of systems' --- think of this as the tensor product of Hilbert spaces in the case of quantum theory --- allowing one to define a circuit theory. The core principle of categorical supermaps is to formalise higher-order maps as families of functions called \textit{locally-applicable transformations}, where `local-applicability' encodes certain commutativity requirements between the transformations \cite{wilson_locality, wilson2023originlinearityunitarityquantum}. A central theorem of these works shows that these families are in bijection with quantum and classical supermaps when respectively applied to the information theory of finite-dimensional quantum and classical systems. Moreover, by using the theory of profunctors \cite{loregian_coend} and the abstraction of iterative transformations (colloquially called `combs') in terms of profunctor optics \cite{roman_coend, roman_comb, boisseau_cornering, earnshaw, hefford_optics}, the categorical supermaps framework generalises the logic of higher-order quantum theory \cite{SimmonsKissinger2022, simmons_completelogic, apadula2022nosignalling, Bisio_2019, jencova2024structurehigherorderquantum, hoffreumon2022projective} to any circuit theory \cite{hefford_supermaps,hefford2025bvcategoryspacetimeinterventions}: the rules to compose the higher-order operations on any circuit theory follow a model of logic close to BV-logic \cite{guglielmi}, supporting currying, negation, the tensor and par of linear logic, as well as a non-commutative sequential connective modelling definite causal order.

Recent work on the generalisation of higher-order processes beyond quantum theory has focused on constructing higher-order maps over specific generalised probabilistic theories, such as Boxworld \cite{bavaresco2024indefinitecausalorderboxworld}, Hex-Square Theory \cite{sengupta2024achievingmaximalcausalindefiniteness}, and time-symmetric quantum theory \cite{mrini2024indefinitecausalstructurecausal,Chiribella_Liu}, as well as identifying information-theoretic principles expected to be common amongst higher-order processes on broader classes of theories \cite{liu2025parityerasurefoundationalprinciple}. The Boxworld supermap definition of \cite{bavaresco2024indefinitecausalorderboxworld} is particularly subtle: in this context, Boxworld is the circuit theory whose states are classical channels and whose parallel composition is the no-signalling composition\footnote{We follow the definition of \cite{bavaresco2024indefinitecausalorderboxworld} but remark that it is not the most general formulation of Boxworld that can be found in the literature \cite{barrett_gpts}.}, so Boxworld operations are already a special kind of higher-order operation between classical systems. The authors argue that a physically meaningful notion of supermap on Boxworld operations, in particular the one corresponding to a process matrix, which we will call \textit{Boxworld tensor}\footnote{The authors of this work actually use the nomenclature `process tensor'; we prefer `Boxworld tensor' so as to avoid using the word `process' in too many contexts.}, requires a constraint called `No-Signalling Without System Exchange (NSWSE)'. Since Boxworld is a monoidal category, and one with a CJ-isomorphism, one can apply the categorical supermaps framework \cite{wilson_locality, wilson_polycategories, hefford_supermaps, hefford2025bvcategoryspacetimeinterventions}. This leaves open a natural concrete question: do categorical supermaps for Boxworld recover NSWSE?

We use this question to illustrate the value of formal logical and categorical methods for studying indefinite causal order in general physical theories, and to motivate a general structural characterisation theorem for categorical supermaps on theories with CJ-isomorphisms. First, we show that the CJ-based definition of supermaps on Boxworld returns a physically well-motivated weakening of the NSWSE principle via mostly type-theoretic reasoning using the BV-structure of higher-order classical theory \cite{SimmonsKissinger2022, simmons_completelogic}. Then, we consider a broader, more structural question: under what conditions can categorical supermaps be characterised via a CJ-isomorphism? Addressing it leads to n adaptation of the Yoneda Lemma to categorical supermaps, which in turn demonstrates that categorical supermaps on Boxworld satisfy the same relaxation of the NSWSE principle.

The Yoneda Lemma of category theory is indeed the result offering such a concrete representation \cite{maclane1998categories}. In technical language, this fundamental lemma states that natural transformations between representable presheaves on a category correspond to morphisms in the category. Concretely, it means that when a family of functions on processes commutes with pre-composition, it can be represented by post-composition with a process. At first glance, categorical supermaps do not seem to support such a representation theorem: defining supermaps on the category of unitary channels provides a counterexample \cite{wilson_polycategories}. However, we find that while there cannot be a Yoneda Lemma for categorical supermaps on arbitrary monoidal categories, there is one for categorical supermaps on physical theories with a Choi isomorphism. More precisely, by modelling general theories as a relaxation of categorical probabilistic theories \cite{gogioso_cpt}, we establish the following main theorem:

\begin{restatable}[Yoneda Lemma for Categorical Supermaps]{theorem}{YonedaLemma}\label{thm:yoneda}
Let $\mathcal{C}$ be a generalised theory with a channel-state duality, then:  \[ \text{Categorical supermaps} = \text{CJ supermaps} .\]
\end{restatable}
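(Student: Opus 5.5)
We prove the statement by exhibiting mutually inverse maps between the two sets. We use the standard data of categorical supermaps. A categorical supermap of type $(A \Rightarrow A') \to (B \Rightarrow B')$ is a family of functions
\[ S_X : \mathcal{C}(A \otimes X, A' \otimes X) \to \mathcal{C}(B \otimes X, B' \otimes X), \]
indexed by all environment systems $X$, which is locally applicable: it commutes with pre- and post-composition by processes $f$ acting on the environment only. By ``channel-state duality'' we mean that each hom-set $\mathcal{C}(A\otimes X, A'\otimes X)$ is in natural bijection with a subset of states (or of a larger ambient compact closed or cone-valued category $\mathcal{D}$) on $A^{*}\otimes A'\otimes X^{*}\otimes X$, via cups and caps. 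A CJ supermap is then a morphism $\Phi$ of $\mathcal{D}$ from $A^{*}\otimes A'$ to $B^{*}\otimes B'$ which sends every (extended) channel to a channel. Concretely, for every $X$ and every channel $\phi$ on $A\otimes X$, the map $\Phi\otimes \mathrm{id}_{X^*\otimes X}$ applied to the Choi state of $\phi$ gives the Choi state of a channel.

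The direction CJ $\to$ categorical is straightforward. Given $\Phi$, define $S_X(\phi)$ as the channel whose Choi state is $(\Phi\otimes\mathrm{id})(\mathrm{CJ}(\phi))$. Local applicability follows from the yanking equations: pre- or post-composing $\phi$ with $\mathrm{id}_A\otimes f$ acts only on the $X^{*}\otimes X$ legs of the Choi state, so it commutes with $\Phi\otimes \mathrm{id}$.

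The direction categorical $\to$ CJ is the Yoneda step. Take the environment to be $X = A^{*}\otimes A'$, or whatever system the duality provides as a witness. Let $\eta$ be the ``generic'' channel that sends the $A$ input to the $X$-side and the $X$-side to the $A'$ output, that is, the swap-with-memory built from the unit and counit of the duality. Define $\Phi$ by applying $S_X(\eta)$ and bending wires, so that $\Phi$ reads off what $S$ does to the identity-like element. Next show that every channel $\phi : A\otimes Y\to A'\otimes Y$ factors as $\phi = (\mathrm{id}\otimes g)\circ\eta\circ(\mathrm{id}\otimes h)$, with $g,h$ acting only on the environment; here $h$ prepares the Choi state of $\phi$ alongside $Y$, and $g$ contracts it. Local applicability then gives $S_Y(\phi) = (\mathrm{id}\otimes g)\circ S_X(\eta)\circ(\mathrm{id}\otimes h)$, which is exactly the CJ action of $\Phi$. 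The environment change from $X$ to $Y$ is handled by local applicability with respect to maps $X \to Y$ of mixed type, or else by embedding both into $X\otimes Y$.

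Finally, check that the two constructions are mutually inverse. Categorical $\to$ CJ $\to$ categorical is the identity by the factorisation argument just given. CJ $\to$ categorical $\to$ CJ is the identity because evaluating $\Phi$ on the generic $\eta$ and bending wires returns $\Phi$ by the snake equations. We also verify that the $\Phi$ obtained in the Yoneda step satisfies the channel-to-channel condition, which holds since it is realised as $S$ applied to channels.

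The main obstacle is the factorisation step. In a generalised theory the cups and caps typically live only in the ambient $\mathcal{D}$ and are not physical channels; this is exactly why the unitary counterexample fails to admit a Yoneda lemma. I would therefore make the precise form of ``channel-state duality'' demand that $h$ and $g$ can be chosen as genuine processes of $\mathcal{C}$. One route is for $h$ to be a state preparation of the normalised Choi state and $g$ a process implementing the corresponding effect. Where normalisation or postselection obstructs this, I would first try to use linearity or convex extension of $S$ to the cone of subnormalised processes, which should follow from local applicability with classical control systems. With that in hand, the factorisation goes through.
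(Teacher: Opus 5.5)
Your Yoneda architecture is the right one: evaluate $S$ on a generic element, factor every input through it by environment combs, and apply local applicability. The CJ-to-categorical direction and the inverse checks are fine. You also correctly locate the difficulty, but the proposal does not overcome it.

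A preliminary problem is the generic element. Categorical supermaps act only on deterministic processes $\cat{C}^{\mathbf{d}}(A\otimes X, A'\otimes X')$. An $\eta$ built from the cups and caps of the duality therefore lies outside the domain of $S$, and $S(\eta)$ is undefined. If you allow independent environments $X=A'$, $X'=A$ (as in the paper's definition, rather than your single $X$), the generic element is just the swap $A\otimes A'\to A'\otimes A$. This is deterministic, and the representative is $s=S_{A',A}(\mathrm{swap})$; no duality is needed at this stage.

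The real gap is the factorisation $\phi=(1\otimes g)\circ(\mathrm{swap}\otimes 1)\circ(1\otimes f)$. Here $g$ must contain a cap (and $f$ in general a cup), while local applicability only holds for deterministic combs.

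\emph{Your first remedy}, strengthening the hypothesis so that $f,g$ are deterministic, is unsatisfiable in classical or quantum theory (indeed in any causal theory). With deterministic $f,g$, the $A'$ output is produced by $f$ from the environment alone, so $\phi$ could not signal from $A$ to $A'$. With trivial environments, $\phi$ would be discard-then-prepare, which already excludes the identity channel.

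\emph{Your second remedy}, a linear extension of $S$ to subnormalised processes, presupposes what has to be proved: that such an extension is well defined and commutes with non-deterministic combs. Two ideas are missing:
\begin{itemize}
\item[(i)] Use the actual hypothesis: cup and cap are, up to nonzero scalars, outcome components of deterministic instruments. Substituting these instruments makes the comb deterministic, so local applicability moves it through $S$. Only a classical post-selection and a scalar remain outside.
\item[(ii)] Use Lemma~\ref{lem:important}: if $M_x^y=N_{x'}^{y'}$ for parameterised instruments $M,N$, then $S(M)_x^y=S(N)_{x'}^{y'}$.
\end{itemize}
To conclude, apply (ii) to the comb applied to the swap, whose $y$-component is $\lambda\phi$, and, for instance, to $\phi$ tensored with a classical coin of bias $\lambda$. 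This gives $\lambda S(\phi)=\lambda\cdot(\text{CJ action of } s \text{ on } \phi)$.

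Statement (ii) is exactly the well-definedness your extension would need. It requires convex linearity, which, as you anticipated, follows from classical control. It also requires commutation of $S$ with suitably chosen stochastic channels on the outcome wire. Without (i) and (ii), the categorical-to-CJ direction is not established.
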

As a consequence, we prove that the categorical supermap formalism recovers supermaps on classical and quantum theory, while putting forward a physically well-motivated extension of the NSWSE definition for higher-order maps on Boworld:

\begin{restatable}[Recovery of Known Supermap Approaches]{theorem}{Recovery}
The following hold:
\begin{itemize}
\item Categorical Supermaps on Classical Theory = Classical Supermaps.
\item Categorical Supermaps on Quantum Theory = Quantum Supermaps.
\item Categorical Supermaps on Boxworld $\cong $ cw-NSWSE Boxworld Tensors.
\end{itemize}
\end{restatable}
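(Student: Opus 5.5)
The plan is to derive all three items as instances of Theorem~\ref{thm:yoneda}, so the work splits into two parts. First, check that each of classical theory, finite-dimensional quantum theory and Boxworld is a generalised theory with a channel-state duality in the paper's sense. Second, identify the resulting CJ supermaps with the known concrete notion in each case.

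For the first part, the duality has to be exhibited concretely in each theory.
\begin{itemize}
\item In quantum theory it is the Choi--Jamio{\l}kowski isomorphism: cups are given by the unnormalised maximally entangled state and caps by its dual effect. The snake equations are then the standard identities for this state and effect.
\item Classical theory is the diagonal (stochastic) subcase, with the perfectly correlated state $\sum_i |ii\rangle$ as cup.
\item For Boxworld, objects are pairs of classical input and output systems and states are classical channels. The cup is built from the classical copy-and-delta maps arranged so that wires are crossed from input to output. One must verify that these are legitimate Boxworld morphisms, possibly in a sub-normalised or cone-extended setting if the generalised theory allows it. One must also check that they are compatible with the no-signalling tensor.
\end{itemize}

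With the dualities in hand, Theorem~\ref{thm:yoneda} says that categorical supermaps correspond bijectively to CJ supermaps. These are morphisms on the CJ representatives which send every locally applicable extension of a channel to a valid channel.

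The second part is the identification step. For classical and quantum theory, a CJ supermap is a linear map on Choi operators that preserves complete positivity and trace preservation when tensored with arbitrary ancillas. This is, by definition, a classical or quantum supermap in the sense of Chiribella et al. and Oreshkov et al., and for these two cases equality holds literally rather than merely up to isomorphism.

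For Boxworld, I will translate the CJ condition into a condition on Boxworld tensors. A Boxworld tensor $W$ pairs with local Boxworld operations on each party, extended by an arbitrary Boxworld environment, and must always return a valid Boxworld state. This means a normalised classical channel satisfying the no-signalling constraint between the parties of the environment. I will then show that this condition is exactly cw-NSWSE. The reason is that when the environment is ``wired into'' the parties, signalling between environmental systems is forbidden unless those systems are exchanged through $W$'s own wires.

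The main obstacle is this Boxworld identification, in two directions. In one direction, cw-NSWSE must be shown to be sufficient for every extended, locally applied Boxworld operation to yield a no-signalling output. In the other direction, it must be necessary, so a family of witnessing environments is needed that probes each potential violation. For necessity I would first try environments built from classical copy maps and PR-box-like extremal no-signalling boxes, since these can certify arbitrary signalling directions. For sufficiency I would use the BV/type-theoretic reasoning promised earlier in the paper: Boxworld types are interpreted as par and tensor types over classical theory, and the no-signalling composition is then decomposed via the sequential connective. Finally, the statement uses $\cong$ rather than $=$ because it must be checked that the bijection from Theorem~\ref{thm:yoneda} respects the representation of Boxworld tensors only up to the canonical CJ isomorphism.
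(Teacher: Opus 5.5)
Your route is the paper's: the proof there is the one-liner ``all three theories have channel-state duality, so Theorem~\ref{thm:yoneda} applies''. It is combined with the earlier identifications, namely classical and quantum CJ-supermaps via \cite{wilson_locality}, and the theorem that Boxworld supermaps are exactly the cw-NSWSE tensors.

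Two remarks on your identification step. For quantum theory it is not quite ``by definition''. A CJ-supermap must itself be a morphism of $\cat{C}$, i.e.\ completely positive, and deriving this from complete preservation of channels needs a flagged-environment argument; that is the content of the cited result. For Boxworld, the identification with cw-NSWSE is a direct unfolding of definitions. Deterministic Boxworld morphisms are exactly the maps preserving multipartite no-signalling channels, and the ``complete'' clause of cw-NSWSE is precisely the environment extension. So no PR-box witnesses and no separate BV sufficiency argument are needed.

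The genuine gap is the Boxworld step you leave open, and it cannot be closed as stated. Theorem~\ref{thm:yoneda} needs more than compact closure of $\cat{C}$. For Boxworld that category is $\mat{\reals^+}$, so cups and caps are automatically legitimate non-deterministic morphisms and no cone extension is involved. What the theorem actually uses is that cup and cap are, up to a scalar, branches of \emph{deterministic} instruments. In the classical and quantum cases too, this is what you should verify (normalised correlated state; completing the cap to the discard or to a PVM), not just the snake equations.

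In Boxworld the cap passes this test, but the cup fails:
\begin{itemize}
\item The cup of $(A\to A')$ is the swap channel regarded as a state of $(A\to A')\otimes(A'\to A)$. You therefore need a deterministic state of $(A\to A')\otimes(A'\to A)\otimes\mathbb{X}$ whose $x=0$ branch is $p\,\texttt{Swap}$ with $p>0$.
\item Such a state must be a tripartite no-signalling box. Equivalently, by the paper's own identification $D\otimes\mathbb{X}\cong\mathbb{X}\seq D$ used for Boxworld instruments, each heralded branch must itself be a no-signalling box up to normalisation.
\item The $x=0$ branch gives $P(o_1,x{=}0\mid i_1,i_2)=p\,\delta_{o_1,i_2}$. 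This depends on the second party's input as soon as $|A'|\geq 2$.
\item Every cup in $\mat{\reals^+}$ is a weighted bijection of index sets. The same computation therefore excludes every other choice of dual object whenever the system has a non-trivial input.
\end{itemize}
The paper's Boxworld example only checks no-signalling from $A,A'$ into the flag, and overlooks this.

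So your plan, like the paper's argument, establishes the classical and quantum bullets. The Boxworld bullet needs a different argument for representing a categorical supermap by its value on the swap, and your proposal does not supply one. One possibility is an argument that uses only heralded effects, which Boxworld does have, together with affine combinations of deterministic states in place of a heralded cup.
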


More generally, as soon as a theory supports a concrete supermap definition, our abstract categorical approach is equivalent to it. Since categorical supermaps represent the minimal definition of supermaps, this concrete representation theorem demonstrates a surprising stability in defining supermaps for general physical theories. For example, it shows that a well-motivated extension of the NSWSE requirement naturally arises from applying the categorical supermaps approach to Boxworld. Moreover, it equips the study of higher-order processes in Boxworld with tools for iterated higher-order objects via currying and composition rules based on the established BV structure of categorical supermaps \cite{hefford2025bvcategoryspacetimeinterventions}.

As a novel application of Theorem \ref{thm:yoneda}, we note that real quantum theory is a generalised theory with channel-state duality. 
This allows us to present the first treatment of supermaps in real quantum theory while providing further evidence of the canonicity of categorical supermaps.

\begin{corollary}
    When $\cat{C}$ is real quantum theory (or quantum theory defined over any field extension of the constructible numbers), the categorical supermaps coincide with the CJ supermaps.
\end{corollary}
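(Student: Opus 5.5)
The corollary should follow directly from Theorem~\ref{thm:yoneda}, so the plan is to check that real quantum theory satisfies its hypotheses: it is a generalised theory with a channel-state duality. I would fix $\mathcal{C}$ as follows.
\begin{itemize}
\item Objects are finite-dimensional real Hilbert spaces (or, more generally, finite-dimensional inner product spaces over an ordered field $\mathbb{K}$ extending the constructible numbers).
\item Morphisms are completely positive maps between real symmetric matrix algebras. These are the maps of Kraus form $X\mapsto \sum_i K_i X K_i^T$, with the trace-preserving or trace-non-increasing ones singled out as channels.
\item Parallel composition is the tensor product over $\mathbb{K}$, and the trace serves as the discarding effect.
\end{itemize}
The first step is to verify the axioms of a generalised theory as set up earlier in the paper, i.e.\ the relaxation of categorical probabilistic theories. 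Since all constructions are the usual ones with complex conjugation replaced by transpose, this should be routine. The main thing to note is that the scalars form a suitable ordered semiring, namely $\mathbb{K}_{\geq 0}$.

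The second step is to exhibit the channel-state duality. I would take the unnormalised maximally entangled vector $|\Omega\rangle=\sum_i |i\rangle|i\rangle$. Its projector $\Omega=|\Omega\rangle\langle\Omega|$ is a real positive semidefinite operator, so it is a legitimate (unnormalised) state. Its adjoint functional $\mathrm{Tr}(\Omega\,\cdot\,)$ is a legitimate effect. Then I would check the snake equations
\[
(\mathrm{Tr}_{A'}\otimes \mathrm{id}_A)\circ\bigl((\Omega_{AA'}\otimes \mathrm{id}_{A''})\,\cdot\,\bigr)\ \text{restricted to the swap} = \mathrm{id},
\]
which hold by the standard computation $\sum_i \langle i|X|i'\rangle$ with all entries real. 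Combined with the real Choi theorem, this gives a bijection between CP maps $A\to B$ and positive states on $A\otimes B$. The real Choi theorem states that a map is CP if and only if its Choi matrix $\sum_{ij}|i\rangle\langle j|\otimes\Phi(|i\rangle\langle j|)$ is PSD. Its proof goes through verbatim over any real closed field, and I expect constructible extensions to suffice because spectral decompositions are only needed in the form of Kraus/Gram decompositions. In the paper's language this is a compact structure whose cup and cap lie in $\mathcal{C}$ and whose induced duality respects the positivity and normalisation constraints used to define CJ supermaps.

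The subtle step, and the one I expect to be the main obstacle, is whether the notion of channel-state duality demanded by Theorem~\ref{thm:yoneda} includes a condition that can fail over $\mathbb{R}$. Real quantum theory is not locally tomographic: $\mathrm{Sym}(A\otimes B)\neq \mathrm{Sym}(A)\otimes\mathrm{Sym}(B)$. So if the theorem implicitly assumes that morphisms are determined by local effects, or that supermaps act on a product of local state spaces, this would need care. My plan is to argue that the duality only needs the cup and cap to be morphisms of $\mathcal{C}$, together with the fact that the theory is closed under the operations used in the Yoneda argument (pre- and post-composition with the cup/cap and tensoring with identities). These hold because CP maps over $\mathbb{K}$ are closed under tensor and composition. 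For an arbitrary field extension of the constructible numbers, I would additionally confirm that positivity is well defined. This requires the field to be ordered, presumably meaning ordered subfields of $\mathbb{R}$, so I would restrict to those. With the hypotheses verified, Theorem~\ref{thm:yoneda} yields that categorical supermaps coincide with CJ supermaps.
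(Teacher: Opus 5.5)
Your strategy (verify the hypotheses of Theorem~\ref{thm:yoneda}, then apply it) is the paper's, but you check the wrong hypothesis and explicitly propose to drop the one that matters. In the paper, a generalised theory \emph{has channel-state duality} only if two things hold. First, $\cat{C}$ is compact closed. Second, the cup and the cap are each, up to a scalar, a component of a \emph{deterministic} parameterised instrument.

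Your plan to argue that ``the duality only needs the cup and cap to be morphisms of $\cat{C}$'' cannot succeed. A categorical supermap is defined only on $\cat{C}^{\mathbf{d}}$, and it is only required to commute with deterministic combs, so a non-deterministic cap can never be moved through $S$. The proof of Theorem~\ref{thm:yoneda} works in four steps:
\begin{enumerate}
\item it realises the cup and cap as outcome branches of deterministic instruments;
\item it slides those instruments inside $S$;
\item it leaves only classical post-selections outside;
\item it invokes Lemma~\ref{lem:important} to conclude.
\end{enumerate}
Without the instrument condition this argument is unavailable. This is why the paper stresses that there is no Yoneda lemma for categorical supermaps on arbitrary monoidal categories. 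Note also that in the paper $\mathsf{Split}^\dag(\cpm{\mat{S}})$ is compact closed for \emph{every} $S$; the square-root hypothesis in the statement is used solely to secure the instrument condition you omit.

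The missing step is short. For real quantum theory, $\frac1n\cup$ is a normalised real density matrix, hence a deterministic state. The scaled cap $\frac1n\cap$ is the effect $\langle\psi|\cdot|\psi\rangle$ with $|\psi\rangle=\frac{1}{\sqrt n}\sum_i|ii\rangle$. Its projector $P$ is completed by $1-P$, again a real positive (Kraus-decomposable) map, to a real PVM. That PVM is a deterministic instrument with a two-outcome classical output, one of whose components is the cap up to a scalar. Over a general field $S$ you must check that $P$ and $1-P$ are morphisms of $\mathsf{Split}^\dag(\cpm{\mat{S}})$; this is exactly where the existence of square roots is needed.

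Two further points.
\begin{itemize}
\item Your worry about local tomography is a red herring. Theorem~\ref{thm:yoneda} assumes nothing of the sort, and compact closure of $\mathsf{Split}^\dag(\cpm{\mat{\reals}})$ holds regardless, so the separate real Choi theorem is unnecessary.
\item Restricting to ordered subfields of $\reals$ proves less than the statement claims. In the CPM setting positivity comes from the involution (positive elements are sums $\sum_i s_i^* s_i$), so no ordering is needed. The corollary is meant to cover fields such as $\mathbb{C}$, $\overline{\mathbb{Q}}$ and $\overline{\mathbb{Q}_p}$.
\end{itemize}
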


Therefore, by establishing a version of the Yoneda Lemma suited for categorical supermaps, we clarify the landscape of possible causal structures in general physical theories while leveraging a core tool of category theory. To enhance readability, categorical notions are explained using diagrammatic notation and equations throughout. It is the authors' intention to make this paper fairly accessible to readers who are not experts in category theory so to get them accustomed to the categorical (supermaps) framework.

\section{Generalised Theories}


There are many frameworks for general physical theories, including the aforementioned Generalised Probabilistic Theories (GPTs) \cite{barrett_gpts}, Operational Probabilistic Theories (OPTs) \cite{chiribella_purification} and Categorical Probabilistic Theories (CPTs) \cite{gogioso_cpt}.
In this work, we will define a notion of generalised theory which sits between CPTs and OPTs. Its core postulate is that the classical variables representing the outcomes of measurements as well as the settings conditioning the agents' actions can be represented within the theory, and in a circuit-theoretic way using boxes and wires \cite{gogioso_cpt}. 
Every CPT has this property, in addition to more stringent causality properties, so the results of this paper apply directly to all CPTs. 
On the other hand, it is an open question as to whether every OPT can be re-interpreted as a generalised theory in the sense in which they are presented here.
In fact, the closest existing notion in the literature to what we will refer to as generalised theories is the one of `theories with classical interfaces' as defined in \cite{Selby2021reconstructing}.

Before presenting the technical definition, we pause to provide some intuition of the categorical concepts that will be used. The following paragraphs are aimed at the reader knowledgeable in quantum theory but not familiar with its category-theoretical rephrasing.

A category $\cat{C}$ is a collection of `objects', often denoted via capital Latin letters $A,B,\dots$, together with a collection of `morphisms', often denoted with lower-case Latin letters $f,g,\dots$, which describe how the objects are interrelated with one another. 
In quantum theory, objects are typically taken to be the Hilbert spaces associated with every system. Everything else is encoded by morphisms, which are typically taken to be some specified subset of the linear maps between Hilbert spaces.
A commonly used notion of morphism is that of a (quantum) channel from an input system $A$ to an output system $B$. Such a channel can be thought of as a morphism $f: A \rightarrow B$ between objects $A$ and $B$ in the category of quantum channels between Hilbert spaces. Accordingly, we say that this morphism is `of type $A \rightarrow B$'. Morphisms naturally come with a notion of sequential composition: out of any two morphisms, say $f$ of type $ A \rightarrow B$ and $g$ of type $B \rightarrow C$, there exists a morphism $g \circ f$ of type $A \rightarrow C$. For quantum channels, this sequential composition operation is simply function composition. 


It is important to understand that, in the category of quantum channels, state preparation and post-selection (components of destructive measurements) are special kinds of morphisms, not objects. To represent them, we use \textit{monoidal categories}. Every monoidal category $\cat{C}$ contains a `trivial object', for which we reserve the symbol $I$.
In the category of quantum channels, $I$ is the one-dimensional system and the preparation of a system $A$ in a given state $\rho$ is a morphism $f_\rho$ from the trivial object $I$ (representing ``nothing'') to the object $A$ (representing the Hilbert space where the state is defined); $f_\rho$, then, has type $I \rightarrow A$. Generalising, a state is a morphism from the trivial object to any other object; picking a state $\rho$ in Hilbert space $A$ becomes picking a morphism $f$ of type $I \rightarrow A$. Similarly, a (destructive) measurement effect is a morphism from any object to the trivial object. For example, the morphism $g_{e_i}$ of type $A \rightarrow I$ represents the application of a measurement effect (i.e. POVM element) $e_i$ to whichever state was in $A$. The composition of morphisms is what gives the Born rule: the application of the POVM element $e_i$ to state $\rho$ has become the composition of morphisms $g_{e_i} \circ f_\rho$. From their types $(A \rightarrow I) \circ (I \rightarrow A)$, we can deduce that the result has type $I \rightarrow I$. The probability of measuring the effect $e_i$ from state $\rho$ is therefore itself a morphism from the trivial system to itself, and such morphisms are typically referred to as scalars. The usual Born rule $p(i) = \Tr[\rho \: e_i]$ is then substituted by the following composition of morphisms $h_{p(i)} = g_{e_i} \circ f_\rho$.


Hence, state preparation, evolution, and measurement are all sorts of morphisms contained in the category, while the kind of systems these procedures (which we will more often call `transformations', `operations', or `processes') relate with one another are the objects in the category.

Monoidal categories not only model the sequential composition of processes, but also the parallel composition of systems and processes alike.
The tensor product of Hilbert spaces returns a Hilbert space, and the tensor product of two linear maps between Hilbert spaces returns a linear map between Hilbert spaces. Monoidal categories model the salient features of this tensor product by supporting a parallel composition operation $\otimes$. In the monoidal category of linear maps between Hilbert spaces, the `tensor product between the Hilbert spaces associated with systems A and B' corresponds to a new object $A \otimes B$, while the notion of `Kronecker product between the linear maps $f$ and $g$' accordingly corresponds to a new morphism $f \otimes g$. Briefly put, when a category possesses such an extra operation $\otimes$ with unit $I$, it is called a monoidal category. And when a monoidal category also contains swap morphisms $A \otimes B \rightarrow B \otimes A$, it is called a \textit{symmetric monoidal category}.


The category of completely positive maps is contained as a subcategory of the (symmetric monoidal) category of Hilbert spaces: to obtain it, one needs to restrict the objects (i.e., Hilbert spaces) to only the Hilbert spaces of (Hilbert-Schmidt) operators, and restrict the set of all morphisms (i.e., linear maps between Hilbert spaces) to only the completely positive maps. 
Adding determinism and a time direction, the category of quantum channels is a further restriction requiring the morphisms to be trace-preserving, so as to limit them to the usual notions of quantum states, effects, and channels. 

One can subsequently study similar theories by changing the (symmetric monoidal) category: for instance, the information theory of classical systems (which we call `classical theory') is analogously defined as the category of stochastic maps between real vector spaces (or, alternatively, as the category of stochastic matrices). Thinking about these quantum(-like) theories in terms of types is then very well-suited to define higher-order maps on these: if a quantum channel is a morphism of type $A \rightarrow B$, then the quantum supermap must have type $(A \rightarrow B) \rightarrow (A' \rightarrow B')$. However, this expression is outside of the category since $A \rightarrow B$ and $A' \rightarrow B'$ are a priori not objects themselves. To build higher-order theories, therefore, one must show how to build categories in which these would make sense as objects. Making this naive type-reasoning rigorous is exactly the point of categorical supermaps and the related adaptation of the Yoneda lemma developed in the present paper. 

In the following, we will use the term \textit{processes} to refer to morphisms between any two non-trivial objects, \textit{states} for the morphisms of type $I \rightarrow A, \forall A \neq I$, and \textit{effects} for those of type $A \rightarrow I, \forall A \neq I$. 
With this in mind, the reader should now have an overview of the material needed to grasp our definition: a \textit{generalised theory} is a symmetric monoidal category $\cat{C}$ (in fact, as we will see, a collection of symmetric monoidal categories) whose objects can be think of as the physical systems, and whose morphisms as the physical processes. The sequential composition of morphisms $g\circ f$ is its timelike composition, and the monoidal composition $g\otimes f$ is its spacelike one. 
But generalised theories are not only symmetric monoidal categories; their core postulate has yet to be implemented. It states that a generalised theory must also contain special objects, which we will denote with blackboard font $\mathbb{X}$, such that the processes of the form $\mathbb{X}\morph{}\mathbb{Y}$ are non-normalised stochastic matrices. These special objects can be thought of as the ``decohered systems'', i.e., the systems that will behave classically. 
We now give a rather formal definition to fix terminology for those more familiar with category theory. A purely diagrammatic description of generalised theories is given directly after, and is all that is needed to follow the remaining proofs and definitions of the paper.

\begin{definition}[Generalised Theories]
    A generalised theory is a convex enriched symmetric monoidal category $\cat{C}$, representing the non-deterministic physical events, and equipped with:
    \begin{itemize}
        \item a specified wide convex subcategory $\cat{C}^{\mathbf{d}} \hookrightarrow  \cat{C}$, representing the deterministic physical events,
        \item a full and faithful embedding $\stoch \hookrightarrow \cat{C}^{\mathbf{d}}$,
        \item a full and faithful embedding $\mat{\reals^+} \hookrightarrow \cat{C}$,
    \end{itemize}
    such that the following diagram commutes,
    \xsavebox{cground}{$\cat{C}^{\mathbf{d}}$}
    \[
    \begin{tikzcd}
        \stoch \arrow[r] \arrow[d] & \xusebox{cground} \arrow[d] \\
        \mat{\reals^+} \arrow[r] & \cat{C}
    \end{tikzcd}
    \]
    and such that all processes can be classically controlled. That is, for any family $\{M_{i}: A\rightarrow A'\}_i$ of deterministic processes (meaning the morphisms of $\cat{C}^{\mathbf{d}}$), there exists a deterministic physical process $M: A\otimes \mathbb{I}  \rightarrow A'$ and a deterministic classical process $i: I \rightarrow \mathbb{I}$ such that
    \begin{equation}\label{eq:control}
    	M \circ (1_A\otimes i)  = M_i \:.
    \end{equation}
    \end{definition}

The heuristic picture is as follows: a generalised theory consists of a deterministic theory $\cat{C}^{\mathbf{d}} $, whose morphisms represent physically admissible processes that can be achieved with certainty (like e.g., any state preparation or transformation). The deterministic theory is included in a non-deterministic theory $\cat{C}$, $\cat{C}^{\mathbf{d}}  \subseteq \cat{C}$, whose morphisms represent all physically admissible operations, including those that can only be achieved probabilistically (like e.g., a measurement outcome or post-selection).  Furthermore, deterministic classical theory (represented as `$\stoch$', the category of stochastic matrices) is included in the deterministic physical theory $\cat{C}^{\mathbf{d}} $ and non-deterministic classical theory (represented as `$\mat{\reals^+}$', the category of positive-valued matrices) is included in the non-deterministic theory.

To represent this using circuit diagrams, the usual conventions will be in use: diagrams are read from bottom to top, non-trivial systems (objects) are represented with wires, processes (morphisms) with boxes, and the classical systems will be coloured in blue to distinguish them from the physical systems. In the following picture, we represent the inclusion $\cat{C}^{\mathbf{d}}  \subseteq \cat{C}$, in terms of morphisms.
\[ \input{figs/cpt_cartoon_2.tikz} \ \subseteq \ \input{figs/cpt_cartoon_1.tikz} \]
In the above, both categories are shown to contain a physical process $M$ of type $(A \otimes \mathbb{X}) \rightarrow (B \otimes \mathbb{Y})$ (the boxes on the left side of each frame) together with a classical process of type $\mathbb{X} \rightarrow \mathbb{Y}$ (right side of each frame). Each of these classical processes has been ``zoomed in'' to make apparent the respective stochastic and positive real matrices they represent.
 
In summary, a generalised theory consists of four interlocked monoidal categories with the following interpretations.
\begin{itemize}
    \item $\mat{\reals^+}$: The symmetric monoidal category of positive real-valued matrices, whose objects are the natural numbers $n\in\nats$ and morphisms $n\morph{}m$ are the $m\times n$ matrices with entries from $\reals^+$. This is the monoidal category in which basic (not necessarily normalised) probability calculations can be made. 
    \item $\stoch$: The symmetric monoidal category of stochastic matrices (finite-dimensional classical channels). The objects are again the natural numbers $n\in\nats$ and a morphism $n\morph{}m$ is an $m\times n$ stochastic matrix. This is the monoidal category which tracks the causal flow of classical information.
    \item $\cat{C}$: A symmetric monoidal category representing every event (possibly the non-deterministic ones as well) of the physical theory, i.e.\ a mathematical description of post-measurement results\footnote{In principle, this can contain more than just measurement outcomes as it can contain super-normalised processes.}. The processes of this category are closed under convex combinations\footnote{Formally, this means that the homsets are convex subsets of vector spaces}.
    \item $\cat{C}^{\mathbf{d}}$: A symmetric monoidal category of deterministic events, those which can occur with probability $1$. The deterministic events are also closed under convex combinations. 
\end{itemize}
The inclusion of classical theory in whichever generalised theory is under consideration guarantees that some of the systems will obey classical information theory. This is a prerequisite for the last requirement: the possibility to classically control any event. 
This final requirement ensures that the classical theory blends properly within $\cat{C}$; the interaction \eqref{eq:control} between a deterministic classical state $x$ and family of deterministic processes $\{M_x\}_x$ it controls is diagrammatically represented as follows:
\[  \input{figs/cpt_control_1.tikz} \ = \ \input{figs/cpt_control_2.tikz}.  \]



Let us see some examples.
\begin{example}[Classical theory]
In classical theory we simply take $\cat{C}^{\mathbf{d}}  = \stoch$ and $\cat{C} = \mat{\reals^+}$. A deterministic process $f:n_A \rightarrow n_B$ is an $n_B \times n_A$ stochastic matrix, a deterministic state $\rho : I \rightarrow A$ is an $n_A$-dimensional probability vector, and a deterministic effect $E: A \rightarrow I$ is the transpose of the $n_A$ vector containing $1$'s in every entry.
\end{example}
\begin{example}[Quantum theory]
In quantum theory, we take $\cat{C}^{\mathbf{d}} $ to be the category of hybrid classical-quantum channels, that is, a morphism of type $(A   , n ) \rightarrow (B , m) $ with $n, m$ natural numbers is taken to be a function from elements of $[1, \dots ,n]$ to $m$-component instruments. More precisely, a morphism of type $(A,  n) \rightarrow (B, m) $ is taken to be a family $\{M_x^{y}\}_{x,y}$ of completely positive maps $M_x^y:A\rightarrow B$ such that for each fixed $x$ (often called a \textit{setting}) the family $\{M_x^{y}\}_{y}$ is a quantum instrument, meaning that $\sum_y M^y_x$ is a CPTP map. 
We take $\cat{C}$ to be the category of hybrid quantum-classical maps, meaning families $\{M_x^{y}\}_{x,y}$ of completely positive maps without any trace-preservation requirement. 
Sequential composition of $M: (A ,   n) \rightarrow (B,  m)$ with $N: (B,  m) \rightarrow (C ,  q)$ is given by the family $(N \circ M)_x^z = \sum_y N_y^{z} \circ M_x^{y}$. 
The parallel composition of systems is given by $(A ,  n) \otimes (B ,  m) := (A \otimes B ,n m)$.
The parallel composition of hybrid classical-quantum maps is given by $(N \otimes M)_{x,x'}^{y,y'} = \sum_y N_x^{y} \otimes M_{x'}^{y'}$. The monoidal unit is given by $(\mathbb{C} ,1)$.
Finally, the classical systems are given by those of the form $(\mathbb{I} ,n)$, since morphisms of type $(\mathbb{I} ,n) \rightarrow (\mathbb{I} ,m)$ are then matrices of scalars $M_x^y :  \mathbb{I} \rightarrow \mathbb{I}$, thus matrices valued (up to isomorphism) in the non-negative reals. In the deterministic case, these matrices come with the extra requirement that $\sum_y M_x^y = 1$, which indeed is the requirement that the matrix be stochastic. 
\end{example}
\begin{example}[Quantum theory on general subsystems]
In quantum theory with general subsystems, we take $\cat{C}^{\mathbf{d}} $ to be the category of quantum channels between finite-dimensional $C^*$-algebras\footnote{Recall that a (complex) $C^*$-algebra is but an abstraction of the algebra of linear operators over a (complex) Hilbert space.}. We take $\cat{C}$ to be the category of completely positive maps between $C^*$-algebras. The embedding $\stoch \hookrightarrow \cat{C}$ is given by identifying an $n$-dimensional basis of the variables of type $I \rightarrow \mathbb{X}$ (probability vectors) with a choice of an $n^2$-dimensional basis of the corresponding variables of type $I \rightarrow A$ (density matrices), such that the classical basis correspond to the diagonal entries of the quantum basis. Accordingly, the stochastic matrices of type $\mathbb{X} \rightarrow \mathbb{Y}$ are identified with the classical channels of type $A \rightarrow B$. That way, $\mathbb{X}$ is identified with the $C^*$-algebras of operators diagonal in a given basis of a Hilbert space. The projection from a morphism of $A$ to one of $\mathbb{X}$ thus recovers the notion of decoherence in a given basis. A similar procedure is conducted for the embedding $\mat{\reals^+} \hookrightarrow \cat{C}$.
\end{example}

\begin{example}[Boxworld (the theory of non-signalling classical boxes)]
Following the Boxworld definition of \cite{bavaresco2024indefinitecausalorderboxworld}, we define it as the generalised theory whose states are the morphisms of classical theory, but under the restriction that the monoidal structure (the tensor product) is no-signalling. In this somewhat nonstandard view, boxworld is defined as a subtheory of second-order classical theory. Concretely, it means that the states of Boxworld correspond to morphisms $(a \rightarrow a')$ of classical theory, with $a,a' $ classical objects and with the monoidal unit given by $I_{\texttt{Box}} = (1 \rightarrow 1)$. A state of boxworld then has the type $I_{\texttt{Box}} \rightarrow (a \rightarrow a')$. More precisely, the deterministic states of boxworld with type $I_{\texttt{Box}} \rightarrow (a \rightarrow a')$ correspond (up to a choice of representation we will soon make explicit) to  the stochastic matrices, and the non-deterministic states correspond to the non-negative matrices. 
But, and this is where Boxworld differs from the theory of second-order classical channels, the $n$-partite states of Boxworld $I_{\texttt{Box}} \rightarrow ((a \rightarrow a') \otimes (b \rightarrow b') \otimes \ldots)$ do not correspond to the full set of $ n$-partite morphisms of classical theory, which has type $(a \otimes b \otimes \ldots ) \rightarrow (a' \otimes b' \otimes \ldots)$. Rather, these are taken as the subset defined by the no-signalling condition (see \cite{bavaresco2024indefinitecausalorderboxworld} for the exact implementation of the condition),  which can be proven to have type $ ((a \rightarrow a') \otimes (b \rightarrow b') \otimes \ldots) \subset (a \otimes b \otimes \ldots ) \rightarrow (a' \otimes b' \otimes \ldots)$. 
Accordingly, the morphisms of Boxworld should correspond to linear mappings between no-signalling classical channels that (completely) preserve the no-signalling condition. 

We now state precisely the representation of Boxworld as outlined above that we will use in this paper. The objects of Boxworld are taken to be the symbols $\otimes_i (a_i \rightarrow a_i')$ with $a_i, a_i'$ positive integers, the morphisms of type $\otimes_i (a_i \rightarrow a_i') \rightarrow  \otimes_i (b_i \rightarrow b_i')$ are taken to be the real non-negative tensors (equivalently matrices, but referred to as tensors to disambiguate from our interpretation of states and matrices) $T_{b_1 \dots b_m a_1' \dots a_n'  }^{b_1'\dots b_m' a_1 \dots a_n}$. The deterministic morphisms are taken to be those tensors which furthermore preserve the no-signalling channels, meaning that, the contraction $T_{b_1 \dots b_m a_1' \dots a_n'  }^{b_1'\dots b_m' a_1 \dots a_n} M_{a_1 \dots a_n}^{a_1' \dots a_n'}$ with any non-signalling channel $M_{a_1 \dots a_n}^{a_1' \dots a_n'}$ returns a new non-signalling channel (i.e., an element of $\otimes_i (b_i \rightarrow b_i')$). It is easy to see from this definition that the states of this theory are indeed isomorphic to the classical non-signalling channels. The embedding of classical theory into Boxworld then works as follows, each classical system $a$ is identified with the Boxworld system $(1 \rightarrow a)$. It is easy to check that the non-deterministic morphisms $(1 \rightarrow a) \rightarrow (1 \rightarrow b)$ are automatically positive real matrices of type $a \rightarrow b$ and that the deterministic morphisms of type $(1 \rightarrow a) \rightarrow (1 \rightarrow b)$ are precisely the stochastic matrices of type $a \rightarrow b$. 


Because positive matrices are isomorphic to (bigger) real positive states and positive matrices of matrices are isomorphic to (bigger) real positive matrices through a classical analogue of the CJ-isomorphism, we can adopt a simplified representation. Each state $\phi$ of type $\bigotimes_{i = 1}^n (a_i \rightarrow a_i')$ can be identified with the real positive vector $\phi_{a_1' \dots a_n' a_1 \dots a_n}$ obtained by stacking the columns of the multistochastic matrix $\phi$ on top of one another; whereas each process $T: \bigotimes_{i = 1}^{n} (a_i \rightarrow a_i') \rightarrow  \bigotimes_{k = 1}^m (b_k \rightarrow b_k')$ is identified with a real positive matrix $T_{b_1'\dots b_m' b_1 \dots b_m }^{a_1' \dots a_n' a_1 \dots a_n}$.
The deterministic morphisms can then be taken to be those real positive matrices such that whenever $\phi_{a_1' \dots a_n' a_1 \dots a_n}$ is a representation of a classical non-signalling channel, then  $\psi_{b_1'\dots b_m' b_1 \dots b_m } = T_{b_1'\dots b_m' b_1 \dots b_m }^{a_1' \dots a_n' a_1 \dots a_n}\phi_{a_1' \dots a_n' a_1 \dots a_n}  := T \circ \phi$ is too. This is the definition of the maps of type $ \otimes_i (a_i \rightarrow a_i') \longrightarrow \otimes_k (b_k \rightarrow b_k')$ in the Caus-construction $\mathsf{Caus}(\mat{\reals^+})$ over the category $\mat{\reals^+}$ of positive valued matrices \cite{kissinger_caus}, meaning that we can automatically conclude without direct proof that Boxworld is a symmetric monoidal category. 

\end{example}

\begin{example}[$\mathbf{QBox}$ (the theory of non-signalling quantum boxes)]
Whereas Boxworld as presented above can be understood as the theory of classical channels under a non-signalling tensor product, a theory $\mathbf{QBox}$ can be defined in which systems are quantum channels under a non-signalling tensor product \cite{hefford_decohere_causal}. In the generalised theory $\mathbf{QBox}$, a mild generalisation of it's presentation in \cite{hefford_decohere_causal} to allow for classical interfaces, systems are given by arbitrary terms of the form $(\otimes_i (A_i \rightarrow A_i') , n)$ with $A_i, A_i'$ quantum systems and $n$ a positive integer, and the monoidal product is given by $(\otimes_i (A_i \rightarrow A_i') , n) \otimes_{\mathbf{QBox}} (\otimes_k (B_k \rightarrow B_k') , m): = ((\otimes_i (A_i \rightarrow A_i')) \otimes  (\otimes_k (B_k \rightarrow B_k' ) ) , nm)$. 

We again take $\cat{C}$ to be a category of hybrid quantum-classical maps, this time taking morphisms $(\otimes_i (A_i \rightarrow A_i') , n) \rightarrow (\otimes_k (B_k \rightarrow B_k' ) , m)$ to be families $\{M_x^{y}\}_{x,y} : \otimes_i (A_i^* \otimes A_i' ) \rightarrow  \otimes_k (B_k^* \otimes B_k' ) $ of completely positive maps without any trace-preservation requirement. 
A deterministic morphisms of type $(\otimes_i (A_i \rightarrow A_i') , n) \rightarrow (\otimes_k (B_k \rightarrow B_k' ) , m)$ are taken to be families $\{M_x^{y}\}_{x,y}$ of completely positive maps $M_x^y: \otimes_i (A_i^* \otimes A_i' ) \rightarrow  \otimes_k (B_k^* \otimes B_k' ) $ such that for each fixed setting $x$ the family $\{M_x^{y}\}_{y}$ is a quantum super-instrument, meaning that $\sum_y M^y_x$ is a multi-input quantum superchannel which accepts as input a multipartite non-signalling channel of type $(\otimes_i A_i \rightarrow \otimes A_i')$ and returns as output a multipartite non-signalling channel of type $(\otimes_k B_k \rightarrow \otimes B_k')$. Equivalently, one may simply think of this requirement as being a morphism of type $\otimes_i ((A_i \rightarrow A_i') ) \rightarrow  \otimes_k ((B_k \rightarrow B_k' ) ) $ in higher-order quantum theory \cite{kissinger_caus, Bisio_2019}.
\end{example}

We now present two more advanced examples from the literature based on the CPM construction \cite{abramsky_coecke}.

\begin{example}[Quantum theory over other semirings]\label{ex:cpm}
    If one is willing to relax $\reals^+$ to other semirings of probabilities, then many other examples of generalised theories arise by taking $\cat{C}=\mathsf{Split}^\dag(\mathsf{CPM}(\mat{S}))$ for some commutative semiring $S$ with involution $(-)^*:S\morph{}S$ \cite{gogioso_fantastic}.
    We start with the category $\mat{S}$ whose objects are the natural numbers and whose morphisms are the matrices with entries from $S$.
    Composition is given by matrix multiplication and tensor product by Kronecker product.
    To $\mat{S}$ we apply the CPM-construction which is an abstract method for building a category of completely positive maps from an underlying compact closed one \cite{selinger_idempotents}.
    The objects of $\cpm{\mat{S}}$ are of the form $n\times n$ for a natural number $n$ and the morphisms $n\times n\morph{}m\times m$ take the following form,
    \begin{equation*}
        \input{figs/cpm.tikz}
    \end{equation*}
    for some $f:n\morph{}x\times m$, where $f^*$ is given by applying the involution $(-)^*$ to every entry of the matrix.
    One should think of $\mat{S}$ as the pure quantum-like theory and $\cpm{\mat{S}}$ as the mixed theory given by ``doubling'' the pure one, much like the transition from quantum states to density matrices.
    
    To the mixed quantum-like theory $\cpm{\mat{S}}$ we then add classical-like systems by splitting the $\dag$-idempotents and moving to a new category $\mathsf{Split}^\dag(\mathsf{CPM}(\mat{S}))$ \cite{selinger_idempotents}.
    Avoiding the intricate details of this construction (though the interested reader is strongly recommended to consult \cite{selinger_idempotents,coecke_classicality,gogioso_fantastic}), the idea is that the decoherence maps $\mathrm{dec}:n\times n\morph{} n\times n$ which act to zero-out the off-diagonal elements of a density matrix are $\dag$-idempotents in $\cpm{\mat{S}}$.
    \begin{equation*}
        \input{figs/dec1.tikz} \ = \ \input{figs/dec2.tikz} ::
        \begin{pmatrix} 
            a_{00} & a_{01} & \dots & a_{0,n-1} \\
            a_{10} & a_{11} & \dots & a_{1,n-1} \\
            \vdots & \ddots & \ddots & \vdots \\
            a_{n-1,0} & \dots & \dots & a_{n-1,n-1}
        \end{pmatrix}
        \mapsto
        \begin{pmatrix} 
            a_{00} & 0 & \dots & 0 \\
            0 & a_{11} & \dots & 0 \\
            \vdots & \ddots & \ddots & \vdots \\
            0 & \dots & \dots & a_{n-1,n-1}
        \end{pmatrix}
    \end{equation*}
    In $\mathsf{Split}^\dag(\mathsf{CPM}(\mat{S}))$, these decoherence maps are promoted to objects $(n\times n, \mathrm{dec})$ with a morphism between such objects $f:(n\times n, \mathrm{dec})\morph{} (m\times m, \mathrm{dec})$ being a map $f:n\times n \morph{} m\times m$ which is moreover compatible with the decoherence maps so that $\mathrm{dec} \circ f\circ \mathrm{dec} = f$.
    This forces $f$ to be diagonal in the chosen basis of decoherence.
    As a result, the states of decohered systems are unnormalised probability distributions with values in the sub-semiring $S^+\subseteq S$ of positive elements of $S$ (those elements $z\in S$ of the form $z=\sum_i s_i^*s_i$ for some $s_i\in S$).
    The full subcategory of decohered systems of the form $(n\times n, \mathrm{dec})$ can be shown to be equivalent to $\mat{S^+}$ giving the classical-like sub-theory \cite{gogioso_fantastic}.

    Taking $\cat{C}=\mathsf{Split}^\dag(\mathsf{CPM}(\mat{S}))$, the deterministic subcategory $\cat{C}^{\mathbf{d}}$ is given by taking just those maps of $\cat{C}$ that are normalised under the trace as in the following diagram.
    \begin{equation*}
        \input{figs/cpm_trace1.tikz} \ = \ \input{figs/cpm_trace2.tikz}
    \end{equation*}
    This corresponds with the standard trace on matrices, so that on the classical-like sub-theory this forces the $S^+$-valued probability distributions to be normalised, $\sum_{i=0}^{n-1} a_{i,i} = 1$. 

    To summarise, the categories arrange themselves as,
    \begin{equation*}
        \begin{tikzcd}
           \substack{\stoch_{S^+} \\ \text{normalised } S^+\text{-valued probability} \\ \text{distributions and stochastic maps}} \ar[r] \ar[d] & \cat{C}^{\mathbf{d}} \ar[d] \\
           \substack{\mat{S^+} \\ S^+\text{-valued classical theory}} \ar[r] & \substack{\mathsf{Split}^\dag(\mathsf{CPM}(\mat{S})) \\ S\text{-valued quantum and classical theory}}
        \end{tikzcd}
    \end{equation*}
    This construction yields (up to equivalence): standard quantum theory upon taking $S$ to be $\mathbb{C}$ equipped with complex conjugation; real quantum theory with $S=\reals$ equipped with the trivial involution; and $p$-adic quantum theory upon taking $S=\mathbb{Q}_p(\sqrt{\tau})$ to be a quadratic extension of the $p$-adics equipped with the involution $\sqrt{\tau} \mapsto -\sqrt{\tau}$.
    See \cite{gogioso_fantastic} for even more examples.
\end{example}

\begin{example}[Hypercubic theories]
    Many more CPTs arise as a further generalisation of the previous example by considering $\cat{C}=\mathsf{Split}^\dag(\mathsf{CPM}_G(\mat{S})$ using the higher-order CPM construction for some finite group $G$ \cite{gogioso_cpm}.
    Suitable choices of $G$ and $S$ lead to the theory of density hypercubes \cite{gogioso_hyper,hefford_hypercubes} with $\reals^+$ as its probability semiring, and also to more exotic theories based on Galois extensions $k\subseteq K$ which have a certain closure of the image of the field norm $N_{K/k}$ as their probability semirings \cite{hefford_galois}.
\end{example}

\subsection{Instruments}

Supermaps, and in particular process matrices, rely upon the notion of an \textit{instrument}, a special case of processes representing a non-destructive measurement. These can be defined within any generalised theory.
Recall that in quantum theory, an instrument is a collection $\{M^y\}_y$ of completely positive trace non-increasing (CPTNI) maps, where the $y$ are elements of some classical outcome set $\mathbb{Y}$, and such that the sum of every element of the instrument yields a completely positive trace-preserving (CPTP) map $M$,
\[
    \sum_{y\in\mathbb{Y}} M^y = M\quad \text{is a CPTP map}.
\]
The classical variable $y$, usually called an \textit{outcome}, represents which measurement result was observed by the agent after they have acted with the instrument. 

In the study of causal correlations, we are often interested in the relationship between an agent's choice and another agent's observed outcome. 
In this context, we consider doubly-parameterised families of CPTNI maps $M_x^y$ with the property that, for every possible attribution of the setting $x$, the result is an instrument. That is, for all $x$, the sum over outcomes yields a CPTP map,
\[
\forall x\in\mathbb{X}, \quad \sum_{y\in\mathbb{Y}} M_{x}^{y} \quad \text{is a CPTP map}.
\]
The classical variable $x$, usually called a \textit{setting}, represents the (possibly random) conditioning of which instrument the agent decides to use. 
We call the family $M_x^y$ a \textit{parameterised instrument} and they can be defined in any generalised theory simply as a process of type $M:A \otimes \mathbb{X} \rightarrow B \otimes \mathbb{Y}$, 
\[  \input{figs/param_inst.tikz}  \] 

The components of the parameterised instrument are recovered by evaluating $M$ at the points of the classical distributions, $x\in\mathbb{X}$ and $y\in\mathbb{Y}$.
Each $x$ corresponds to a state $\left| x \right): I\morph{}\mathbb{X}$ and each $y$ to an effect $\left( y \right| : \mathbb{Y}\morph{}I$, represented respectively as processes with no input and no output.
Then we have \[  M_{x}^y \ = \ \input{figs/param_inst2.tikz}.  \]

\subsection{Instruments in Boxworld}

In most of the generalised theories outlined above, the structure of instruments is rather straightforward, as it is in finite-dimensional quantum theory. However, in Boxworld, the causal constraints of the theory make the structure of instruments more subtle. This represents a point of departure between our presentation of Boxworld and that given in \cite{bavaresco2024indefinitecausalorderboxworld}.

In general, following the recipe above, an instrument in the generalised theory Boxworld is a morphism of type $(A \rightarrow A') \otimes \mathbb{X} \rightarrow (B \rightarrow B') \otimes \mathbb{X}'$. We can represent such a morphism graphically as follows:
\[
\input{figs/ns_tensor_def_a.tikz}
\ : \ (A \rightarrow A') \otimes \mathbb{X} \rightarrow (B \rightarrow B') \otimes \mathbb{X}'.
\]

Alternatively, using the derivable isomorphisms of the BV logic of higher-order causal categories \cite{SimmonsKissinger2022}, together with a few specific features of higher-order causal categories (HOCCs), we obtain:
\begin{align}
& (A \rightarrow A') \otimes \mathbb{X} \rightarrow (B \rightarrow B') \otimes \mathbb{X}'  \\ 
\cong {} & (A \rightarrow A') \rightarrow (\mathbb{X} \otimes ((B \rightarrow B') \otimes \mathbb{X}')^*)^*  \\
\cong {} & (A \rightarrow A') \rightarrow (\mathbb{X} \otimes (\mathbb{X}' \seq (B \rightarrow B'))^*)^*  \\
\cong {} & (A \rightarrow A') \rightarrow (\mathbb{X} \seq (\mathbb{X}' \seq (B \rightarrow B'))^*)^*  \\
\cong {} & (A \rightarrow A') \rightarrow (\mathbb{X}^* \seq (\mathbb{X}' \seq (B \rightarrow B')))  \\
\cong {} & (A \rightarrow A') \rightarrow ((\mathbb{X}^* \seq \mathbb{X}') \seq (B \rightarrow B'))  \\
\cong {} & (A \rightarrow A') \rightarrow ((\mathbb{X} \rightarrow \mathbb{X}') \seq (B \rightarrow B')) .
\end{align}
This can be depicted graphically as
\[
\input{figs/ns_tensor_def_b.tikz}
\ : \ (A \rightarrow A') \rightarrow ((\mathbb{X} \rightarrow \mathbb{X}') \seq (B \rightarrow B')).
\]

In short, in Boxworld, an instrument is a classical comb subject to a no-signalling constraint from $B$ to $\mathbb{X}'$. This contrasts with the instrument set specified for Boxworld in \cite{bavaresco2024indefinitecausalorderboxworld}, which allows \textit{any} comb of type $(A \rightarrow A') \rightarrow (B \otimes \mathbb{X} \rightarrow B' \otimes \mathbb{X}')$ to function as an instrument. We adopt the terminology of \cite{bavaresco2024indefinitecausalorderboxworld} and refer to such maps as \textit{probabilistic operations}. Graphically, the \textit{probabilistic operations} of \cite{bavaresco2024indefinitecausalorderboxworld} can be depicted simply as
\[
\input{figs/ns_tensor_def_c.tikz}
\ : \ (A \rightarrow A') \rightarrow (B \otimes \mathbb{X} \rightarrow B' \otimes \mathbb{X}').
\]

From now on, we refer to instruments in the generalised theory Boxworld as \textit{Boxworld instruments}, in contrast to \textit{probabilistic operations}.

There is a physical motivation for abandoning \textit{probabilistic operations} and focusing exclusively on \textit{Boxworld instruments}. Because the system $B$ forms part of the output system in Boxworld, allowing signalling from $B$ to $\mathbb{X}'$ would permit superluminal signalling mediated by a state. Since we wish to regard Boxworld as a no-signalling theory, it is therefore natural to impose this constraint on its instruments. It is worth noting, however, that this is not an additional constraint that we have chosen to impose on Boxworld. Rather, it arises naturally by identifying precisely where the classical subtheory lies within Boxworld, using the no-signalling tensor product, and then determining the precise form of the maps in the theory of type $(A \rightarrow A') \otimes \mathbb{X} \rightarrow (B \rightarrow B') \otimes \mathbb{X}'$, since it is these maps that are most naturally interpreted as instruments in any generalised theory.

\section{Supermaps on Generalised Theories with Channel-State Duality}\label{sec:GTwDuality}

So far, we have established what we mean by a general physical theory: a theory in which causal classical information flow, and in which probabilistic computations are embedded. The theories described are theories of processes, although with particular special processes (those of type $\rho : I \rightarrow A$ for some system $A$) that represent states. In this setting, we naturally think of processes as things that transform states. A process of type $f: A \rightarrow B $ transforms states of type $\rho : I \rightarrow A$ to states of type $ \rho' : I \rightarrow A'$ via post-composition, i.e.\ by $f(\rho) : = f \circ \rho$. 

On the other hand, supermaps and process matrices are concerned not with transformations of states, but transformations of the processes themselves.
Typically, supermaps are depicted with the following pictures, for one-input and multi-input cases: \[  \input{figs/new_context_1.tikz} \quad , \quad \quad \input{figs/new_context_2.tikz} .  \] 

In the one-input supermap diagram above, if the wires are labelled from bottom to top and then from left to right, $\phi$ is a morphism of type $(C \otimes E) \rightarrow (D \otimes F)$, whereas the supermap has a type $(C \rightarrow D) \rightarrow (A \rightarrow B)$. As we mentioned before, this latter type does not refer to a valid morphism a priori. 
This, of course, demands an answer to the following question: where do such pictures formally live then?
Are they already in one of the available categories, and just remain to be identified? 
Or are they additional structures that need to be specified?
In the following, we will demonstrate how the categorical supermap approach gives a simple and stable answer to this question, and so in turn gives a methodical framework for the study of indefinite causal order in any general physical theory (recall that supermaps are intrinsically linked with higher-order maps see e.g. \cite{chiribella_switch,kissinger_caus,Bisio_2019,hoffreumon2022projective}). 
In this section we will first outline the more traditional way of thinking about supermaps and about higher-order probabilistic correlations.

Supermaps were first considered in the context of quantum theory. A quantum supermap is a map which sends channels to channels rather than states to states. More precisely, let us note $\mathcal{L}(A,B)$ the space of linear maps between spaces $A$ and $B$, and $\mathcal{L}(A,A):= \mathcal{L}(A)$ the space of linear operators on $A$, then a quantum channel $\mathcal{E}: A \rightarrow A'$ between Hilbert spaces $A$ and $A'$ is a linear map $\mathcal{E} \in \mathcal{L}(\mathcal{L}(A),\mathcal{L}(A'))$ which is completely positive in the sense that whenever $\rho \geq 0$ then $\mathcal{E} \otimes \mathcal{I} (\rho) \geq 0$ and furthermore trace preserving in the sense that $\tr[\mathcal{E}(\rho)] = \tr[\rho]$. In accordance, a quantum supermap (or quantum superchannel)  $\mathcal{S}: (A\rightarrow A')\rightarrow (B \rightarrow B')$ is a linear map $\mathcal{S} \in \mathcal{L}(\mathcal{L}(\mathcal{L}(A),\mathcal{L}(A')) , \mathcal{L}(\mathcal{L}(B),\mathcal{L}(B')))$ such that whenever $\mathcal{E} \in \mathcal{L}(\mathcal{L}(A \otimes X),\mathcal{L}(A' \otimes X' )$ is a bipartite quantum channel of type $(A \otimes X) \rightarrow (A' \otimes X')$, then $\mathcal{S} \otimes \mathcal{I} (\mathcal{E})$ is a bipartite quantum channel of type $(B \otimes X) \rightarrow (B' \otimes X')$ and this relation must hold for any $X, X'$.  
The intended graphical interpretation of a quantum supermap is the following \[  \mathcal{S} \otimes \mathcal{I} (\mathcal{E}) \ = \  \input{figs/new_context_1b.tikz}. \]
This definition can be generalised to the multi-input setting. A multi-input quantum supermap is a linear map such that for any sequence $\mathcal{E}_i$ of quantum channels,  $\mathcal{S} \otimes \mathcal{I} (\mathcal{E}_1 \otimes \dots \otimes \mathcal{E}_n)$ is a quantum channel. This time the intended graphical interpetation is the following \[ \mathcal{S} \otimes \mathcal{I} (\mathcal{E}_1 \otimes \dots \otimes \mathcal{E}_n)\ = \  \input{figs/new_context_2b.tikz}. \]

Supermaps with a single input admit a concrete representation quite similar in spirit to Stinespring's dilation theorem (see e.g. \cite{Wilde_2013}, \S 5.2.2), which recognises that every quantum channel can be concretely realised as an isometry plus discarding on the environment. In the higher-order setting, the analogous theorem states that every supermap admits a concrete realisation in terms of a quantum comb \cite{chiribella_networks}, that is, a quantum circuit into which a hole has been cut. This realisation means that the supermap $\mathcal{S}$ can be split into two channels $\mathcal{S} = \mathcal{N}_u \circ \mathcal{N}_l$ of respective types $\mathcal{N}_l: B \rightarrow (E \otimes A)$ and $\mathcal{N}_u: (E \otimes A') \rightarrow B'$, and which respectively represent the quantum circuit before and after the hole; this is graphically interpreted like \[  \input{figs/new_context_1b.tikz}    \ = \  \input{figs/comb_def_1.tikz} . \]
On the other hand, for multi-input supermaps such concrete realisations cannot be given without imposing additional causality constraints, as witnessed for instance by those multi-inputs which exhibit indefinite causal order \cite{chiribella_switch}.

Fundamental to the above-mentioned way of defining higher-order maps is the existence of the Choi-Jamiolkowski (CJ) isomorphism \cite{Jamiolkowski_1972,Choi_1975}. In particular, to form the expression $\mathcal{S} \otimes \mathcal{I} (\mathcal{E})$ requires repeated use of the property that $\mathcal{L}(\mathcal{L}(A),\mathcal{L}(B)) \cong \mathcal{L}(A)^{*} \otimes \mathcal{L}(B)$ (where $\mathcal{L}(A)^*$ is the algebraic dual of $\mathcal{L}(A)$); this property of finite-dimensional vector spaces acts as a precursor to the CJ-isomorphism. 
Before we introduce categorical supermaps in full generality, let us first outline a concrete approach to defining supermaps on generalised theories in the presence of such a channel-state duality.
This definition is put forward in  \cite{wilson_locality, wilson_polycategories}; it extends the definition of \cite{kissinger_caus} to ensure the validity of the partial application of higher-order maps. In the special case of pre-causal categories as considered in \cite{kissinger_caus}, explicit reference to partial application is not necessary so long as complete positivity is assumed from the outset. 
As mentioned already, the abstraction of channel-state duality to the language of category theory is known as compact closure.
\begin{definition}[Compact Closed Category]
    Let $\cat{C}$ be a symmetric monoidal category.
    $\cat{C}$ is compact closed if for every object $A$ there exists a dual object $A^*$, a state $\cup: I \rightarrow A \otimes A^{*}$ and an effect $\cap: A^{*} \otimes A \rightarrow I$ such that 
    \begin{equation}
    	(\cap \otimes 1) \circ (1 \otimes \cup) = 1 \:,
    \end{equation} 
	where $1: A \rightarrow A$ is the identity morphism. I.e., such that the following diagram holds
    \[\input{figs/snake_1.tikz} \ = \ \input{figs/snake_2.tikz}.  \]
    
\end{definition}

Given an embedding $\cat{C} \hookrightarrow \cat{D}$ of a symmetric monoidal category $\cat{C}$ into a compact closed one $\cat{D}$, one can always define a candidate notion of higher-order processes using the channel-state duality of $\cat{D}$.
We refer the reader to \cite{wilson_locality, wilson_polycategories} for a general discussion on this construction. The key point for now is that the discussion of this section is focused on generalised theories for which the embedding of the deterministic processes into the non-deterministic ones $\cat{C}^{\mathbf{d}} \hookrightarrow \cat{C}$ is an embedding of a symmetric monoidal category into a compact closed one. Think of these theories as those in which one cannot always achieve the cap and cup deterministically, but can still obtain them with a non-zero chance. Quantum theory is a typical instance of such a theory: knowing that the cup is realised by a Bell state (or a maximally entangled state for higher dimensions), and that the cap is realised by observing the system in this state, it is indeed clear that one can prepare a given Bell state with a probability of one, but that one cannot measure every system in this exact Bell state with a probability of one (without pre-processing the system by a CPTP map or without postselecting the outcome); every quantum state (but the three Bell states orthogonal to it) has a non-zero chance to be measured in this Bell state. So while the cap does not belong to the deterministic effects of quantum theory, it is still a valid non-deterministic effect. 

We will refer to these generalised theories whose instruments enjoy a channel-state duality as the \textit{generalised theories with channel-state duality}; they will be the focus of this section.

\begin{definition}[Generalised Theories with Channel-State Duality]
    A generalised theory has channel-state duality if its non-deterministic theory $\cat{C}$ is a compact closed category such that:
    \begin{itemize}
        \item The cup $\cup$ is an instrument element up to a scalar \[  \input{figs/cup.tikz} \ = \ \input{figs/cup_instrument.tikz}\]
        \item The cap $\cap$ is an instrument element up to a scalar  \[   \input{figs/cap.tikz} \ = \ \input{figs/cap_instrument.tikz}\]
    \end{itemize}
\end{definition}


\begin{example}[Classical theory]
Classical theory has channel state duality. The non-deterministic theory $\mat{\reals^+}$ is compact closed, with cup given by the perfectly correlated state $p_{xy}$ which is equal to $1$ when $x = y$ and $0$ otherwise.
The cap is given by the perfectly correlated effect $N^{xy}$ which is equal to $1$ when $x = y$ and $0$ otherwise. The cup can be normalised to a stochastic matrix directly by taking $\frac{1}{d} p_{xy}$ where $d$ is the dimension of the system at hand. 
The cap can be completed to the unique normalised effect (the discard) defined by $E^{xy}$ which is $1$ for every pair $x,y$. The completion is given by adding the effect $Q^{xy} = E^{xy} - N^{xy}$ which is everywhere non-negative, and in fact is $1$ whenever $x \neq y$ and $0$ otherwise. Consequently, we may define the measurement $M : \mathbb{X} \otimes \mathbb{Y} \rightarrow \mathbb{Z}$ where $\mathbb{Z}$ is the $2$-dimensional system by $M_{xy}^{z}$ equals $1$ when $x = y$ and $z = 0$ or when $x \neq y$ and $z = 1$. This is a normalised channel since discarding $\mathbb{Z}$ returns $E^{x,y}$, and furthermore it has a component (the $0^{th}$-component) which is the cap up to a scalar (in this case the scalar is $1$).  
\end{example}

\begin{example}[Quantum theory]
    To see that quantum theory has channel-state duality firstly note that the non-deterministic theory consists of the category of completely positive maps which is compact closed.
    The cap on an $n$ dimensional system is, up to a scalar, the effect $\bra{\psi} := \frac{1}{\sqrt{n}}\sum_{i=0}^{n-1} \bra{ii}$ which exists as an instrument component by completing it to a PVM: just take the orthogonal projection $P := \ket{\psi}\bra{\psi}$ and note that $1-P$ is also an orthogonal projection forming the other component of the PVM.
\end{example}

\begin{example}[Boxworld]
    To see that Boxworld has channel-state duality, first note that its non-deterministic category is $\mat{\reals^+}$ which is indeed compact closed.
    Next we must check that there exist instruments whose elements are the caps and cups of $\mat{\reals^+}$.
    This is the requirement that there be an instrument $M: I \rightarrow  (A \rightarrow A') \otimes (A' \rightarrow A) \otimes  \mathbb{X}$ with some component $M^x$ given by the swap channel $\texttt{Swap}: A \otimes A'\rightarrow A'\otimes A$ up to a scalar. This follows immediately since any permutation can be completed by convex combination with another complement channel ${\texttt{Swap}}'$ to produce the completely depolarising channel, which is indeed a non-signalling channel. Consequently we can define the instrument $M = p \texttt{Swap} \otimes |0) + (1-p) { \texttt{Swap} }' \otimes |1)$ which is indeed of the type $(A \rightarrow A') \otimes (A' \rightarrow A) \otimes  \mathbb{X}$\footnote{it is easy to check that this instrument is non-signalling from any $A$ or $A'$ input to the $\mathbb{X}$ output.}. For the same reasons there exists an instrument with component given by the cap up to a scalar, one again finds the effect which completes the cap (up to a scalar) to the discard of classical theory, and then constructs the instrument which in one outcome branch implements $p \cap \otimes |0) $ and in $|1)$ implements the effect that completes the cap to the discard with coefficient $(1-p)$. 
\end{example}

\begin{example}[Quantum boxes]
    To see that $\mathbf{QBox}$ has channel-state duality, first note that its non-deterministic category is again a theory of completely positive maps which is indeed compact closed. Interrpeted as a state in $\mathbf{QBox}$ the cup is (just as in boxworld) is the swap (quantum) channel, and so to check channel-state duality we must check that $M: I \rightarrow  (A \rightarrow A') \otimes (A' \rightarrow A) \otimes  \mathbb{X}$ with some component $M^x$ given by the quantum swap channel $\texttt{Swap}: A \otimes A'\rightarrow A'\otimes A$ up to a scalar. This follows immediately since the difference between the completely depolarising channel and $\alpha U$ for any unitary $U: H \rightarrow H$ and scalar $0 \le \alpha \le \frac{1}{d^2}$ with $d$ the dimension of the Hilbert space $H$, is a completely positive map. Indeed, recalling the form of the Choi matrix $J(\Phi)=\sum_{i,j}|i\rangle\langle j|\otimes \Phi(|i\rangle\langle j|)$, the Choi-matrix for the completely depolarising channel is given by $J(\mathcal D)=\frac1d\, I\otimes I$ and the Choi matrix for the channel corresponding to $U$ is given by $J(\mathcal U)=\sum_{i,j}|i\rangle\langle j|\otimes U|i\rangle\langle j|U^\dagger=|\Omega_U\rangle\langle \Omega_U| $, where $|\Omega_U\rangle:=\sum_i |i\rangle\otimes U|i\rangle$.  For the Choi-matrix of the difference, we see that $J(\Phi)=\frac1d\,I\otimes I-\alpha |\Omega_U\rangle\langle \Omega_U| $ has eigenvalue $\frac1d-\alpha d $
on $|\Omega_U\rangle$, and eigenvalue $1/d$ on its orthogonal complement. Therefore $J(\Phi)\ge 0 \Longleftrightarrow \frac1d-\alpha d\ge 0 \Longleftrightarrow \alpha\le \frac1{d^2} $.
 \end{example}

\begin{example}[Real quantum theory]
    For much the same reason as standard quantum theory, real quantum theory also has channel-state duality.
    Its non-deterministic category $\mathsf{Split}^\dag(\cpm{\mat{\reals}})$ is compact closed \cite{selinger_idempotents}.
    Like in complex quantum theory, the projection $P$ onto the cap can be completed to a PVM, and all we have to do is check that the required component $1 - P$ contains only real entries in its associated matrix.
    This is immediate upon inspection.
\end{example}

\begin{example}[Quantum theory over other semirings]\label{ex:cs_duality_semirings}
    The previous example can be taken further by considering the more general class of theories described in Example \ref{ex:cpm}.
    While the non-deterministic category $\mathsf{Split}^\dag(\cpm{\mat{S}})$ is always compact closed, it is not always the case that the cap can be completed to an instrument, depending crucially on the choice of $S$.
    Restricting $S$ to be a field of characteristic zero, a sufficient condition for the generalised theory to have channel-state duality is for $S$ to be any field containing all square roots, that is, any field extension of the constructable numbers.
    In this case, $1-P$ has all its entries in $S$ for every dimension of system.
    Note that this case includes $S=\mathbb{C}$ and $\reals$, and also the constructable numbers, the algebraic numbers $\overline{\mathbb{Q}}$ and the algebraic completion of the $p$-adic numbers $\overline{\mathbb{Q}_p}$ (equipping the latter with involution $\sqrt{-1}\mapsto -\sqrt{-1}$).
\end{example}

\begin{example}[Non-examples]
    An example of a theory \textit{without} channel-state duality is infinite dimensional quantum theory, since its deterministic theory is the category $\mathsf{Hilb}$ of all Hilbert spaces and bounded linear maps which is not compact closed.
    Further examples of generalised theories without channel-state duality, can be found by suitably picking $S$ in Example \ref{ex:cs_duality_semirings}.
    For instance, in quantum theory over $\mathbb{Q}(\sqrt{2})$ the cup $\ket{00}+\ket{11}+\ket{22}$ of a 3-dimensional system cannot be normalised because $\sqrt{3}\notin \mathbb{Q}(\sqrt{2})$.
    This prevents the cup from being a physical state for all dimensions, let alone being a component of an instrument.
\end{example}


\begin{definition}[Single-input CJ-supermap]
    A CJ-supermap of type \[ (A \Rightarrow A') \rightarrow (B \Rightarrow B') \] on a generalised theory with channel-state duality is a process of type $S: B \otimes A' \rightarrow B' \otimes A$ in $\cat{C}$ such that for every $\phi : A \otimes X \rightarrow A' \otimes X'$ in $\cat{C}^{\mathbf{d}}$ the following map is also in $\cat{C}^{\mathbf{d}} $. \[ \input{figs/cj_supermap_1.tikz} \]
\end{definition}

In other words, a CJ-supermap on a generalised theory is a $\mathcal{C}$-supermap on $\cat{C}^{\mathbf{d}}$ as defined in  \cite{wilson_locality, wilson_polycategories}. 
This definition is easily generalised to the multi-input case.

\begin{definition}[Multi-input CJ-supermap]
    A CJ-supermap of type \[ \bigtimes_{i}(A_i \Rightarrow A_i') \rightarrow (B \Rightarrow B') \] on a generalised theory is a process of type $S: B \otimes (A_1' \otimes \dots \otimes A_n') \rightarrow B' \otimes (A_1 \otimes \dots \otimes A_n)$ in $\cat{C}$ such that for every collection $\phi_i : A_i \otimes X_i \rightarrow A_i' \otimes X_i'$ of processes in $\cat{C}^{\mathbf{d}} $ the following process is also in $\cat{C}^{\mathbf{d}} $. \[ \input{figs/cj_supermap_2.tikz} \] 
\end{definition}

\begin{theorem}
The following equalities hold:
    \begin{itemize}
        \item Classical Supermaps = CJ supermaps on classical theory
        \item Quantum Supermaps = CJ supermaps on quantum theory
    \end{itemize}
\end{theorem}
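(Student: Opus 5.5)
The plan is to unfold both sides into statements about linear maps and observe that they impose the same conditions. In classical theory, $\cat{C}=\mat{\reals^+}$ and $\cat{C}^{\mathbf{d}}=\stoch$; in quantum theory, $\cat{C}$ is the category of completely positive maps and $\cat{C}^{\mathbf{d}}$ the category of channels, restricted to the hybrid classical--quantum systems. By finite-dimensional linear algebra, the compact structure gives a bijection between linear maps $\mathcal{L}(A)\to\mathcal{L}(A')$ and vectors in $\mathcal{L}(A)^*\otimes\mathcal{L}(A')$. Iterating this bijection, a linear map $\mathcal{S}$ from $\mathcal{L}(\mathcal{L}(A),\mathcal{L}(A'))$ to $\mathcal{L}(\mathcal{L}(B),\mathcal{L}(B'))$ corresponds uniquely to a linear map $S: B\otimes A'\to B'\otimes A$. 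Under this correspondence, $\mathcal{S}\otimes\mathcal{I}(\mathcal{E})$ is precisely the diagram in the definition of single-input CJ-supermap, with $\mathcal{E}=\phi$ plugged into the cup/cap wiring. The multi-input case is handled in the same way by tensoring the inputs. I will first establish this dictionary; it is a routine snake-equation computation.

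With the dictionary in place, two conditions must be matched.

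\emph{Normalisation.} A quantum (or classical) supermap is required to send every $\mathcal{E}:A\otimes X\to A'\otimes X'$ in $\cat{C}^{\mathbf{d}}$ to a channel. This is literally the CJ condition that the wired diagram lie in $\cat{C}^{\mathbf{d}}$ for all $\phi$ in $\cat{C}^{\mathbf{d}}$.

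\emph{Positivity.} The CJ definition additionally demands $S\in\cat{C}$, i.e. that $S$ is completely positive (respectively, entrywise nonnegative). The standard quantum definition instead demands only that $\mathcal{S}$ preserve channels under extension. So the substantive step is to show that preservation of deterministic processes with arbitrary ancillas $X,X'$ forces $S$ itself to be a positive morphism.

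For the positivity step I would use the following argument. Take $X=A$ and $X'=A'$, and let $\phi$ be a normalised map built from the cup. In the quantum case, the channel $\phi$ prepares a maximally entangled state on $A'\otimes X'$ and traces out $A\otimes X$. The Choi operator of $\mathcal{S}\otimes\mathcal{I}(\phi)$ then recovers $S$, up to a positive scalar and possibly composed with a positive map. Since this is a channel, $S$ must be positive. If the output is only $S$ composed with a trace, I would instead use the channel-state duality of the theory: the cap is an instrument component up to a scalar. Choose a deterministic $\phi$ that contains, in one branch of a classical control (via \eqref{eq:control}), the cup-cap wiring. Post-selecting the outcome yields $S$ up to a positive scalar, expressed as a composition of morphisms of $\cat{C}$, so $S$ is in $\cat{C}$.

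In the classical case, the same argument with the perfectly correlated state shows that the matrix entries of $S$ are nonnegative. The converse inclusion, that CJ-supermaps are quantum or classical supermaps, is immediate from the dictionary.

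I expect the main obstacle to be this positivity extraction. One must check that the particular $\phi$ chosen is genuinely deterministic, and that the post-selection step recovers exactly $S$, rather than $S$ composed with a discard. The latter would lose information about $S$.
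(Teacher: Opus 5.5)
Your skeleton is the right one: set up the dictionary, match the normalisation condition, extract $S\in\cat{C}$, and get the converse from the dictionary. Since the paper only cites \cite{wilson_locality} for this theorem, a direct unfolding like yours is a reasonable way to fill it in. However, the step you flag as the main obstacle is left unresolved. The missing idea is to plug in the swap.

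Take $X:=A'$, $X':=A$ and $\phi:=\sigma: A\otimes A'\to A'\otimes A$. This is deterministic: a unitary channel in the quantum case, a permutation matrix in the classical case. In the CJ wiring, the feedback loop into $S$'s $A'$ input now comes straight from the environment input $X$. Likewise, $S$'s $A$ output goes straight to the environment output $X'$. By the snake equations, $\mathcal{S}\otimes\mathcal{I}(\sigma)=S$ exactly. Hence $S$ is itself a channel (resp.\ a stochastic matrix), and a fortiori lies in $\cat{C}$. You need no post-selection and no use of the channel-state-duality hypothesis. The multi-input case follows by putting a swap in every slot. This is the same evaluation on the swap that the paper uses to prove Theorem~\ref{thm:yo_compact}.

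Your own two attempts each have a problem.
\begin{itemize}
\item \emph{Replacement channel.} The channel that discards and prepares a maximally entangled state returns only $(1_{B'}\otimes\mathrm{Tr}_A)\circ S$, with $A'$ bent onto $X'$ and $X$ discarded. Positivity of that marginal does not imply complete positivity of $S$. The classical version with the perfectly correlated state fails the same way.
\item \emph{Fallback via the cap.} This can be repaired, but not as written. Equation~\eqref{eq:control} only controls families of deterministic processes, and the cap is not deterministic. What you need is the instrument from the channel-state-duality definition, with its outcome placed in $X'$. You also need the fact that post-selecting on the environment commutes with $\mathcal{S}\otimes\mathcal{I}$, which is true by linearity. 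Even then, post-selection gives a positive multiple of $S$ with its $A$ leg bent onto $X$ and its $A'$ leg bent onto $X'$. That is a partial transpose of $S$, not $S$ itself, so you must still unbend with the cups and caps of $\cat{C}$ to conclude $S\in\cat{C}$.
\end{itemize}

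Finally, calling the normalisation condition ``literally'' the CJ condition glosses over a mismatch. The paper's quantum theory has hybrid objects $(X,n)$, so the CJ condition quantifies over hybrid environments. Quantum supermaps quantify only over Hilbert-space environments. Closing this requires the routine encoding of a classical register as a decohered quantum one.
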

\begin{proof}
A direct proof is given in \cite{wilson_locality}.
\end{proof}

\subsection{Supermaps on Boxworld}
Since the introduction of a general framework for supermaps on arbitrary physical theories, 
more direct attempts have been made to define supermaps in general physical theories. The most closely related is the proposal for Boxworld \cite{bavaresco2024indefinitecausalorderboxworld}.
In that setting, supermaps were defined as a particular class of process tensors that are also required to have the \textit{no signalling without system exchange} (NSWSE) property.
To understand how this definition departs from the categorical approach, we must first introduce the preliminary notion of a non-signalling probabilistic operation. 
 
\begin{definition}
A probabilistic operation $\{T_{AA'BB'}\}_{X}^{X'}$ is termed non-signalling (NS) if, for every non-signalling distribution $\rho_{YY'AA'}$, the contraction $\sum_{x'} T \star_{AA'} \rho$ is independent of $X$. In purely graphical language, we say that a morphism in $\mat{\reals^+}$ depicted by \[  \input{figs/ns_tensor_def_0.tikz} , \]
is non-signalling if for every non-signalling channel $\rho$ the following equality holds \[ \input{figs/ns_tensor_def_1.tikz}   \ = \    \input{figs/ns_tensor_def_2.tikz} .  \]
\end{definition}

In the approach of \cite{bavaresco2024indefinitecausalorderboxworld}, the legitimate higher-order maps on Boxworld are taken to be those that satisfy \textit{no signalling without system exchange}. That is, for any pair ${T_1}_{x_1}^{y_1},{T_2}_{x_2}^{y_2}$ of NS probabilistic operations in the above sense, the result $p(y_1, y_2 \vert x_1, x_2) = W \star ({T_1}_{x_1}^{y_1}, {T_2}_{x_2}^{y_2})$ obtained by contracting them with the tensor $W$ must be a non-signalling channel. Graphically, when $T_1$ and $T_2$ are non-signalling, the following must be a non-signalling channel: \[  \input{figs/box_explain_0.tikz}. \]

It is important to note, however, that in the generalised theory Boxworld, the instruments already satisfy another no-signalling condition: no signalling from $B$ to the outcome set $Y_1$ and no signalling from $D$ to the outcome set $Y_2$.
We will refer to probabilistic operations that are both Boxworld instruments and non-signalling as \textit{non-signalling Boxworld instruments} (NS-Boxworld instruments).
\begin{theorem}
The NS-Boxworld instruments are the morphisms of type \[ \input{figs/nswse.tikz}, \] in the monoidal category Boxworld. In other words, an NS-Boxworld instrument is a higher-order classical process of type $T: (A \rightarrow A') \rightarrow ((B \rightarrow B') \otimes (X \rightarrow X'))$.
\end{theorem}
\begin{proof}
Note that the NS condition can be reinterpreted within the circuit-theoretic language of Boxworld. A process in Boxworld is NS if it satisfies the following condition: \[ \forall \rho,  x, x': \ \input{figs/nswse_cond_1.tikz} \ = \ \input{figs/nswse_cond_2.tikz}. \]
    When checking the NSWSE condition, we are checking the influence of $X$ on the rest of the state in the following diagram: \[  \input{figs/nswse_scenario_2.tikz}, \] which indeed satisfies no-signalling by Theorem 4 of \cite{Wilson_causal}\footnote{That theorem states that, in any closed symmetric monoidal category with ``no correlations with single-state objects,'' there is no signalling via post-selection on deterministic effects of type $(\mathbb{X} \rightarrow \mathbb{X}') \rightarrow I$. The higher-order classical theory $\mathbf{Caus}[\mathbf{Mat}[\mathbb{R}^{+}]]$ is a closed symmetric monoidal category, is easily seen to satisfy the no-correlation condition, and contains Boxworld via a fully faithful, strong monoidal embedding.}. 
\end{proof}

From now on, we will refer to CJ-supermaps on Boxworld as \textit{Boxworld supermaps}. 
Boxworld supermaps are not precisely the NSWSE tensors because they impose non-signalling outputs with respect to a smaller class of instruments. To clarify the distinction, let us state the definition of NSWSE tensors precisely.

\begin{definition}
We say that a classical tensor $W$ is NSWSE if 
\begin{itemize}
\item for any pair of probabilistic operations ${T_1}_{x_1}^{y_1},{T_2}_{x_2}^{y_2}$, the contraction $p(y_1, y_2 \vert x_1, x_2) = W \star ({T_1}_{x_1}^{y_1}, {T_2}_{x_2}^{y_2})$ is a stochastic channel (a normalised conditional probability distribution),
\item for any pair of NS probabilistic operations ${T_1}_{x_1}^{y_1},{T_2}_{x_2}^{y_2}$, the contraction $p(y_1, y_2 \vert x_1, x_2) = W \star ({T_1}_{x_1}^{y_1}, {T_2}_{x_2}^{y_2})$ is a non-signalling channel.
\end{itemize}
\end{definition}
We will find that Boxworld supermaps do not satisfy NSWSE. However, they do satisfy a weaker version based directly on the restriction from probabilistic operations to Boxworld instruments. 
\begin{definition}
We say that a classical tensor $W$ is weakly-NSWSE (w-NSWSE) if 
\begin{itemize}
\item for any pair of Boxworld instruments ${T_1}_{x_1}^{y_1},{T_2}_{x_2}^{y_2}$, the contraction $p(y_1, y_2 \vert x_1, x_2) = W \star ({T_1}_{x_1}^{y_1}, {T_2}_{x_2}^{y_2})$ is a stochastic channel (a normalised conditional probability distribution),
\item for any pair of NS-Boxworld instruments ${T_1}_{x_1}^{y_1},{T_2}_{x_2}^{y_2}$, the contraction $p(y_1, y_2 \vert x_1, x_2) = W \star ({T_1}_{x_1}^{y_1}, {T_2}_{x_2}^{y_2})$ is a non-signalling channel.
\end{itemize}
\end{definition}
In fact, we will find that Boxworld supermaps are precisely characterised by complete satisfaction of the w-NSWSE condition. To make the notion of completeness precise, let us first define multipartite non-signalling instruments.
\begin{definition}
A multipartite non-signalling Boxworld instrument $T_{x_1 \dots x_n}^{y_1 \dots y_n}$ is a higher-order classical process of type \[ T : (A \rightarrow A') \rightarrow ((B \rightarrow B') \otimes_i (\mathbb{X}_i \rightarrow \mathbb{Y}_i)), \] that is, a Boxworld instrument that, when applied to a classical channel of type $A \rightarrow A'$, returns a multipartite non-signalling channel $B \otimes X_1 \otimes \dots \otimes X_n \rightarrow B' \otimes Y_1 \otimes \dots \otimes Y_n$.
\end{definition}
Here, by multipartite non-signalling we mean that $B$ may only signal to $B'$ and each $\mathbb{X}_i$ may only signal to the corresponding $\mathbb{Y}_i$.
\begin{definition}
A classical tensor $W$ is completely-w-NSWSE (cw-NSWSE) if 
\begin{itemize}
\item for any pair of Boxworld instruments ${T_1}_{x_1}^{y_1},{T_2}_{x_2}^{y_2}$, the contraction $p(y_1, y_2 \vert x_1, x_2) = W \star ({T_1}_{x_1}^{y_1}, {T_2}_{x_2}^{y_2})$ is a stochastic channel (a normalised conditional probability distribution),
\item for any pair of multipartite non-signalling instruments ${T_1}_{x_1^1, \dots, x_1^{n_1}}^{y_1^1, \dots, y_1^{n_1}},{T_2}_{x_2^1, \dots, x_2^{n_2}}^{y_2^1, \dots, y_2^{n_2}}$, the contraction \[p(y_1^1, \dots, y_1^{n_1},  y_2^1, \dots, y_2^{n_2} \vert  x_1^1, \dots, x_1^{n_1},  x_2^1, \dots, x_2^{n_2}) = W \star ({T_1}_{x_1^1, \dots, x_1^{n_1}}^{y_1^1, \dots, y_1^{n_1}},{T_2}_{x_2^1, \dots, x_2^{n_2}}^{y_2^1, \dots, y_2^{n_2}})\] is a multipartite non-signalling channel.
\end{itemize}
Here, by a multipartite non-signalling channel, we mean that each $\mathbb{X}_i^j$ may only signal to the corresponding $\mathbb{Y}_i^j$.
\end{definition}
We will henceforth refer to CJ-supermaps on Boxworld as \textit{Boxworld supermaps}. 
We note that the weak-NSWSE property follows simply by reading off the following diagram \[  \input{figs/boxworld_super_2.tikz}. \] Indeed, by their types each $T_i$ is by definition an NS-Boxworld instrument. 
Furthermore, in precisely this setup, what remains after contracting with $W$ lies in the Boxworld tensor product of two parties; that is, it is a non-signalling channel. Consequently, any Boxworld supermap satisfies weak no signalling without system exchange. In fact, Boxworld supermaps correspond directly to cw-NSWSE tensors. 
\begin{theorem}
Boxworld supermaps are precisely the cw-NSWSE Boxworld tensors.
\end{theorem}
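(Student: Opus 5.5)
We prove the two inclusions separately, working throughout inside the higher-order classical theory $\mathbf{Caus}[\mat{\reals^+}]$. By the preceding theorem, NS-Boxworld instruments (and their multipartite versions) are exactly the higher-order classical processes of type $(A \rightarrow A') \rightarrow ((B \rightarrow B') \otimes_i (\mathbb{X}_i \rightarrow \mathbb{Y}_i))$. In this setting the contraction $W \star (T_1, T_2)$ is literally the CJ-supermap diagram with $\phi_i := T_i$. We will use the two-input case of the multi-input CJ-supermap definition, with $W$ of type $(A\Rightarrow A')\times(C \Rightarrow C') \rightarrow (I_{\texttt{Box}} \Rightarrow I_{\texttt{Box}})$. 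The inputs $\phi_i : A_i \otimes X_i \rightarrow A_i' \otimes X_i'$ are deterministic Boxworld morphisms, with $X_i, X_i'$ arbitrary Boxworld systems $\otimes_k (c_k \rightarrow c_k')$.

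For supermap $\Rightarrow$ cw-NSWSE, let $W$ be a Boxworld supermap. Normalisation follows by taking $X_i = I_{\texttt{Box}}$ and $X_i' = \mathbb{Y}_i$ together with a setting. Concretely, a Boxworld instrument is a deterministic Boxworld morphism $(A \rightarrow A') \otimes \mathbb{X} \rightarrow (B\rightarrow B') \otimes \mathbb{Y}$, so it is an admissible $\phi_i$. The output of the CJ diagram is then a deterministic morphism $\mathbb{X}_1 \otimes \mathbb{X}_2 \rightarrow \mathbb{Y}_1 \otimes \mathbb{Y}_2$ in the classical subtheory, i.e.\ a stochastic matrix, because $\stoch \hookrightarrow \cat{C}^{\mathbf{d}}$ is full. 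For the signalling condition, take the multipartite NS-Boxworld instruments with $X_i' = \otimes_j (\mathbb{X}_i^j \rightarrow \mathbb{Y}_i^j)$ as part of the output system. The CJ condition then places the result in the Boxworld tensor product $\otimes_{i,j} (\mathbb{X}_i^j \rightarrow \mathbb{Y}_i^j)$, whose deterministic states are by definition the multipartite non-signalling channels.

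For cw-NSWSE $\Rightarrow$ supermap, let $W$ be cw-NSWSE and fix arbitrary deterministic $\phi_i : A_i \otimes X_i \rightarrow A_i' \otimes X_i'$ with $X_i = \otimes_k (a_k \rightarrow a_k')$ and $X_i' = \otimes_l (b_l \rightarrow b_l')$. We must show that the CJ output, a morphism $X_1 \otimes X_2 \rightarrow X_1' \otimes X_2'$, is deterministic, i.e.\ that it preserves non-signalling channels. We will reduce to states using the currying and BV isomorphisms of $\mathbf{Caus}[\mat{\reals^+}]$ from \cite{SimmonsKissinger2022}, in the same way the instrument types were rewritten above:
\[ (A \rightarrow A') \otimes X \rightarrow X' \;\cong\; (A\rightarrow A') \rightarrow (X \multimap X'), \]
where $X \multimap X'$ is again a higher-order classical type. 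Since states of $X \multimap X'$ can be tested against states of $X$, it suffices to plug in deterministic states of $X_1$ and $X_2$, or equivalently to absorb them.

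The key step is to recast each $\phi_i$, after precomposing with an arbitrary non-signalling state on $X_i$ together with an extension $E$, as a multipartite non-signalling instrument $T_i$. Its classical wires are the in/out pairs $(a_k, b_l)$ relabelled as settings $\mathbb{X}_i^j$ and outcomes $\mathbb{Y}_i^j$, and this relabelling uses the classical channel-state duality. The cw condition then says that the resulting joint channel is multipartite non-signalling. Being a deterministic state of $\otimes_j(\mathbb{X}^j \rightarrow \mathbb{Y}^j)$, it is exactly what is needed for determinism of the CJ output.

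The main obstacle is this last step. A general deterministic Boxworld morphism $X \rightarrow X'$ need not decompose into the product-of-pairs form demanded by multipartite instruments. I would first try to handle it by exploiting that deterministic morphisms in $\mathbf{Caus}$ are characterised by their action on all states of $X \otimes E$. Each such state is a non-signalling channel on classical wires, which fits the cw definition once the extension $E$ is included among the $\mathbb{X}_i^j, \mathbb{Y}_i^j$ pairs. This is precisely why completeness, rather than plain w-NSWSE, is needed.
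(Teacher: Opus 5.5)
Your direction from Boxworld supermap to cw-NSWSE is the paper's argument: instruments are admissible inputs $\phi_i$, and the type of the output forces a stochastic matrix, respectively a deterministic state of $\otimes_{i,j}(\mathbb{X}_i^j\rightarrow\mathbb{Y}_i^j)$.

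For the converse you also have the paper's strategy: test the CJ output on input states, absorb them into the $\phi_i$, and invoke cw-NSWSE. However, you leave its one substantive step open, and the obstacle you name is illusory. No decomposition of $\phi_i$ is needed, and no relabelling via channel--state duality either. Once a deterministic state $\rho_i$ of $X_i$ is absorbed, $T_i:=\phi_i\circ(1\otimes\rho_i)$ is a deterministic Boxworld morphism $(A\rightarrow A')\rightarrow (B\rightarrow B')\otimes X_i'$. Every Boxworld object has the form $X_i'=\otimes_l(b_l\rightarrow b_l')$, so $T_i$ is by definition a multipartite NS-Boxworld instrument with settings $\mathbb{X}_i^l:=b_l$ and outcomes $\mathbb{Y}_i^l:=b_l'$. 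The wires $a_k$ of $X_i$ have been contracted away, so they must not be paired with the $b_l$. If you keep them and curry instead, you get an element of $X_i\multimap X_i'$, which is generally a comb type rather than a product of pairs; that is where your worry comes from. cw-NSWSE then says that $W\star(T_1,T_2)=F\circ(\rho_1\otimes\rho_2)$ is a multipartite NS channel on $X_1'\otimes X_2'$, where $F$ is the CJ output. This is exactly what the paper records as ``precisely the cw-NSWSE condition''.

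The extension $E$ is also superfluous. By the paper's definition (the $\mathsf{Caus}$ condition), a Boxworld morphism is deterministic as soon as it maps NS channels to NS channels, with no ancilla. What does deserve a line, both in your sketch and in the paper's (``contracting it with a tuple of channels''), is the passage from product inputs $\rho_1\otimes\rho_2$ to arbitrary NS channels on $X_1\otimes X_2$. This holds for three reasons:
\begin{itemize}
\item every multipartite NS channel is an affine combination of product local-deterministic channels;
\item $F$ is entrywise non-negative, so its image of a genuine NS channel is non-negative;
\item normalisation and no-signalling of the output are affine constraints, so they carry over from the product terms to the combination.
\end{itemize}
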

\begin{proof}
First, consider a diagram of the following form: \[  \input{figs/boxworld_super_4a.tikz} \ = \  \input{figs/boxworld_super_4b.tikz}   \] Here, $W$ is a classical tensor that we have rewritten as a string diagram for $\mat{\reals^+}$, using the isomorphism $X \rightarrow Y \cong X^{*} \otimes Y$ to refer explicitly to each individual wire. To conclude that this is a Boxworld supermap, we must check that the diagram is a Boxworld process. Equivalently, we must check that contracting it with a tuple of channels yields a multipartite non-signalling channel: \[ \input{figs/boxworld_super_5.tikz} . \] Thus, for all deterministic Boxworld processes $T_i$ of the following types, we need only check that the result is a multipartite non-signalling channel: \[  \input{figs/boxworld_super_5b.tikz}.  \] This is precisely the cw-NSWSE condition. 
\end{proof}

\begin{theorem}
NSWSE Boxworld tensors $\subseteq$ Boxworld supermaps $\subseteq$ w-NSWSE Boxworld tensors
\end{theorem}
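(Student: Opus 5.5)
The plan is to use the preceding theorem, which identifies Boxworld supermaps with cw-NSWSE tensors. Both inclusions then become inclusions between conditions on classical tensors, and I compare the classes of operations each condition quantifies over.

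\emph{Second inclusion (Boxworld supermaps $\subseteq$ w-NSWSE).} This is the argument already sketched before the previous theorem. Take a Boxworld supermap $W$, so it is cw-NSWSE.
\begin{itemize}
\item The normalisation clause of w-NSWSE is literally the first clause of cw-NSWSE.
\item For the non-signalling clause, note that an NS-Boxworld instrument is a multipartite non-signalling Boxworld instrument with $n=1$. This follows from the theorem characterising NS-Boxworld instruments as higher-order processes of type $(A \rightarrow A') \rightarrow ((B \rightarrow B') \otimes (X \rightarrow X'))$.
\item Specialising the second clause of cw-NSWSE to $n_1=n_2=1$ therefore gives exactly the second clause of w-NSWSE.
\end{itemize}

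\emph{First inclusion (NSWSE $\subseteq$ Boxworld supermaps).} Let $W$ be NSWSE; I must show it is cw-NSWSE.
\begin{itemize}
\item \emph{Normalisation.} Every Boxworld instrument is in particular a probabilistic operation: a comb of type $(A\rightarrow A')\rightarrow((\mathbb{X}\rightarrow\mathbb{X}')\seq(B\rightarrow B'))$ is a special comb of type $(A\rightarrow A')\rightarrow(B\otimes\mathbb{X}\rightarrow B'\otimes\mathbb{X}')$. Hence the normalisation clause of NSWSE implies that of cw-NSWSE.
\item \emph{Non-signalling.} Take multipartite NS-Boxworld instruments $T_1,T_2$ with settings $x_j^1,\dots,x_j^{n_j}$. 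Group all settings of party $j$ into one variable $x_j=(x_j^1,\dots,x_j^{n_j})$, and likewise group the outcomes. Each $T_j$ is then an NS probabilistic operation: contracting with any non-signalling $\rho$ and summing outcomes gives something independent of $x_j$. This is because the output lies in $(B\rightarrow B')\otimes_i(\mathbb{X}_j^i\rightarrow\mathbb{Y}_j^i)$, which cannot signal from the $\mathbb{X}_j^i$ to $B'$, and discarding all the $\mathbb{Y}_j^i$ leaves a channel that does not depend on any $x_j^i$. NSWSE then gives that $W\star(T_1,T_2)$ is bipartite non-signalling between the grouped parties.
\end{itemize}

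\emph{Main obstacle.} Bipartite non-signalling between the grouped parties is not yet the fine multipartite condition, where each $x_j^i$ may signal only to $y_j^i$. I need to rule out signalling inside a party, e.g.\ from $x_1^1$ to $y_1^2$. My first attempt is to absorb the other party and the remaining subsystems into a new pair of operations. To test signalling from $x_1^1$ to $y_1^2$, regroup so that party $1$ carries only the setting $x_1^1$ and its outcome $y_1^1$, while the factor $(\mathbb{X}_1^2\rightarrow\mathbb{Y}_1^2)$ is treated as part of the output channel system $B$. This is legitimate because $(B\rightarrow B')\otimes(\mathbb{X}_1^2\rightarrow \mathbb{Y}_1^2)$ is itself a Boxworld system. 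After fixing $x_1^2$ by a classical control and post-processing, the new operation is an NS probabilistic operation with the single setting $x_1^1$.

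The delicate point is making the signalling test compatible with this regrouping. I will either exploit that NSWSE quantifies over all NS probabilistic operations, which allows outcomes to be rerouted, or fall back on the intra-party no-signalling already built into $T_j$, namely that $T_j$ lands in the Boxworld tensor product. Intra-party constraints should then be inherited from $T_j$, since $W$ acts only on the $A$, $A'$, $B$, $B'$ wires and cannot correlate $x_1^1$ with $y_1^2$ beyond what $T_1$ allows. If a direct argument is needed, I will mimic the Theorem~4 of \cite{Wilson_causal} argument used for NS-Boxworld instruments in $\mathbf{Caus}[\mat{\reals^+}]$.
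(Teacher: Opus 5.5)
Your second inclusion is fine and is exactly the paper's argument. So are the easy parts of the first inclusion: normalisation holds because every Boxworld instrument is a probabilistic operation, and coarse no-signalling follows by grouping each party's settings and outcomes so that each $T_j$ becomes an NS probabilistic operation. But the step you flag as the main obstacle is the whole content of the first inclusion, and you leave it open. cw-NSWSE requires each $x_j^i$ to signal only to $y_j^i$, including jointly with the other party's outcomes. NSWSE is only a bipartite condition between the two slots of $W$, so it says nothing about $x_1^1\to y_1^2$.

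Your regrouping does not reach this step. $W$ has a fixed type and contracts only $A,A',B,B'$ (and $C,C',D,D'$), so you cannot fold $(\mathbb{X}_1^2\rightarrow\mathbb{Y}_1^2)$ into $B$ and still contract with the same $W$. After fixing $x_1^2$, $y_1^2$ still sits on the same side as $x_1^1$, where NSWSE is silent.

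Your fallback, that intra-party constraints are inherited from $T_1$ because $W$ touches only $A,A',B,B'$, is false as a general principle. $T_1$'s multipartite no-signalling is guaranteed only when a Boxworld state is plugged into its slot. $W$ also supplies $B$, absorbs $B'$, and may forward these wires to the other party. Here is a concrete case:
\begin{itemize}
\item Let $T_1$ write $x_1^1$ into $A$ and emit one uniform bit $r$ both as $B'$ and as $y_1^2$; this is a legitimate multipartite NS instrument.
\item Let $W$ forward $A$ and $B'$ into party~2's slot return $C'$.
\item Let $T_2$ output $y_2=[A=B']$.
\end{itemize}
Then $y_2=[x_1^1=r]$ is uniform, so the grouped bipartite condition holds for this pair, yet $(y_1^2,y_2)$ jointly reveals $x_1^1$. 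NSWSE excludes this $W$ only through a \emph{different} operation, e.g.\ $A=x$, $B'=0$. So no argument that applies NSWSE to the given $T_1,T_2$, or to regroupings of their classical interfaces, can succeed. NSWSE has to be used structurally.

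The paper supplies that structural input from \cite{bavaresco2024indefinitecausalorderboxworld}. First, it recharacterises NSWSE tensors as effects on the tensor product of comb types satisfying a left- and a right-handed constraint; the right-handed one is expressed through some auxiliary $W'$. Second, it uses that marginals of NS probabilistic operations decompose either trivially or as $\sum_\lambda\pi_\lambda(\cdots)$ with deterministic $D_\lambda$. Playing the determinism of the $D_\lambda$ against the right-handed constraint, the paper shows that the diagram with a multipartite NS instrument plugged into one slot of $W$ is non-signalling. It then applies the same reasoning to each party to conclude that the induced state is multiparty non-signalling, i.e.\ a Boxworld state.

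Your proposed appeal to Theorem~4 of \cite{Wilson_causal} cannot replace this. That result controls signalling only under post-selection on \emph{deterministic} effects. Once you condition on party~2's setting and outcome, party~1 faces a subnormalised effect, and pinning down its form is exactly what NSWSE must be used for.
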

\begin{proof}
Given in the Appendix.
\end{proof}

Given the variety of available conditions, one might naturally wonder about the arguments in favour of each. Here, we argue briefly that Boxworld supermaps provide a natural and suitable definition of higher-order processes on Boxworld, combining several useful properties. We have already argued that the appropriate notion of a Boxworld instrument should incorporate no signalling from $B$ to $X'$. 

At a more heuristic level, a surprising feature of NSWSE Boxworld tensors is that they do not support a simple sequential-composition supermap.
More precisely, the following tensor $W$ is not a valid NSWSE Boxworld tensor, as witnessed by the signalling produced by the pair $T_1,T_2$:
 \[  \input{figs/box_explain_1.tikz}. \]

The same assignment is, however, a valid Boxworld supermap and is therefore also w-NSWSE. This follows directly from the fact that, in any generalised theory with channel-state duality, sequential composition is a CJ-supermap. Moreover, the non-signalling probabilitic operations that witness the failure of NSWSE for the sequential-composition map are not both boxworld instruments (noting the signalling permitted from $D$ to $Y_2$) and therefore cannot witness a failure of w-NSWSE. 

More generally, when NS-Boxworld instruments are plugged into $W$, their additional non-signalling constraints prevent signalling through the sequential-composition map: \[  \input{figs/box_explain_2.tikz}. \]

Thus, while the NSWSE condition is too strong to admit a sequential-composition supermap, the weaker w-NSWSE and cw-NSWSE conditions satisfied by Boxworld supermaps do permit the sequential composition supermaps (referred to in \cite{bavaresco2024indefinitecausalorderboxworld} as the perfect gbit channel). We further conjecture that Boxworld supermaps are precisely the Boxworld tensors satisfying the weak NSWSE condition, in other words, that the w-NSWSE and cw-NSWSE conditions are equivalent. 
It is however, unclear at this stage, whether the spaces of possible non-causal correlations achievable through NSWSE Boxworld tensors and NS probabilistic operations verses cw-NSWSE Boxworld tensors and NS Boxworld instruments coincide. 

\section{Categorical Supermaps}\label{sec:CSM}

In this section, we turn our attention to defining higher-order processes over arbitrary physical theories.
There is no reason to expect all physical theories to come equipped with a channel-state isomorphism and in this case the methods of the last section fail.
It is no longer clear how to give a definition of supermaps and this is the question we now address.

The definition we will use was introduced in \cite{wilson_locality} as a \textit{locally-applicable transformation}, a relaxation of the axioms for traces in monoidal categories \cite{Joyal_Street_Verity_1996}. We will from now on refer to such transformations simply as \textit{categorical supermaps}.
A categorical supermap consists of a family of functions \[  S_{X,X'} :   \cat{C}^{\mathbf{d}} (A \otimes X, A' \otimes X') \rightarrow \cat{C}^{\mathbf{d}} (B \otimes X , B' \otimes X') ,  \] which commute with combs from $\cat{C}^{\mathbf{d}}$ on the environment wires $X,X'$.
Concretely this means that \[  S_{X,X'} ((1_{A'} \otimes g) \circ (\phi\otimes 1_Z) \circ (1_A \otimes f)) = (1_{B'} \otimes g) \circ (S_{YY'}(\phi) \otimes 1_Z) \circ (1_B \otimes f),  \] for any $f:X\morph{}Y\otimes Z$, $g:Y'\otimes Z\morph{}X'$ and $\phi:A\otimes Y\morph{}A'\otimes Y'$.
Note that $S$ refers only to the deterministic physical theory so that any categorical supermap must send deterministic processes to deterministic processes.

We will now begin to frame this definition diagrammatically, to have a formal representation that more closely resembles the intuitive picture for supermaps, and to neatly phrase the generalisation to multiple inputs.
Let us denote the action of the function $S_{X,X'}$ by  \[   \input{figs/baby_lot_1.tikz}    \]
Categorical supermaps are families of functions of the above type such that for every pair of morphisms $f,g$ \[  \input{figs/baby_lot_2.tikz}   \ = \      \input{figs/baby_lot_3.tikz} .   \]
The principle is simple: for $S_{X,X'}$ to model the concept of a supermap that could be applied locally on the left-hand systems $A,A'$, it must commute with any deterministic events that occur on the environment $X,X'$.
The notion above can be extended to a more evocative notation which more closely mirrors the intuitive concept being modelled. This notation will be referred to as box notation for categorical supermaps and has been previously used to model commutativity in \cite{wilson_locality, wilson2023originlinearityunitarityquantum, pinzani2019categoricalsemanticstimetravel}.
In box notation we write the action of the function $S_{X,X'}$ on a process $\phi$ by wrapping it inside a new higher-order box, with dotted wires which represent the outside or environment  \[   \input{figs/lot_1.tikz} . \] 
When combs are applied to the environment they commute with the action of the $S_{X,X'}$ as follows \[   \input{figs/lot_1_outside.tikz} \ = \   \input{figs/lot_1_inside.tikz} . \]

We can also easily generalise this approach to give the multi-input categorical supermaps \cite{wilson_locality}.
A multi-input categorical supermap will naturally be a function on tuples of processes \[   S_{X_1 X_1' X_2 X_2' }: \cat{C}^{\mathbf{d}}(A_1 \otimes X_1 , A_1'   \otimes X_1'  ) \times \cat{C}^{\mathbf{d}}(A_2 \otimes X_2 , A_2'   \otimes X_2'  ) \rightarrow \cat{C}^{\mathbf{d}}(B \otimes X_1  \otimes X_2  , B'   \otimes X_1'  \otimes X_2'  ),  \] diagrammatically denoted as \[ \input{figs/lot_multi_basic.tikz},\]
and satisfying the following law ensuring commutation with combs in each input: \[   \input{figs/lot_multi_outside.tikz} \ = \ \input{figs/lot_multi_inside.tikz}.    \]

We now summarise this discussion by giving the definition of categorical supermaps.

\begin{definition}[Categorical Supermaps]
A one-input categorical supermap of type \[S: (A \Rightarrow A') \rightarrow (B \Rightarrow B')\] is a family of functions \[   S_{X,X'} :   \cat{C}^{\mathbf{d}}(A \otimes X, A' \otimes X') \rightarrow \cat{C}^{\mathbf{d}}(B \otimes X , B' \otimes X')  \] satisfying  \[   \input{figs/lot_1_outside.tikz} \ = \   \input{figs/lot_1_inside.tikz} . \]
Furthermore,  a multi-input categorical supermap of type \[    S:   \bigtimes_i (A_i \Rightarrow A_i') \rightarrow (B \Rightarrow B')  \] is a family of functions \[S_{X_1 X_1' X_2 X_2' }:  \bigtimes_i \cat{C}^{\mathbf{d}}(A_i \otimes X_i , A_i'   \otimes X_i'  ) \rightarrow \cat{C}^{\mathbf{d}}(B \otimes_i (X_i)  , B'   \otimes_i (X_i' ) )\] satisfying  \[   \input{figs/lot_multi_outside.tikz} \ = \ \input{figs/lot_multi_inside.tikz}.    \]
\end{definition}

In general, categorical supermaps can be reframed entirely in terms of the mathematical theory of strong profunctors \cite{hefford_supermaps, hefford2025bvcategoryspacetimeinterventions}. 
Doing so makes the definition significantly more compact and opens the door to the study of supermaps on general spaces, accounting for supermaps with indefinite order but also supermaps with definite order and even higher higher-order supermaps. 

We now complete this section by discussing the basic lemma on which all key concrete characterisation results in categorical supermaps are based.

\begin{lemma}\label{lem:important}
For any pair of parameterised instruments $M,M'$ with choices $x,x'$ and outcomes $y,y'$ such that: \[  \input{figs/param_inst3.tikz} \ = \   \input{figs/param_inst_4.tikz},     \]  we have the following equality \[  \input{figs/param_inst5.tikz} \ = \   \input{figs/param_inst6.tikz}.     \] 
\end{lemma}
\begin{proof}
Given in the appendix, a mild generalisation of the proof of Lemma 5 of \cite{wilson_locality}. 
\end{proof}

\subsection{The Yoneda Lemma for Compact Closed Categories}
We now prepare to state the second fundamental tool for categorical supermaps, which in fact hinges directly on the Yoneda lemma, one of the most important results in category theory more broadly \cite{maclane1998categories}.
A description of the Yoneda lemma and the reframing of the following results into categorical algebra is given in the Appendix.
Here, let us give a purely diagrammatic interpretation of the Yoneda lemma insofar as it applies to compact closed categories, since it is the key idea on top of which the Yoneda lemma for categorical supermaps will be built.
\begin{theorem}[Yoneda Lemma for Compact Closed Categories]\label{thm:yo_compact}
On any compact closed category $\cat{C}$, there is a natural isomorphism between categorical supermaps of type \[  S: (A \Rightarrow A') \rightarrow (B \Rightarrow B') \] and processes in $\cat{C}$ of type $s: B \otimes A' \rightarrow B'\otimes A$.
Furthermore, there is a natural isomorphism between categorical supermaps of type \[    S:   \bigtimes_i (A_i \Rightarrow A_i') \rightarrow (B \Rightarrow B')  \] and processes in $\cat{C}$ of type $s: B \otimes_{i}(A_i') \rightarrow B'\otimes_i(A_i)$.
\end{theorem}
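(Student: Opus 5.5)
Here $\cat{C}$ is itself compact closed, and the categorical supermap is taken over $\cat{C}$, so the families $S_{X,X'}$ act on all homsets $\cat{C}(A\otimes X, A'\otimes X')$ and commute with arbitrary combs of $\cat{C}$. The plan is to build explicit maps in both directions and check they are mutually inverse. The single-input case is done first; the multi-input case repeats the argument one input at a time.

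\emph{From processes to supermaps.} Given $s: B\otimes A' \to B'\otimes A$, set
\[ S_{X,X'}(\phi) := (1_{B'}\otimes \cap_A \otimes 1_{X'})\circ(\ldots)\circ(1_B\otimes \cup_{A'}\otimes 1_X), \]
the CJ-style contraction appearing in the definition of a CJ-supermap. Concretely, feed the $A$-output of $s$ into the $A$-input of $\phi$, feed the $A'$-output of $\phi$ into the $A'$-input of $s$, and route the bent wires with cups and caps. Comb-commutation then holds by a diagram deformation. A comb $(f,g)$ acts only on the $X,X'$ wires, which are disjoint from every wire that $s$, the cups and the caps touch. Hence $f$ and $g$ slide out of the contraction using the interchange law and the naturality of the symmetry.

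\emph{From supermaps to processes.} Given $S$, take the generic element. Let $X := A^*$ and $X' := A'^*$, and apply $S_{A^*,A'^*}$ to the process
\[ \eta := (\cup_{A'}\text{-based bent identity}) : A\otimes A^* \to A'\otimes A'^* . \]
This is the composite $\cap_A$ followed by $\cup_{A'}$, which realises $1_A\otimes 1_{A'}$ as a process with all four wires bent. The result is a map $B\otimes A^*\to B'\otimes A'^*$. Bending the two environment wires back with cups and caps gives $s: B\otimes A' \to B'\otimes A$. Both assignments are clearly natural in $A,A',B,B'$, because each is built only from composition, tensor and the compact structure.

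\emph{Round trips.} Applying process $\to$ supermap $\to$ process returns $s$ after two snake identities, so this direction is easy. The other direction is the main obstacle: showing that every $S$ is determined by its value on the generic element. The plan is to write an arbitrary $\phi: A\otimes X\to A'\otimes X'$ as a comb applied to $\eta$ on the environment, taking
\[ f := (1_{A^*}\text{-cup})\otimes 1_X : X \to A^* \otimes (A\otimes X), \qquad g := \text{a cap composite}. \]
More precisely, let $Z := A^*\otimes A\otimes X$ and choose $f: X\to A^*\otimes Z$ and $g: A'^*\otimes Z \to X'$ so that the comb $(f,g)$ wrapped around $\eta\otimes 1_Z$ equals $\phi$ by snake equations. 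Here $g$ is the bent copy of $\phi$ itself, with its $A$-input plugged by the cap and its $A'$-output capped against $A'^*$. Comb-commutation then yields $S_{X,X'}(\phi) = (1\otimes g)\circ(S_{A^*,A'^*}(\eta)\otimes 1_Z)\circ(1\otimes f)$. Unwinding this expression is exactly the CJ contraction of $s$ with $\phi$, which proves that the two constructions are inverse.

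Careful wire bookkeeping is the only delicate point, and I expect a diagrammatic argument with the symmetry to handle it. For the multi-input case, use the comb law in each input separately. Apply the same factorisation to each $\phi_i$ in turn, reducing $S(\phi_1,\ldots,\phi_n)$ to $S(\eta_1,\ldots,\eta_n)$, and define $s: B\otimes_i A_i'\to B'\otimes_i A_i$ by bending all the environment wires.
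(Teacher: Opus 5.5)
Your proof is correct and is essentially the paper's argument: evaluate $S$ on a generic element, write an arbitrary $\phi$ as an environment comb applied to that element, and use comb-commutation to recover the CJ contraction. The paper's generic element is the swap $A\otimes A'\to A'\otimes A$ at environment $(A',A)$, which gives $s: B\otimes A'\to B'\otimes A$ directly with no bending and, being deterministic, is what lets the same proof extend to generalised theories and concrete profunctors, whereas your cap-then-cup $\eta$ is admissible only because here $S$ acts on all of $\cat{C}$.

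There is one bookkeeping slip to fix. The factorisation needs $Z=A\otimes X$ and $f: X\to A^*\otimes A\otimes X$ (a cup tensored with $1_X$), with $g: A'^*\otimes A\otimes X\to X'$ applying $\phi$ directly to the $Z$ wires and using a cap only to bend the $A'$ output against $A'^*$. Your later version, with $Z=A^*\otimes A\otimes X$ and the $A$-input of $\phi$ plugged by a cap, does not match this.
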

\begin{proof}
This is the Yoneda Lemma; we will however give the purely graphical proof of \cite{wilson2023compositional}. In both cases, the proof works by constructing a potential representative map $S: A \otimes B' \rightarrow A' \otimes B$ by evaluating $S_{B'B}$ on the swap\footnote{Note that on a non-symmetric monoidal category with left and right duals, a similar argument can be performed but by evaluating on a term involving cups and identities, see the appendix for details. As a result, the Yoneda lemma for categorical supermaps also holds for non-symmetric monoidal categories.} \[  \input{figs/lot_swap.tikz}  . \] Now we check that the constructed $S$ acts as a CJ-supermap in the same way as the original categorical supermap $S_{XX'}$. Indeed, see that \begin{align}
    \input{figs/lot_2.tikz} \  = \ \input{figs/lot_3.tikz} \ = \ \input{figs/lot_1.tikz}.
\end{align}
Consequently, it is easy to see that from any categorical supermap one can construct a CJ-supermap, and from any CJ-supermap one can construct a categorical supermap, and that furthermore these constructions are both-ways invertible.
As a result, categorical supermaps are CJ-supermaps on categories with channel-state dualities.

The multi-input case follows similarly, for instance for two inputs, by the same reasoning, we have,
\begin{align}
    \input{figs/multi_yoneda_1.tikz} \  = \ \input{figs/multi_yoneda_2.tikz}  \\ 
    = \ \input{figs/multi_yoneda_3.tikz}\ =  \ \input{figs/lot_multi_basic.tikz}.
\end{align}
\end{proof}
We can conclude that on a generalised theory whose deterministic category $\cat{C}^{\mathbf{d}}$ is compact closed, any categorical supermap will find a representation as a CJ-supermap.
In practice, the deterministic category of a generalised theory never has channel-state duality, indeed if it does it becomes unsuitable for the study of causality \cite{coecke_causalcats}. 
This means we need more than the previous result in order to establish a representation theorem for categorical supermaps.

\section{A Yoneda Lemma for Categorical Supermaps}
In section \ref{sec:GTwDuality}, we presented a first way to define supermaps on generalised theories under the name of \textit{CJ-supermaps}. This is the way that has been effectively followed in most of the literature on higher-order processes thus far: assuming that the theory has a (non-deterministic) channel-state duality (what we call a \textit{generalised theory with channel-state duality}), one uses the duality to see the supermaps as a special kind of process in the theory. That way, the definition can be expressed purely as constraints on the processes of the theory, which greatly limits how much the formalism must be extended to accommodate the supermaps. For example, in quantum theory, a quantum superchannel can be defined in terms of constraints on the quantum channel that represents it via channel-state duality; this avoids the need to define a concept such as a completely-CPTP-preserving map between two CPTP maps and the pitfalls that may come with it. 

In section \ref{sec:CSM}, we presented a second way to define supermaps on generalised theories under the name of \textit{categorical supermaps}. This definition avoids the channel-state duality assumption, thereby widening its range of applicability. However, it comes at the price of the supermaps no longer being seen as processes. Rather, they are families of abstract functions from deterministic processes to deterministic processes. 

But for generalised theories with channel-state duality, this begs the question of whether the two definitions actually coincide. Can the categorical supermaps be represented as concrete supermaps, i.e. processes of the theory. In other words, are they CJ-supermaps?
Evidence in favour of a positive answer comes from the \textit{Yoneda lemma for compact closed categories}, which states that on compact closed categories, categorical supermaps coincide with CJ-supermaps. In this section, we will show that this result holds for any generalised theory with channel-state duality: any generalised theory has a notion of categorical supermaps, but when their non-deterministic processes $\cat{C}$ is a compact closed category (with cup and cap elements of instruments), these categorical supermaps have a concrete representation that exactly coincides with the notion of CJ-supermaps.


\YonedaLemma*

\begin{proof}
This follows by combining the techniques of lemma \ref{lem:important} and the Yoneda lemma.
Indeed, note that by assumption there exist instruments with components that are, up to scalars, the caps and cups.
We can use them to mimic the setup for the Yoneda lemma.
\[  \input{figs/fund_1.tikz} \ = \ \input{figs/fund_2.tikz}  \]
Using commutativity with deterministic combs, we can move the cap and cup instruments inside $S$ leaving only the classical post-selections outside, whereupon we can use lemma \ref{lem:important}.
\[  = \ \input{figs/fund_3.tikz} \ = \ \input{figs/fund_4.tikz}  \]

The multi-input case follows similarly using the fact that we have commutativity separately in each input to $S$.
For instance, for the case of two inputs note that by fixing an input channel $\phi_1:A_1\otimes X_1\morph{}A_2\otimes X_2$ there is a family of functions constructed from $S$ of type,
\[  S_{X_1 X_1' X_2 X_2'}(\phi_1, -) : \cat{C}^{\mathbf{d}}(A_2 \otimes X_2 , A_2' \otimes X_2') \rightarrow \cat{C}^{\mathbf{d}}(B \otimes X_1 \otimes X_2 , B'\otimes X_1' \otimes X_2') . \]
This is a one-input categorical supermap, so we can use the earlier argument to yield,
\begin{align}
    \input{figs/multi_yoneda_1.tikz} \  = \ \input{figs/multi_fund_1.tikz} 
\end{align}
and then by similar reasoning with $\phi_2$ fixed we have that
\begin{align}
    \input{figs/multi_fund_1.tikz}\ =  \ \input{figs/lot_multi_basic.tikz}.
\end{align}
This completes the proof. 
\end{proof}


As a consequence of our characterisation theorem we have the following results.

\Recovery*
\begin{proof}
Since Classical theory, Quantum theory, and Boxworld all have physical channel-state dualities it follows that the categorical supermaps on them are equal to the CJ-supermaps. 

\end{proof}

\subsection{Categorical supermaps on concrete profunctors}
As we mentioned, the adaptation of the Yoneda lemma we presented in this article was inspired by previous works relying on the theory of \textit{strong profunctors} \cite{hefford_supermaps, hefford2025bvcategoryspacetimeinterventions}. In this final section, we use strong profunctors to briefly present a generalisation of the domains for categorical supermaps for which the Yoneda lemma still holds. 

The definition of categorical supermaps used thus far relied on the idea that each `slot' of the supermap was meant to receive a process, and that the processes in each input were parallel to each other. But while the pair of systems used by each input process to interact with the supermap must effectively be in parallel, there is no reason why the pair of systems extending each of the processes should be as well. In other words, there is no reason that the environments accessible by each of the input processes should be assumed causally disconnected from one another. 

Take a bipartite supermap $\mathcal{S}$ for instance, the pair of input processes has the form $\phi_1 \otimes \phi_2$ where their types were assumed to be $\phi_1 : (A \otimes X) \rightarrow (A' \otimes X')$ and $\phi_2 : (B \otimes Y) \rightarrow (B' \otimes Y')$. In this case, the pair of systems used by $\phi_1$ lives in $A,A'$ and the one used by $\phi_2$ lives in $B,B'$. Now it makes sense that, if there were no extension (i.e. the `environment' wires $X,X',Y,Y'$ are all $I$), this pair of inputs has type $(A \rightarrow A') \otimes (B \rightarrow B')$. For example, in quantum theory, it makes sense that the inputs form a no-signalling channel together, or at the very least share past and future correlations. 
But one does not need such an assumption on the environments; for instance, one could have instead assumed that the environment on Alice's side is in the causal past of Bob's by requiring that $X' = Y$ and contracting Alice's output $X'$ with Bob's input $Y$, thereby restricting the space of legitimate supermaps do those which place Alice after Bob. To take into account these alternatives, that is, to account for supermaps on a broad class of types in the sense of \cite{kissinger_caus, Bisio_2019, hefford2025bvcategoryspacetimeinterventions, hoffreumon2022projective, jencova2024structurehigherorderquantum}, the notion of input processes can be generalised using \textit{profunctors}, and the definition of categorical supermaps can be generalised in accordance as we will now show. 
These profunctors where initially referred to as extension sets \cite{wilson_locality}, here we will refer to them in terms of their place in the broader theory of categorical supermaps, and instead refer to them as \textit{concrete} profunctors.



\begin{definition}[Concrete Profunctors]
A concrete profunctor $\cat{P}$ is a pair of objects $(A,A')$ along with for each pair of objects $X,X'$ a chosen subset \[  \cat{P}(X,X') \subseteq    \cat{C}^{\mathbf{d}}(A \otimes X, A' \otimes X'),  \] such that:
\begin{enumerate}
\item These sets are closed under action by combs on the environment, meaning that \[        \input{figs/conc_prof_1.tikz} \in \cat{P}(X,X') \ \implies  \ \input{figs/conc_prof_2.tikz} \in  \cat{P}(Y,Y'),   \]
for every $f$ and $g$.
\item The swap is included in $ \cat{P}(A',A)$, in other words \[    \input{figs/swap.tikz} \in \cat{P}(A',A) .  \]
\end{enumerate}
\end{definition}

Examples of concrete profunctors include:
\begin{itemize}
\item Spaces of product channels with shared past and future  \[  \cat{P}(X,X')  \ = \ \left\{  \input{figs/product_space_1.tikz} \right\}   .  \]
\item Spaces of channels with memory  \[    \cat{P}(X,X')   \ = \ \left\{ \input{figs/seq_space_1.tikz} \right\}  .   \]
\end{itemize}

Categorical supermaps can be defined on all strong profunctors, whereas it is only obvious how to generalise CJ-supermaps to concrete profunctors. As a result, it is on concrete profunctors where we might hope to be able to extend the Yoneda Lemma for categorical supermaps.
\begin{definition}[Categorical Supermaps on Concrete Profunctors]
A categorical supermap between concrete profunctors of type \[S: \cat{P} \rightarrow  \cat{Q}   \] is a family of functions \[   S_{X,X'} : \cat{P}(X,X') \rightarrow  \cat{Q}(X,X')  \] satisfying  \[   \input{figs/lot_1_outside.tikz} \ = \   \input{figs/lot_1_inside.tikz} . \]
Multi-input categorical supermaps on concrete profunctors are defined similarly.
\end{definition}
Let us also define the CJ-supermaps on concrete profunctors.
\begin{definition}
    A CJ-supermap of type \[ \cat{P} \rightarrow  \cat{Q},  \] with $\cat{P}$ associated to $A,A'$ and $\cat{Q}$ associated to $B,B'$ respectively, on a generalised theory with channel-state duality is a process of type $S: B \otimes A' \rightarrow B' \otimes A$ in $\cat{C}$ such that for every $\phi : A \otimes X \rightarrow A' \otimes X'$ in $\cat{P}(X,X')$ the following map is in $\cat{Q}(X,X')$. \[ \input{figs/cj_supermap_1.tikz}.  \]
\end{definition}

The Yoneda lemma for categorical supermaps extends to all concrete profunctors, meaning that categorical supermaps are equivalent to CJ-supermaps on a broad class of spaces. 
\begin{theorem}[Yoneda Lemma for Categorical Supermaps on Concrete Profunctors]
For any pair of concrete profunctors on a generalised theory, the categorical supermaps of type $\cat{P} \rightarrow \cat{Q} $ are equal to the CJ-supermaps of the same type.
\end{theorem}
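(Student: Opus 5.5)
We mirror the proof of Theorem~\ref{thm:yoneda}, with $\cat{P}$ and $\cat{Q}$ in place of the full hom-sets. The plan is to build maps in both directions between categorical supermaps $\cat{P}\rightarrow\cat{Q}$ and CJ-supermaps $\cat{P}\rightarrow\cat{Q}$, and then check that they are mutually inverse.

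\textbf{From categorical to CJ.} Let $S$ be a categorical supermap. By condition~2 of a concrete profunctor, the swap lies in $\cat{P}(A',A)$. So we may define the candidate representative
\[ s := S_{A',A}(\mathrm{swap}) \in \cat{Q}(A',A) \subseteq \cat{C}^{\mathbf{d}}(B\otimes A', B'\otimes A), \]
which is in particular a process of $\cat{C}$ of type $B\otimes A'\rightarrow B'\otimes A$.

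We then show that for every $\phi\in\cat{P}(X,X')$ we have $S_{X,X'}(\phi) = \text{(CJ-contraction of } s \text{ with } \phi)$. This immediately gives that the contraction lies in $\cat{Q}(X,X')$, so $s$ is a CJ-supermap.

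The equality is proved exactly as in Theorem~\ref{thm:yoneda}.
\begin{itemize}
\item Channel-state duality gives instruments with components equal, up to scalars, to the cup and the cap.
\item Write $\phi$ as the swap contracted against $\phi$ using a cup and a cap on the environment wires. This realises $\phi$ as the swap with a comb applied, with classical post-selections left outside.
\item The comb consists of deterministic instruments together with $\phi$. Condition~1 (closure under combs) guarantees that every intermediate process, namely the swap with the deterministic comb applied before post-selection, lies in the relevant $\cat{P}(Y,Y')$. Hence $S$ is defined on it, and the comb-commutation law of $S$ may be applied.
\item This moves the instruments outside $S$. Lemma~\ref{lem:important} then removes the post-selections: its hypothesis, equality after post-selection, holds by the snake equation, and it lets us conclude equality of the images.
\end{itemize}

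\textbf{From CJ to categorical, and inverses.} Given a CJ-supermap $s$, define $S_{X,X'}(\phi)$ as the CJ-contraction of $s$ with $\phi$. By definition this lands in $\cat{Q}(X,X')$. Commutation with combs on $X,X'$ is immediate from string-diagram manipulation in the compact closed $\cat{C}$.

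The two constructions are mutually inverse for the following reasons.
\begin{itemize}
\item Starting from a CJ-supermap $s$, contracting it with the swap returns $s$, by the snake identities.
\item Starting from a categorical supermap, the composite construction returns the original $S$; this is the equality established above.
\end{itemize}
For multi-input supermaps on concrete profunctors, we fix all inputs but one and iterate, as in the proof of Theorem~\ref{thm:yoneda}.

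\textbf{Main obstacle.} The delicate step is applying Lemma~\ref{lem:important} and comb-commutation inside the restricted domain $\cat{P}$. We must ensure that the parameterised-instrument version of the swap, obtained by applying the deterministic cup/cap instruments as a comb, stays inside $\cat{P}$ with classical environment wires. We must also check that the proof of Lemma~\ref{lem:important} uses $S$ only on elements of $\cat{P}$ obtained by comb actions from the given ones. I would first verify this by re-reading the appendix proof of Lemma~\ref{lem:important}: it only applies $S$ to processes obtained by applying classical control/combs to the given parameterised instrument, and condition~1 covers exactly these.
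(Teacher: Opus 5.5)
Your route is the paper's own: its proof simply asserts that the arguments for Lemma~\ref{lem:important} and Theorem~\ref{thm:yoneda} carry over verbatim once the swap lies in $\cat{P}(A',A)$. However, the step you single out as the main obstacle is resolved incorrectly, and conditions~1 and~2 alone cannot resolve it.

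\textbf{Where the step fails.} Lemma~\ref{lem:important} is not proved by comb-commutation alone. Its last step uses convex linearity of $S$. The convexity lemma evaluates $S$ on the classically controlled process $M$ with $M\circ(1\otimes i)=M_i$, built from \emph{two different} processes $M_1,M_2$. Condition~1 only closes the comb-orbit of a single element, so nothing forces $M$ into $\cat{P}$. In fact $\cat{P}(X,X')$ need not even be convex, so $S$ may be undefined on the mixtures in that proof.

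This bites exactly in your application, because the two parameterised instruments compared via Lemma~\ref{lem:important} lie in different orbits. One is the swap with the cup/cap instruments applied as a deterministic comb. The other must be built from $\phi$, for instance $\phi$ tensored with a coin whose bias is the product of the two scalars, so that the components match exactly; you leave this choice implicit. The proof of the lemma then mixes a process derived from the first with one derived from the second, and condition~1 gives no reason for that mixture to lie in $\cat{P}$.

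\textbf{Counterexample.} The gap is real, not just a missing check. Work in classical theory with $A=A'=B=B'=2$ and $\cat{Q}(X,X')=\stoch(2\otimes X,2\otimes X')$. Let $\cat{P}(X,X')$ be the union of two pieces. The first is the comb-orbit of the swap, i.e.\ all $(1\otimes g)(\mathrm{swap}\otimes 1_Z)(1\otimes f)$; for these the $A'$-marginal is independent of the $A$-input. The second is $\{1_A\otimes h : h\in\stoch(X,X')\}$; for these the $A'$-marginal copies $A$. Both pieces are closed under combs and they are disjoint, so $\cat{P}$ is a concrete profunctor.

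Let $S$ be the identity on the first piece and send $1_A\otimes h\mapsto N\otimes h$ on the second, where $N$ is the bit flip. Since $N$ acts only on $A'$, $S$ commutes with all combs. Any representative $s$ must satisfy $s=S(\mathrm{swap})=\mathrm{swap}$, so the only candidate is the identity supermap. Yet $S(1_A)=N\neq 1_A$.

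Equivalently, Lemma~\ref{lem:important} fails on $\cat{P}$. Take the instrument obtained from the swap by the cup/cap comb used for $\phi=1_A$, and compare it with $1_A\otimes(\text{fair coin})$. Both have outcome-$0$ component $\tfrac12 1_A$, but their images under $S$ have components $\tfrac12 1_A$ and $\tfrac12 N$.

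\textbf{What is needed.} The statement needs an extra hypothesis, for example that each $\cat{P}(X,X')$ is closed under classical control: the controlled version of any $\psi_0,\psi_1\in\cat{P}(X,X')$ lies in $\cat{P}(X\otimes 2,X')$. With that hypothesis, convex linearity and Lemma~\ref{lem:important} hold on $\cat{P}$, and the rest of your argument goes through. That includes the CJ-to-categorical direction and the inverse checks, which are otherwise sound. The paper's one-line proof omits the same hypothesis.
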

\begin{proof}
The proof of the preliminary non-contextual assignment lemma and the induced Yoneda lemma for categorical supermaps are syntactically identical to the proofs already given in the simplified case. The only note to make, is that the proof of the Yoneda lemma for categorical supermaps requires the existence of a swap process, which is indeed given by definition for concrete profunctors.
\end{proof}
Beyond the scope of this paper, is the question of the exact type-theoretic properties of the composition operations introduced above. Their associativity, extension to higher-order, and interrelations such as currying, so that they might fit the general mould laid out for higher-order process theories \cite{Wilson_causal}, is left for future work.


\section{Conclusion}

The development of categorical supermaps is motivated by the straightforward aim of providing a stable compositional framework for higher-order processes in general physical theories.
More precisely, it aims to provide a framework for studying indefinite causal order in post-quantum theories and/or in infinite-dimensional theories. 
In this article, we brought the former goal of categorical supermaps to maturity via a Yoneda-like lemma for them. We have found that whenever a post-quantum theory has a CJ-isomorphism, the lemma supplies a concrete CJ-representation of what should be the adequate notion of supermap for this theory, thereby bypassing any guesswork. 
This approach also puts forward an alternative, weaker notion of higher-order map on Boxworld, motivated by the use of the non-signalling tensor product when framing Boxworld as a generalised theory. 

Natural future directions include: the development of a theory-independent formulation of the process-matrix axioms (as opposed to the theory-independent formulation of the supermap axioms); the identification of important subclasses of supermaps such as those which exhibit classical or generalised physical control of causal order \cite{PRXQuantum.2.030335}; a general study of the strength of indefinite causality (via the violation of causal inequalities \cite{Lugt:2023aa, oreshkov}) conceivable across the spectrum of possible physical theories; and a concrete characterisation of the cw-NSWSE Boxworld tensors and their achievable non-causal correlations. 



 \section*{Acknowledgements}

 	This work has been partially funded by the French National Research Agency (ANR) within the framework of ``Plan France 2030'', under the research projects EPIQ ANR-22-PETQ-0007 and HQI-R\&D ANR-22-PNCQ-0002. T.H. has been supported by the Slovak grants APVV-22-0570 and VEGA 2/0128/24, as well as the ID\# 62312 grant from the John Templeton Foundation, as part of the \href{https://www.templeton.org/grant/the-quantum-information-structure-of-spacetime-qiss-second-phase}{`The Quantum Information Structure of Spacetime' Project (QISS)}. The opinions expressed in this publication are those of the authors and do not necessarily reflect the views of the John Templeton Foundation. The authors are grateful to Jessica Bavaresco, Cyril Branciard, Kuntal Sengupta, Allesandro Bisio, Luca Apadula, Zixuan Liu, and Anna Jen\v{c}ova for useful discussions.

\bibliographystyle{plainnat}
\bibliography{bibliography.bib}

\appendix

\section{No signalling without system exchange and its generalisations}

The NS famillies of probabilistic operations are also be characterised in  \cite{bavaresco2024indefinitecausalorderboxworld} as those with marginals over $x'$ which decompose as either \[ \sum_{\lambda} \pi_{\lambda}  \,  \input{figs/ns_diagram_decomp.tikz} \qquad \text{or} \qquad \input{figs/ns_diagram_decomp2.tikz}   \] with $\pi_{\lambda}$ a probability distribution. We will refer to the former decomposition as the non-trivial decomposition for the marginal and the latter as the trivial decomposition for the marginal.

A Boxworld NSWSE tensor can be recharacterised as in \cite{bavaresco2024indefinitecausalorderboxworld} simply as an effect on the tensor product of combs \[W: (((C \rightarrow C') \rightarrow (D \rightarrow D')) \otimes ((A \rightarrow A') \rightarrow (B \rightarrow B'))) \rightarrow I \] depicted \[  \input{figs/unfold_tensor_1.tikz},  \]
such that \[   \input{figs/unfold_tensor_2.tikz} \ = \ \input{figs/unfold_tensor_3.tikz} \quad \text{and} \quad   \input{figs/unfold_tensor_2l.tikz} \ = \ \input{figs/unfold_tensor_3l.tikz}    .   \]  

\begin{theorem}
NSWSE Boxworlds tensors $\subseteq$ Boxworld Supermaps $\subseteq$ w-NSWSE Boxworld tensors
\end{theorem}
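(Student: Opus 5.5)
The plan is to prove the two inclusions separately. Both rest on the same containment between classes of instruments: every NS-Boxworld instrument is an NS probabilistic operation, and every Boxworld instrument is a probabilistic operation. Throughout we use the preceding theorem, which identifies Boxworld supermaps with the cw-NSWSE tensors.

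\textbf{Second inclusion (Boxworld supermaps $\subseteq$ w-NSWSE).} This is almost immediate from the diagrammatic reading already given in the main text.

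1. Take a Boxworld supermap $W$ and two Boxworld instruments $T_1, T_2$. Both are deterministic Boxworld processes, so plugging them into $W$ yields a deterministic Boxworld state on $(\mathbb{X}_1\rightarrow\mathbb{Y}_1)\otimes(\mathbb{X}_2\rightarrow\mathbb{Y}_2)$. In particular the result is a normalised stochastic channel.
2. If $T_1, T_2$ are moreover NS-Boxworld instruments, they have the type $(A\rightarrow A')\rightarrow((B\rightarrow B')\otimes(\mathbb{X}\rightarrow\mathbb{Y}))$ given by the earlier characterisation theorem. The no-signalling tensor of Boxworld then forces the output to be a non-signalling channel.

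Equivalently, w-NSWSE is exactly the $n_1=n_2=1$ case of cw-NSWSE, so the inclusion also follows directly from the preceding theorem.

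\textbf{First inclusion (NSWSE $\subseteq$ Boxworld supermaps).} By the preceding theorem it suffices to show that NSWSE implies cw-NSWSE.

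1. \emph{Normalisation.} Every Boxworld instrument is in particular a probabilistic operation, being a comb of type $(A\rightarrow A')\rightarrow(B\otimes\mathbb{X}\rightarrow B'\otimes\mathbb{X}')$. The first NSWSE clause therefore gives normalisation.
2. \emph{Reduction to one setting/outcome pair.} For the multipartite clause, take multipartite NS-Boxworld instruments $T_1, T_2$. Group each party's settings $x_i^1,\dots,x_i^{n_i}$ into a single classical system, and likewise the outcomes. Because $B$ and each $\mathbb{X}_i^j$ only signal to their own outputs, each grouped $T_i$ is an NS probabilistic operation in the sense of the definition. NSWSE then gives that $W\star(T_1,T_2)$ does not signal between the two parties.
3. \emph{Finer splitting within a party.} It remains to show that no signalling occurs between different indices $j,j'$ within one party. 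For this, fix all settings of party $i$ except $x_i^j$. Then apply a classical post-processing that marginalises all outcomes except $y_i^j$, keeping the remaining settings and outcomes as part of the other party's environment.

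I expect this third step to be the main obstacle. The difficulty is that NSWSE only constrains bipartite splittings across the two slots, while cw-NSWSE requires no signalling between subsystems sitting inside the same slot.

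The plan for this step is to use the marginal decomposition of NS probabilistic operations from \cite{bavaresco2024indefinitecausalorderboxworld}, recalled at the start of this appendix:
\begin{itemize}
\item In the non-trivial decomposition, the marginal over outcomes of a multipartite NS-Boxworld instrument is a convex mixture of product forms. Signalling from $x_i^j$ to $y_i^{j'}$ after contraction with $W$ can then be reduced to signalling through $W$ from one slot's setting to a marginal which, after summing outcomes, depends on that setting only through a deterministic channel.
\item The trivial decomposition is handled directly.
\end{itemize}
Alternatively, one can handle the finer splitting by realising the other local wires via classical control, using equation \eqref{eq:control}. Each sub-party can be simulated as a separate NS probabilistic operation absorbed into the comb, after which the single-pair NSWSE condition is applied repeatedly.
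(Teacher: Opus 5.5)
Your second inclusion and steps 1--2 of the first are correct. The paper gets the second inclusion exactly as your alternative does: w-NSWSE is the $n_1=n_2=1$ case of cw-NSWSE.

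One small slip in your direct argument: for general, non-NS Boxworld instruments the contraction is a deterministic Boxworld process $\mathbb{X}_1\otimes\mathbb{X}_2\rightarrow\mathbb{Y}_1\otimes\mathbb{Y}_2$. That is a possibly signalling stochastic channel, not a state on $(\mathbb{X}_1\rightarrow\mathbb{Y}_1)\otimes(\mathbb{X}_2\rightarrow\mathbb{Y}_2)$. This is harmless, since you only use normalisation.

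The genuine gap is step 3. It is the whole content of the first inclusion, and you leave it as a plan.

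\emph{Route (b) cannot conclude as stated.} NSWSE is a two-slot condition, and however you group, control or marginalise the given sub-parties, they all stay in slot 1. You therefore only ever obtain slot-1-versus-slot-2 no-signalling, and that does not imply the within-slot condition. Take $y_1^1$ constant, $y_1^2$ uniform and $y_2=y_1^2\oplus x_1^1$: every bipartite split between the slots is no-signalling, yet $x_1^1$ signals to the pair $(y_1^2,y_2)$.

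\emph{Route (a) never says where the hypothesis on $W$ enters.} The Bavaresco et al.\ decomposition constrains the instruments, not $W$. It is also stated for the full outcome-marginal of an NS probabilistic operation, whereas the within-slot condition concerns marginals that keep $y_1^{j'}$.

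The paper supplies the missing link through the recharacterisation of NSWSE tensors as effects on the tensor product of the two comb types, obeying the right-handed NSWSE constraint that relates $W$ to some $W'$. It then uses determinism of the $D_\lambda$ in the decomposition to rewrite the contraction through that constraint.

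A more elementary way to close the gap is to apply NSWSE to slot-1 operations that are not regroupings of $T_1$.

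\begin{enumerate}
\item Fix the remaining slot-1 settings $x_R$, let $S$ be the sub-parties whose settings vary, and set $P^{x_S}_r=\sum_{y_S}(T_1)^{x_Sx_R}_{y_S r}$.
\item Multipartite no-signalling of $T_1$ on all (extended) channel inputs gives $(P^{x_S}_r-P^{x_S'}_r)\star\rho=0$ for every such input $\rho$.
\item Hence $D=\sum_{r'\neq r}P^{x_S'}_{r'}+P^{x_S}_r$ is positive and acts on every channel exactly as the deterministic marginal $E=\sum_{r'}P^{x_S'}_{r'}$. A positive tensor sending every channel to a channel is a classical comb, so $D$ is a deterministic comb.
\item The slot-1 operation whose setting selects $E$ or $D$, with trivial outcome, is therefore an NS probabilistic operation.
\item NSWSE against the grouped, possibly partially marginalised, $T_2$ gives $W\star(P^{x_S}_r-P^{x_S'}_r,\,T_2{}^{x_2}_{z})=0$. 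This is precisely the independence of $(y_R,z)$ from $x_S$.
\end{enumerate}

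Combined with the symmetric statement for slot 2, this yields full multipartite no-signalling without the decomposition.
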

\begin{proof}
We show that if a matrix in $\mat{\reals^+}$ is a NSWSE Boxworld tensor, then indeed that is sufficient to conclude that it is a Boxworld supermap (equivalently a cw-NSWSE Boxworld tensor). 
Without loss of generality, we only need to check whether the following diagram is non-signalling: \[ \input{figs/boxworld_proof_single_1.tikz}, \] whenever the box $W$ satisfies the following right-handed NSWSE constraint  \[  \input{figs/rhs_nswse_1.tikz} \ = \ \input{figs/rhs_nswse_2.tikz}   \] for some $W'$.
Finally then, to confirm this we make use of the concrete characterisation theorem for the marginal over any NS process $T$ which gives for the contraction between $W$ and $T$ the following concrete expression in the non-trivial case (the proof of the trivial decomposition is immediate) \[ \sum_{\lambda} \pi_{\lambda}  \quad \input{figs/unfold_comb_1.tikz}  \quad = \quad  \sum_{\lambda} \pi_{\lambda}  \quad \input{figs/unfold_comb_2.tikz}    \] then using that each $D_{\lambda}$ is deterministic we have 
 \[ \sum_{\lambda} \pi_{\lambda}  \quad \input{figs/unfold_comb_3.tikz}  \quad = \quad  \sum_{\lambda} \pi_{\lambda}  \quad \input{figs/unfold_comb_4.tikz}    \]
We hence find the following 
 \[   \input{figs/boxworld_proof_single_2b.tikz} 
  \quad = \quad \input{figs/boxworld_proof_single_5.tikz}. \] The same reasoning process can be applied to each party, demonstrating that the induced state is multiparty non-signalling and so a state in Boxworld. This completes the proof of the first inclusion.
For the latter inclusion, note that since Boxworld supermaps are cw-NSWSE tensors, they are automativelly w-NSWSE tensors. 
\end{proof}

\section{Proof of Lemma \ref{lem:important}}
In this section we will prove that whenever a component of one parameterised instrument is equal to a component of another parameterised instrument, then the application of a categorical supermap will preserve this equality. 
Before we do so, we must confirm convex linearity of categorical supermaps. 

\begin{lemma}
Every component of a one-input categorical supermap on a generalised theory is a convex linear function.
\end{lemma}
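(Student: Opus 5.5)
We must show: for $S$ a one-input categorical supermap, each component $S_{X,X'}$ preserves convex combinations, i.e.\ $S_{X,X'}(p\phi_0+(1-p)\phi_1)=pS_{X,X'}(\phi_0)+(1-p)S_{X,X'}(\phi_1)$ for deterministic $\phi_0,\phi_1:A\otimes X\to A'\otimes X'$ and $p\in[0,1]$. The plan is to realise the convex combination with classical control and then pull the classical randomness outside $S$ using commutation with combs on the environment.

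First, by the classical-control axism \eqref{eq:control} applied to the family $\{\phi_0,\phi_1\}$, there is a deterministic $M:A\otimes X\otimes\mathbb{I}\to A'\otimes X'$ with $M\circ(1\otimes|i))=\phi_i$, where $\mathbb{I}$ is the $2$-dimensional classical system. Let $\mu=p|0)+(1-p)|1)$, a deterministic classical state (stochastic), which lies in $\cat{C}^{\mathbf d}$ via the embedding $\stoch\hookrightarrow\cat{C}^{\mathbf d}$. Since composition and $\otimes$ in a convex enriched category are convex-bilinear, $M\circ(1\otimes\mu)=p\phi_0+(1-p)\phi_1$. Now view $M$ as an input process with enlarged environment $X\otimes\mathbb{I}\to X'$, so $S_{X\otimes\mathbb{I},X'}(M)$ is defined. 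The comb $f=1_X\otimes\mu:X\to X\otimes\mathbb{I}$ (with $g=1$, $Z=I$) is deterministic, so the locality axiom gives
\[ S_{X,X'}(p\phi_0+(1-p)\phi_1)=S_{X\otimes\mathbb{I},X'}(M)\circ(1_B\otimes 1_X\otimes\mu). \]
Bilinearity of composition turns the right side into $p\,S_{X\otimes\mathbb{I},X'}(M)\circ(1\otimes|0))+(1-p)\,S_{X\otimes\mathbb{I},X'}(M)\circ(1\otimes|1))$, and applying the locality axiom again, now with the deterministic point states $|i)$ as combs, each term equals $S_{X,X'}(M\circ(1\otimes|i)))=S_{X,X'}(\phi_i)$. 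This gives convex linearity.

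The main point to check is that every comb used is deterministic, since the axiom only quantifies over combs in $\cat{C}^{\mathbf d}$: $\mu$ and $|i)$ are stochastic, hence in $\stoch\subseteq\cat{C}^{\mathbf d}$, and the identity wiring on $X$ is fine. The other point is that convex combinations are preserved by composition in both arguments; this is part of convex enrichment, and the commuting square relating $\stoch$ and $\mat{\reals^+}$ ensures that $\mu$ decomposes as the convex combination of the $|i)$ inside $\cat{C}$, not merely in $\stoch$. Convex combinations of more than two terms follow by the same argument with an $n$-dimensional control system, or by induction.
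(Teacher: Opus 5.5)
Your proof is correct and follows the paper's argument step for step: you realise the mixture as $M\circ(1\otimes\mu)$ via the control axiom, pull $\mu$ out of $S$ by commutation with deterministic combs, and expand $\mu$ by bilinearity of composition. You then push the point states $|i)$ back inside $S$ and apply the control equation again, and your checks that every comb used lies in $\cat{C}^{\mathbf d}$ and that composition is convex-bilinear make explicit what the paper's diagrammatic proof uses implicitly.
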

\begin{proof}
We note that 
\[\input{figs/convex_1.tikz} = \input{figs/convex_2.tikz} = \input{figs/convex_3.tikz} \]
where $M$ is the controlled version of $M_1,M_2$ and $p :  I \rightarrow 2$ is the distribution $(p,(1-p))$. Consequently the above expands to
                     \[  = p \  \input{figs/convex_4.tikz} + (1-p) \  \input{figs/convex_5.tikz}, \] where using commutativity again gives
                     \[  =   p  \  \input{figs/convex_6.tikz} + (1-p) \  \input{figs/convex_7.tikz}, \] and re-)using the definition of ocntrol returns
                      \[ = p \   \input{figs/convex_8.tikz} + (1-p) \  \input{figs/convex_9.tikz}. \]
\end{proof}
Having proven convexity we are now well placed to prove that categorical supermaps preserve equality between components of parameterised instruments. 
\renewcommand{\thelemma}{\ref{lem:important}}
\begin{lemma}[Restated]\label{lem:important-restated}
For any pair of parameterised instruments $M,M'$ with choices $x,x'$ and outcomes $y,y'$ such that: \[  \input{figs/param_inst3.tikz} \ = \   \input{figs/param_inst_4.tikz},     \]  then \[  \input{figs/param_inst5.tikz} \ = \   \input{figs/param_inst6.tikz}.     \] 
\end{lemma}

\begin{proof}
The proof of this theorem follows directly the proof given in \cite{wilson_locality} for an almost identical lemma in which the wires are taken to be simply fully-quantum systems. The proof hinges only on a few basic properties of the classical interface.
Indeed, let us construct a pair of stochastic channels \[  \input{figs/param_sigma_1.tikz} \ := \   \input{figs/param_sigma_2.tikz} \ + \ \left(    \input{figs/param_sigma_3.tikz}   \right),  \quad \quad  \input{figs/param_sigma_4.tikz} \ := \   \input{figs/param_sigma_5.tikz} \ + \ \left(    \input{figs/param_sigma_6.tikz}   \right).   \]
It is easy to show (see \cite{wilson_locality} for the proof technique) that: \[  \input{figs/param_conv_1.tikz} \ - \ \input{figs/param_conv_2.tikz} \quad  = \quad  \input{figs/param_conv_3.tikz} \ - \ \input{figs/param_conv_4.tikz} . \] 
Now note that the purpose of introducing $\Sigma$ and $\Sigma'$ is to find a deterministic channel which encodes the difference between the outcomes $y$ and $y'$, more precisely we have the following key properties: \[  \input{figs/param_sigma_7.tikz} \ = \ \input{figs/param_sigma_8.tikz} \quad \quad  \input{figs/param_sigma_9.tikz} \ = \ \input{figs/param_sigma_10.tikz} . \]
From this we have that \[  \input{figs/param_inst5.tikz} \ - \ \input{figs/param_inst6.tikz} \quad  = \quad  \input{figs/param_inst_7.tikz} \ - \ \input{figs/param_inst_8.tikz} . \] 
By commutativity with deterministic channels we then have that this term is equal to \[  \input{figs/param_inst_9.tikz} \ - \ \input{figs/param_inst_10.tikz}, \] 
finally, using convex linearity one can hence derive the equality of this term\footnote{One might object that the linear combination here is not a convex combination, however it can be derived purely from convex linearity, see the proof of Lemma 9 of \cite{wilson2023compositional} for a full exposition of this point.} to \[  \input{figs/param_inst_11.tikz} \ - \ \input{figs/param_inst_12.tikz} \quad  = \quad  \input{figs/param_inst_13.tikz} \ - \ \input{figs/param_inst_14.tikz} \ = \ 0 .  \] This completes the proof. 
\end{proof}

\section{Strong Profunctors and the Yoneda Lemma}
As explained in \cite{hefford_supermaps} the definition of categorical supermaps can be stated completely within the language of \textit{strong profunctors} also known as Tambara modules \cite{tambara,pastro_street}.
The utility of making this identification is that the main results of this paper can be derived almost immediately by abstract nonsense as instances of the Yoneda lemma.
Let us now give a brief outline of how this works.
Readers interested in further details from the point of view of supermaps can find them in \cite{hefford_supermaps,hefford2025bvcategoryspacetimeinterventions}, from the point of view of monoidal context theory in \cite{roman_thesis,roman_optics} and from the point of view of profunctor optics as used in computer science in \cite{clarke_profunctor,riley_optics}.

Given a symmetric monoidal category $\cat{C}$, we have in many places in this article written expressions like $\cat{C}(A,B)$ and $\cat{C}(A\otimes X,A'\otimes X')$ for the sets of maps $A\morph{}B$ and $A\otimes X\morph{} A'\otimes X'$ respectively.
These objects are in fact instances profunctors: for example the family of sets $\{\cat{C}(A,B)\}_{A,B}$ is closed under composition on either side by morphisms from $\cat{C}$.
\begin{definition}[Profunctor]
    A profunctor $P:\cat{C}\pmorph\cat{C}$ is a functor $P:\opcat{\cat{C}}\times\cat{C}\morph{}\set$.
\end{definition}
In the case of $\cat{C}(-,-):\cat{C}\pmorph\cat{C}$, we send each pair $(A,B)$ of objects of $\cat{C}$ to the set $\cat{C}(A,B)$ of morphisms $A\morph{}B$ in $\cat{C}$.
Then for a morphism $g:B\morph{}B'$ of $\cat{C}$, we send it to the function which simply acts by post-composing with $g$,
\begin{equation*}
    f\in \cat{C}(A,B) \mapsto gf \in \cat{C}(A,B'),
\end{equation*}
and similarly by pre-composition for any morphism $h:A'\morph{}A$.
This special profunctor $\cat{C}(-,-)$ is known as the hom-profunctor and plays a rather special role in the theory.

The family of sets $\{\cat{C}(A\otimes X,A'\otimes X')\}_{X,X'}$ is also a profunctor $\cat{C}\pmorph\cat{C}$ in a similar way to the hom-profunctor, but just by composition along $X$ and $X'$.
Note that $A$ and $A'$ are fixed here, this is because these objects are the input to the supermap we will eventually define.
It is $X$ and $X'$ that vary, for these model the environment around $(A,A')$.
We write this profunctor as $\cat{C}(A\otimes-,A'\otimes-)$.

The profunctors we have so far are, in fact, not only compatible with composition in $\cat{C}$ but also with the tensor product of $\cat{C}$.
Such profunctors are known as \textit{strong}.

\begin{definition}[Strong Profunctor]
    A profunctor $P:\cat{C}\pmorph\cat{C}$ is strong if it equipped with a natural transformation $\zeta: \cat{C}(-,-)\times P(-,-) \morph{} P(-\otimes-,-\otimes-)$ such that a number of coherence conditions hold making $\zeta$ essentially associative and unital \cite{tambara,pastro_street}.
\end{definition}

A strength for a profunctor $P$ gives a way of ``tensoring'' elements of $P$ with morphisms of $\cat{C}$.
In the case of $\cat{C}(A\otimes-,A'\otimes-)$ the strength is simply given by the tensor product of $\cat{C}$.

Strong profunctors assemble into a category $\StProf(\cat{C})$ whose morphisms are the strong natural transformations -- these are simply natural transformations that commute with the strengths.
It turns out that this property captures precisely commutation with combs so that it is possible to restate the definition of a categorical supermap as follows.

\begin{definition}[Categorical Supermap]
    A categorical supermap is a strong natural transformation,
    \begin{equation*}
        S:\cat{C}(A\otimes -,A'\otimes -) \morph{} \cat{C}(B\otimes -,B'\otimes-).
    \end{equation*}
\end{definition}
This can be extended to the multi-input case, and we refer the reader to \cite{hefford_supermaps} for the details of this.

In the case that we are working with a CPT, categorical supermaps simply become natural transformations on the deterministic part of the theory,
\begin{equation*}
    S:\cat{C}^{\mathbf{d}}(A\otimes -,A'\otimes -) \morph{} \cat{C}^{\mathbf{d}}(B\otimes -,B'\otimes-),
\end{equation*}
where $\cat{C}^{\mathbf{d}}(A\otimes -,A'\otimes -):\cat{C}^{\mathbf{d}}\pmorph\cat{C}^{\mathbf{d}}$ so that we only require $S$ to commute with deterministic combs.

One of the main results about strong profunctors is their representation theory developed in \cite{pastro_street}.
This result characterises strong profunctors as representations (copresheaves) of a certain category $\opt(\cat{C})$ of \textit{coend optics}.
The reason for this name comes from their application to bidirectional data accessors \cite{clarke_profunctor} but the more quantum-inclined readers of this paper might be intrigued to learn that it bears close resemblance to combs.

\begin{definition}[Coend Optics]
    $\opt(\cat{C})$ is the category whose objects $(A,A')$ are pairs of objects from $\cat{C}$ and whose morphisms $(A,A')\morph{}(B,B')$ are elements of the following set,
    \begin{equation*}
        \int^X \cat{C}(B,X\otimes A)\times \cat{C}(X\otimes A',B').
    \end{equation*}
\end{definition}
The integral symbol here is not actually an integral at all and is known as a \textit{coend} \cite{loregian_coend}.
The details of coends are not terribly important for this informal discussion, and one may effectively think of the earlier expression as an abstract way of describing the space of one-holed combs with input $(A,A')$ and output $(B,B')$.

\begin{remark}
    One may wonder why we do not just call this category $\comb(\cat{C})$ and this is because there is subtlety in the difference between coend optics and combs as envisaged in quantum theory.
    The equivalence relation on the sets of combs and the set of coend optics depends on the underlying category $\cat{C}$ and is not always the same.
    See \cite{hefford_optics,hefford_supermaps} for further discussion of this, but note that in the case that $\cat{C}$ is compact closed, coend optics and combs coincide so there is no danger in making this correspondence.
\end{remark}

We are now in a position to state the representation theorem.
\begin{theorem}[\cite{pastro_street}]
    $\StProf(\cat{C}) \cong [\opt(\cat{C}),\set]$.
\end{theorem}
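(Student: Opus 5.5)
The plan is to build two mutually inverse functors between $\StProf(\cat{C})$ and $[\opt(\cat{C}),\set]$. First I would rewrite the strength in the equivalent \emph{Tambara module} form: a family $\alpha_{X;A,A'}:P(A,A')\morph{}P(X\otimes A,X\otimes A')$. It is obtained from $\zeta$ by feeding in $1_X$, and conversely $\zeta(f,p)=P(f\otimes 1,1)\,\alpha_X(p)$. The form $\alpha$ is natural in $A,A'$ and dinatural in $X$, and the coherence conditions become $\alpha_I=\mathrm{id}$ and $\alpha_{X\otimes Y}=\alpha_X\circ\alpha_Y$, up to associators and unitors.

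Next I would construct $\Phi:\StProf(\cat{C})\morph{}[\opt(\cat{C}),\set]$. On objects, $\Phi(P)(A,A'):=P(A,A')$. An optic $\langle l,r\rangle$ with $l:B\morph{}X\otimes A$ and $r:X\otimes A'\morph{}B'$ acts by
\[ p\mapsto P(l,r)\bigl(\alpha_X(p)\bigr). \]
Three things must then be checked. First, this is well defined on the coend, i.e.\ invariant under $\langle (u\otimes 1)l,r\rangle\sim\langle l,r(u\otimes 1)\rangle$ for $u:X\morph{}Y$; this is exactly dinaturality of $\alpha$ in $X$. Second, identity optics act trivially, which follows from $\alpha_I=\mathrm{id}$. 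Third, composition in $\opt(\cat{C})$, which tensors the residuals as $Y\otimes X$ and uses the associator, is respected; this follows from $\alpha_{Y\otimes X}=\alpha_Y\alpha_X$ together with naturality of $\alpha$ in $A,A'$. A strong natural transformation is natural and commutes with $\alpha$, so it gives a natural transformation of copresheaves, and $\Phi$ is therefore a functor.

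For the inverse $\Psi$, I would use the identity-on-objects functor $j:\opcat{\cat{C}}\times\cat{C}\morph{}\opt(\cat{C})$, $(f,g)\mapsto\langle \lambda^{-1}f,\,g\lambda\rangle$, which has residual $I$. Restricting a copresheaf $F$ along $j$ gives a profunctor. The Tambara structure $\alpha_X$ is defined as the action of the optic $\langle 1_{X\otimes A},1_{X\otimes A'}\rangle:(A,A')\morph{}(X\otimes A,X\otimes A')$ with residual $X$. Every optic factors as $j(l,r)\circ\langle 1,1\rangle_X$, and this factorisation gives both $\Phi\Psi=\mathrm{id}$ and the fact that natural transformations of copresheaves are automatically strong. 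The coherence of $\alpha$ follows from functoriality of $F$ together with the composite identity $\langle1,1\rangle_Y\circ\langle1,1\rangle_X=\langle1,1\rangle_{Y\otimes X}$ in $\opt(\cat{C})$, modulo the associator. The identity $\Psi\Phi=\mathrm{id}$ is immediate from the definitions.

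I expect the main obstacle to be the bookkeeping in the composition check, in both directions. Composition of optics is defined on coend representatives, so I must verify that the induced action matches $\alpha_Y\alpha_X$ on the nose, with the associator and the coend quotient correctly accounted for. If this becomes unwieldy, the first fallback is the more abstract route. There one shows that $\opt(\cat{C})$ is, via coYoneda, the Kleisli category of the Pastro--Street monad $\Theta P(B,B')=\int^{X,A,A'}\cat{C}(B,X\otimes A)\times P(A,A')\times\cat{C}(X\otimes A',B')$ restricted to representables. Tambara modules are the Eilenberg--Moore algebras of $\Theta$, and the equivalence then follows because $\Theta$ is cocontinuous and every profunctor is a colimit of representables.
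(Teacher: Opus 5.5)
Your proposal is correct, but it takes a different route from the cited proof. The paper does not prove this statement itself; it cites Pastro and Street, whose argument is essentially the abstract one you keep as a fallback. There, Tambara modules are the Eilenberg--Moore algebras of a cocontinuous monad on $[\opcat{\cat{C}}\times\cat{C},\set]$ (equivalently, modules over a promonad), and coYoneda identifies its Kleisli category on representables with the double of $\cat{C}$, i.e.\ $\opt(\cat{C})$. Strictly, with your copresheaf conventions it is $\opt(\cat{C})^{\mathrm{op}}$, which is why the algebras come out as copresheaves on $\opt(\cat{C})$.

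Your main route instead builds mutually inverse functors by hand, and it goes through:
the action $p\mapsto P(l,r)\alpha_X(p)$ descends to the coend exactly by dinaturality; it respects composition by the associativity axiom plus naturality in $(A,A')$; and the factorisation $\langle l,r\rangle_X=j(l,r)\circ\langle 1,1\rangle_X$ does hold in $\opt(\cat{C})$ once you reduce the residual $I\otimes X$ along $\lambda_X$, which gives $\Phi\Psi=\mathrm{id}$ on the nose.

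What your route buys is self-containedness and an explicit dictionary between a Tambara structure and the action of combs. That dictionary is precisely the concrete content behind the paper's remark that a supermap is recovered from its value on the swap. The Pastro--Street route buys conceptual economy: optics appear as the Kleisli category of the free Tambara module monad, the cofree comonad comes along, and everything holds verbatim in the enriched setting. Its cost is importing the theory of cocontinuous monads on presheaf categories, and identifying Tambara modules with algebras still needs the same kind of coherence check you do by hand.

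A few small repairs:
for $f:X\morph{}X'$ the recovery formula should read $\zeta(f,p)=P(f\otimes 1,1)\alpha_{X'}(p)=P(1,f\otimes 1)\alpha_X(p)$. When defining $\Psi$, you should also verify that the induced $\alpha$ is natural in $(A,A')$ and dinatural in $X$. Both again follow from the factorisation and the coend relation, e.g.\ $j(u\otimes 1,1)\circ\langle 1,1\rangle_Y=\langle u\otimes 1,1\rangle_Y\sim\langle 1,u\otimes 1\rangle_X=j(1,u\otimes 1)\circ\langle 1,1\rangle_X$. Finally, $\Psi\Phi=\mathrm{id}$ uses the unit axiom $\alpha_I=P(\lambda,\lambda^{-1})$ rather than being purely definitional.
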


This is \textit{the} theorem that allows us to reduce the proof of Theorem \ref{thm:yo_compact} to a couple of lines and justify its naming as the Yoneda lemma for compact closed categories.

\begin{proof}
    Note that 
    \begin{align*} 
        y_{(A',A)} & = \int^X \cat{C}(-,X\otimes A')\times \cat{C}(X\otimes A,-) \cong \int^X \cat{C}(-,X\otimes A')\times \cat{C}(X,A^*\otimes -) \\
        & \cong \cat{C}(-,A^*\otimes A'\otimes-) \cong \cat{C}(A\otimes-,A'\otimes-)
    \end{align*}
    so that $\suphom{A}{A'}$ is representable.
    The result now follows immediately by the Yoneda lemma.
    \begin{align*}
        \mathrm{Nat}\big(\suphom{A}{A'},\suphom{B}{B'}\big) & \cong \mathrm{Nat}\big(y_{(A',A)}, \suphom{B}{B'}\big) \\
        & \cong \cat{C}(B\otimes A',B'\otimes A)
    \end{align*}
\end{proof}
The element in $\cat{C}(B\otimes A',B'\otimes A)$ that is equivalent to the original supermap is given in the original proof by the action of $S$ on the swap.
The Yoneda lemma is precisely the reason that this appears.

\end{document}

%% file: preamble.tex
\usepackage{amsmath}
\usepackage{amsthm}
\usepackage{amssymb}
\usepackage{physics}
\usepackage{stmaryrd}
\usepackage{tikz}
\usepackage{tikz-cd}
\usepackage{tikzit}
\usepackage{circuitikz}
\usepackage[numbers,sort&compress]{natbib}
\usepackage{hyperref}
\usepackage{tubes}
\usepackage{xsavebox}
\usepackage{cmll}

\input{styles.tikzdefs}
\input{styles.tikzstyles}
\usetikzlibrary{circuits.ee.IEC}

\newcommand{\morph}[1]{\xrightarrow{#1}}

\newcommand{\pmorph}{\relbar\joinrel\mapstochar\joinrel\rightarrow}
\newcommand{\cat}[1]{\mathcal{#1}}
\newcommand{\cpm}[1]{\mathsf{CPM}(#1)}

\newcommand{\mat}[1]{\mathsf{Mat}_{#1}}
\newcommand{\opcat}[1]{#1^\textrm{op}}
\newcommand{\set}{\mathsf{Set}}

\newcommand{\nats}{\mathbb{N}}
\newcommand{\reals}{\mathbb{R}}

\newcommand{\StProf}{\mathsf{StProf}}

\newcommand{\opt}{\mathsf{Optic}}
\newcommand{\comb}{\mathsf{Comb}}

\newcommand{\stoch}{\mathsf{Stoch}}

\newcommand{\bl}{\mathord{=}}
\newcommand{\suphom}[2]{\cat{C}(#1 \otimes-, #2 \otimes\bl)}

\newcommand{\seq}{\olessthan}

\makeatletter
\newcommand{\xRightarrow}[2][]{\ext@arrow 0359\Rightarrowfill@{#1}{#2}}
\makeatother

%% file: styles.tikzstyles
\tikzstyle{discarding}=[fill=white, draw=black, shape=circle, style=upground]
\tikzstyle{smalldiscarding}=[fill=white, draw=black, style=upground, scale=0.75]
\tikzstyle{backdiscard}=[fill=white, draw=black, shape=circle, style=downground, scale=0.5]
\tikzstyle{smallbackdiscard}=[fill=white, draw=black, shape=circle, style=downground, scale=0.5]
\tikzstyle{state}=[fill=white, draw=black, style=triang, tikzit shape=rectangle]
\tikzstyle{kstate}=[fill=white, draw=black, style=kpoint, tikzit shape=rectangle]
\tikzstyle{kstateconj}=[fill=white, draw=black, style=kpoint conjugate, tikzit shape=rectangle]
\tikzstyle{kstateBIG}=[fill=white, draw=black, style=big kpoint, tikzit shape=rectangle]
\tikzstyle{effect}=[fill=white, draw=black, style=triangdag]
\tikzstyle{keffect}=[fill=white, draw=black, style=kpoint adjoint]
\tikzstyle{keffectconj}=[fill=white, draw=black, style=kpoint transpose]
\tikzstyle{morphdag}=[style=mapdag]
\tikzstyle{morph}=[style=hadamard]
\tikzstyle{WIDEmorph}=[style=hadamard, minimum width=14mm]
\tikzstyle{morphtrans}=[style=maptrans]
\tikzstyle{morphconj}=[style=mapconj]
\tikzstyle{CPMmorph}=[style=dmap]
\tikzstyle{CPMmorphconj}=[style=dmapconj]
\tikzstyle{CPMmorphdag}=[style=dmapdag]
\tikzstyle{CPMmorphtrans}=[style=dmaptrans]
\tikzstyle{CPMstate}=[fill=white, draw=black, style=triang, doubled]
\tikzstyle{CPMstateBIG}=[fill=white, draw=black, style={triang_lesssep}, doubled]
\tikzstyle{CPMkstate}=[fill=white, draw=black, style=kpoint, tikzit shape=rectangle, doubled]
\tikzstyle{CPMkstateconj}=[fill=white, draw=black, style=kpoint conjugate, tikzit shape=rectangle, doubled]
\tikzstyle{CPMkstateBIG}=[fill=white, draw=black, style=big kpoint, tikzit shape=rectangle, doubled]
\tikzstyle{CPMkeffect}=[fill=white, draw=black, style=kpoint adjoint, doubled]
\tikzstyle{CPMkeffectconj}=[fill=white, draw=black, style=kpoint transpose, doubled]
\tikzstyle{UHfB}=[fill=white, draw=black, style=triangdag, doubled, inner sep=-2pt]
\tikzstyle{leak}=[style=tinypoint, regular polygon rotate=-90]
\tikzstyle{leakfill}=[style=tinypoint, regular polygon rotate=-90, fill=black]
\tikzstyle{Z}=[style=dot, fill=green]
\tikzstyle{X}=[style=dot, fill=red]
\tikzstyle{black_dot}=[style=dot, fill=black]
\tikzstyle{white_dot}=[style=dot, fill=white]
\tikzstyle{qblack_dot}=[style=ddot, fill=black]
\tikzstyle{qwhite_dot}=[style=ddot, fill=white]
\tikzstyle{whitephase}=[style=wphase dot, fill=white]
\tikzstyle{qredphase}=[style=phase dot, fill=red]
\tikzstyle{qgreenphase}=[style=phase dot, fill=green]
\tikzstyle{had}=[style=hadamard, doubled]
\tikzstyle{box}=[style=hadamard]
\tikzstyle{bigbox}=[style=hadamard, minimum height=4mm, minimum width=8mm]
\tikzstyle{classhad}=[style=hadamard]
\tikzstyle{antipode}=[style=anti]

\tikzstyle{dottededge}=[-, dash pattern=on 1pt off 0.7pt]
\tikzstyle{double edge}=[-, style=doubled, draw=black, tikzit draw={rgb,255: red,18; green,168; blue,191}]
\tikzstyle{arrow}=[->]
\tikzstyle{new edge style 1}=[-, draw={rgb,255: red,242; green,233; blue,206}, fill={rgb,255: red,242; green,233; blue,206}]
\tikzstyle{morphism_shade}=[-, draw=black, fill={rgb,255: red,242; green,233; blue,206}, line join=bevel]
\tikzstyle{supermap_shade}=[-, fill={rgb,255: red,216; green,215; blue,242}, draw=black, line join=bevel]
\tikzstyle{hole_shade}=[-, fill=white, draw=black, line join=bevel]
\tikzstyle{new edge style 2}=[-, draw={rgb,255: red,14; green,188; blue,83}]
\tikzstyle{new edge style 3}=[<-, draw={rgb,255: red,234; green,209; blue,255}]
\tikzstyle{new edge style 4}=[<-, draw={rgb,255: red,255; green,0; blue,0}]
\tikzstyle{new edge style 5}=[-, draw={rgb,255: red,214; green,110; blue,62}]
\tikzstyle{new edge style 6}=[-, draw={rgb,255: red,174; green,20; blue,174}]
\tikzstyle{new edge style 0}=[-, fill=none, draw={rgb,255: red,255; green,0; blue,0}]
\tikzstyle{green_dotted}=[-, draw={rgb,255: red,0; green,106; blue,106}, dash pattern=on 1pt off 0.7pt]

%% file: figs/snake_2.tikz
\begin{tikzpicture}[scale=0.6, {every node/.style}={scale=0.8}]
	\begin{pgfonlayer}{nodelayer}
		\node [style=none] (2) at (-0.75, -1) {};
		\node [style=none] (3) at (0.75, 1) {};
	\end{pgfonlayer}
	\begin{pgfonlayer}{edgelayer}
		\draw [in=270, out=90] (2.center) to (3.center);
	\end{pgfonlayer}
\end{tikzpicture}

%% file: main_supermaps.bbl
\begin{thebibliography}{72}
\providecommand{\natexlab}[1]{#1}
\providecommand{\url}[1]{\texttt{#1}}
\expandafter\ifx\csname urlstyle\endcsname\relax
  \providecommand{\doi}[1]{doi: #1}\else
  \providecommand{\doi}{doi: \begingroup \urlstyle{rm}\Url}\fi

\bibitem[Abramsky and Coecke(2004)]{abramsky_coecke}
Samson Abramsky and Bob Coecke.
\newblock A categorical semantics of quantum protocols.
\newblock In \emph{Proceedings of the 19th Annual IEEE Symposium on Logic in
  Computer Science, 2004}, pages 415--425, 2004.
\newblock \doi{10.1109/LICS.2004.1319636}.

\bibitem[Apadula et~al.(2024)Apadula, Bisio, and
  Perinotti]{apadula2022nosignalling}
Luca Apadula, Alessandro Bisio, and Paolo Perinotti.
\newblock No-signalling constrains quantum computation with indefinite causal
  structure.
\newblock \emph{{Quantum}}, 8:\penalty0 1241, February 2024.
\newblock ISSN 2521-327X.
\newblock \doi{10.22331/q-2024-02-05-1241}.
\newblock URL \url{https://doi.org/10.22331/q-2024-02-05-1241}.

\bibitem[Barrett(2007)]{barrett_gpts}
Jonathan Barrett.
\newblock Information processing in generalized probabilistic theories.
\newblock \emph{Phys. Rev. A}, 75:\penalty0 032304, Mar 2007.
\newblock \doi{10.1103/PhysRevA.75.032304}.

\bibitem[Baumeler and Wolf(2016)]{baumeler_logically}
{\"A}min Baumeler and Stefan Wolf.
\newblock The space of logically consistent classical processes without causal
  order.
\newblock \emph{New Journal of Physics}, 18\penalty0 (1):\penalty0 013036,
  2016.
\newblock \doi{10.1088/1367-2630/18/1/013036}.

\bibitem[Baumeler et~al.(2014)Baumeler, Feix, and
  Wolf]{baumeler_incompatibility}
\"Amin Baumeler, Adrien Feix, and Stefan Wolf.
\newblock Maximal incompatibility of locally classical behavior and global
  causal order in multiparty scenarios.
\newblock \emph{Phys. Rev. A}, 90:\penalty0 042106, 2014.
\newblock \doi{10.1103/PhysRevA.90.042106}.

\bibitem[Bavaresco et~al.(2024)Bavaresco, Baumeler, Guryanova, and
  Budroni]{bavaresco2024indefinitecausalorderboxworld}
Jessica Bavaresco, {\"A}min Baumeler, Yelena Guryanova, and Costantino Budroni.
\newblock Indefinite causal order in boxworld theories, 2024.
\newblock URL \url{https://arxiv.org/abs/2411.00951}.

\bibitem[Bisio and Perinotti(2019)]{Bisio_2019}
Alessandro Bisio and Paolo Perinotti.
\newblock Theoretical framework for higher-order quantum theory.
\newblock \emph{Proceedings of the Royal Society A: Mathematical, Physical and
  Engineering Sciences}, 475\penalty0 (2225):\penalty0 20180706, 2019.
\newblock ISSN 1471-2946.
\newblock \doi{10.1098/rspa.2018.0706}.
\newblock URL \url{http://dx.doi.org/10.1098/rspa.2018.0706}.

\bibitem[Boisseau et~al.(2022)Boisseau, Nester, and
  Rom{\'a}n]{boisseau_cornering}
Guillaume Boisseau, Chad Nester, and Mario Rom{\'a}n.
\newblock Cornering optics, 2022.
\newblock URL \url{https://doi.org/10.48550/ARXIV.2205.00842}.

\bibitem[Chiribella and Liu(2022)]{Chiribella_Liu}
Giulio Chiribella and Zixuan Liu.
\newblock Quantum operations with indefinite time direction.
\newblock \emph{Communications Physics}, 5\penalty0 (1), 7 2022.
\newblock ISSN 2399-3650.
\newblock \doi{10.1038/s42005-022-00967-3}.
\newblock URL \url{http://dx.doi.org/10.1038/s42005-022-00967-3}.

\bibitem[Chiribella et~al.(2008{\natexlab{a}})Chiribella, D'Ariano, and
  Perinotti]{chiribella_circuits}
Giulio Chiribella, Giacomo~Mauro D'Ariano, and Paolo Perinotti.
\newblock Quantum circuit architecture.
\newblock \emph{Physical Review Letters}, 101\penalty0 (6):\penalty0 060401,
  2008{\natexlab{a}}.
\newblock \doi{10.1103/physrevlett.101.060401}.

\bibitem[Chiribella et~al.(2008{\natexlab{b}})Chiribella, D'Ariano, and
  Perinotti]{chiribella_supermaps}
Giulio Chiribella, Giacomo~Mauro D'Ariano, and Paolo Perinotti.
\newblock Transforming quantum operations: Quantum supermaps.
\newblock \emph{{EPL} (Europhysics Letters)}, 83\penalty0 (3):\penalty0 30004,
  2008{\natexlab{b}}.
\newblock \doi{10.1209/0295-5075/83/30004}.

\bibitem[Chiribella et~al.(2009)Chiribella, D'Ariano, and
  Perinotti]{chiribella_networks}
Giulio Chiribella, Giacomo~Mauro D'Ariano, and Paolo Perinotti.
\newblock Theoretical framework for quantum networks.
\newblock \emph{Phys. Rev. A}, 80:\penalty0 022339, 2009.
\newblock \doi{10.1103/PhysRevA.80.022339}.

\bibitem[Chiribella et~al.(2010)Chiribella, D'Ariano, and
  Perinotti]{chiribella_purification}
Giulio Chiribella, Giacomo~Mauro D'Ariano, and Paolo Perinotti.
\newblock Probabilistic theories with purification.
\newblock \emph{Phys. Rev. A}, 81:\penalty0 062348, 2010.
\newblock \doi{10.1103/PhysRevA.81.062348}.

\bibitem[Chiribella et~al.(2011)Chiribella, D'Ariano, and
  Perinotti]{chiribella_informational}
Giulio Chiribella, Giacomo~Mauro D'Ariano, and Paolo Perinotti.
\newblock Informational derivation of quantum theory.
\newblock \emph{Phys. Rev. A}, 84:\penalty0 012311, 2011.
\newblock \doi{10.1103/PhysRevA.84.012311}.

\bibitem[Chiribella et~al.(2013{\natexlab{a}})Chiribella, D'Ariano, Perinotti,
  and Valiron]{chiribella_switch}
Giulio Chiribella, Giacomo~Mauro D'Ariano, Paolo Perinotti, and Beno\^it
  Valiron.
\newblock Quantum computations without definite causal structure.
\newblock \emph{Phys. Rev. A}, 88:\penalty0 022318, 2013{\natexlab{a}}.
\newblock \doi{10.1103/PhysRevA.88.022318}.

\bibitem[Chiribella et~al.(2013{\natexlab{b}})Chiribella, Toigo, and
  Umanit{\`a}]{Chiribella2013NormalCP}
Giulio Chiribella, Alessandro Toigo, and Veronica Umanit{\`a}.
\newblock Normal completely positive maps on the space of quantum operations.
\newblock \emph{Open Systems \& Information Dynamics}, 20\penalty0
  (01):\penalty0 1350003, 2013{\natexlab{b}}.
\newblock \doi{10.1142/S1230161213500030}.
\newblock URL \url{https://doi.org/10.1142/S1230161213500030}.

\bibitem[Choi(1975)]{Choi_1975}
Man-Duen Choi.
\newblock Positive linear maps on complex matrices.
\newblock \emph{Linear Algebra and its Applications}, 10\penalty0 (3):\penalty0
  285--290, 1975.
\newblock ISSN 00243795.
\newblock \doi{10.1016/0024-3795(75)90075-0}.
\newblock URL \url{http://dx.doi.org/10.1016/0024-3795(75)90075-0}.

\bibitem[Clarke et~al.(2024)Clarke, Elkins, Gibbons, Loregian, Milewski,
  Pillmore, and Rom{\'{a}}n]{clarke_profunctor}
Bryce Clarke, Derek Elkins, Jeremy Gibbons, Fosco Loregian, Bartosz Milewski,
  Emily Pillmore, and Mario Rom{\'{a}}n.
\newblock Profunctor {O}ptics, a {C}ategorical {U}pdate.
\newblock \emph{{Compositionality}}, 6:\penalty0 1, February 2024.
\newblock ISSN 2631-4444.
\newblock \doi{10.32408/compositionality-6-1}.

\bibitem[Coecke(2010)]{coecke_picturalism}
Bob Coecke.
\newblock Quantum picturalism.
\newblock \emph{Contemporary Physics}, 51\penalty0 (1):\penalty0 59--83, 2010.
\newblock \doi{10.1080/00107510903257624}.

\bibitem[Coecke and Lal(2013)]{coecke_causalcats}
Bob Coecke and Raymond Lal.
\newblock Causal categories: Relativistically interacting processes.
\newblock \emph{Found Phys}, 43:\penalty0 458--501, 2013.
\newblock \doi{10.1007/s10701-012-9646-8}.

\bibitem[Coecke et~al.(2018)Coecke, Selby, and Tull]{coecke_classicality}
Bob Coecke, John Selby, and Sean Tull.
\newblock Two roads to classicality.
\newblock \emph{EPTCS}, 266:\penalty0 104--118, 2018.
\newblock \doi{10.4204/eptcs.266.7}.

\bibitem[Deutsch(1989)]{deutsch1989quantum}
David Deutsch.
\newblock Quantum computational networks.
\newblock \emph{Proceedings of the Royal Society of London. A. Mathematical and
  Physical Sciences}, 425\penalty0 (1868):\penalty0 73--90, 1989.
\newblock \doi{10.1098/rspa.1989.0099}.

\bibitem[Earnshaw et~al.(2023)Earnshaw, Hefford, and Rom{\'a}n]{earnshaw}
Matt Earnshaw, James Hefford, and Mario Rom{\'a}n.
\newblock The produoidal algebra of process decomposition, 2023.
\newblock URL \url{https://doi.org/10.48550/ARXIV.2301.11867}.

\bibitem[Giacomini et~al.(2016)Giacomini, Castro-Ruiz, and
  Brukner]{Giacomini_2016}
Flaminia Giacomini, Esteban Castro-Ruiz, and {\v C}aslav Brukner.
\newblock Indefinite causal structures for continuous-variable systems.
\newblock \emph{New Journal of Physics}, 18\penalty0 (11):\penalty0 113026,
  2016.
\newblock \doi{10.1088/1367-2630/18/11/113026}.
\newblock URL \url{https://dx.doi.org/10.1088/1367-2630/18/11/113026}.

\bibitem[Gogioso(2017)]{gogioso_fantastic}
Stefano Gogioso.
\newblock Fantastic quantum theories and where to find them.
\newblock \emph{arXiv Preprint}, arXiv:1703.10576, 2017.

\bibitem[Gogioso(2019)]{gogioso_cpm}
Stefano Gogioso.
\newblock Higher-order cpm constructions.
\newblock \emph{EPTCS}, 287:\penalty0 145--162, 2019.
\newblock \doi{10.4204/EPTCS.287.8}.

\bibitem[Gogioso and Scandolo(2018)]{gogioso_cpt}
Stefano Gogioso and Carlo~Maria Scandolo.
\newblock Categorical probabilistic theories.
\newblock \emph{EPTCS}, 266:\penalty0 367--385, 2018.
\newblock \doi{10.4204/EPTCS.266.23}.

\bibitem[Gogioso and Scandolo(2019)]{gogioso_hyper}
Stefano Gogioso and Carlo~Maria Scandolo.
\newblock Density hypercubes, higher order interference and hyper-decoherence:
  A categorical approach.
\newblock In \emph{Quantum Interaction}, pages 141--160. Springer International
  Publishing, 2019.
\newblock \doi{10.1007/978-3-030-35895-2_10}.

\bibitem[Guglielmi(2007)]{guglielmi}
Alessio Guglielmi.
\newblock A system of interaction and structure.
\newblock \emph{ACM Transactions on Computational Logic}, 8\penalty0 (1), 2007.
\newblock \doi{10.1145/1182613.1182614}.

\bibitem[Gutoski and Watrous(2007)]{Gutoski_2007}
Gus Gutoski and John Watrous.
\newblock Toward a general theory of quantum games.
\newblock In \emph{Proceedings of the Thirty-Ninth Annual ACM Symposium on
  Theory of Computing}, STOC '07, pages 565--574, New York, NY, USA, 2007.
  Association for Computing Machinery.
\newblock ISBN 9781595936318.
\newblock \doi{10.1145/1250790.1250873}.
\newblock URL \url{https://doi.org/10.1145/1250790.1250873}.

\bibitem[Hefford and Gogioso(2021{\natexlab{a}})]{hefford_galois}
James Hefford and Stefano Gogioso.
\newblock {CPM} categories for galois extensions.
\newblock In \emph{Proceedings QPL 2021}, volume 343, pages 165--192. Open
  Publishing Association, 2021{\natexlab{a}}.
\newblock \doi{10.4204/eptcs.343.9}.

\bibitem[Hefford and Gogioso(2021{\natexlab{b}})]{hefford_hypercubes}
James Hefford and Stefano Gogioso.
\newblock Hyper-decoherence in density hypercubes.
\newblock In \emph{Proceedings QPL 2020}, volume 340, pages 141--159. Open
  Publishing Association, 2021{\natexlab{b}}.
\newblock \doi{10.4204/eptcs.340.7}.

\bibitem[Hefford and Rom{\'a}n(2023)]{hefford_optics}
James Hefford and Mario Rom{\'a}n.
\newblock Optics for premonoidal categories.
\newblock In \emph{Proceedings ACT 2023}, volume 397, pages 152--171. Open
  Publishing Association, 2023.
\newblock \doi{10.4204/eptcs.397.10}.
\newblock URL \url{http://dx.doi.org/10.4204/EPTCS.397.10}.

\bibitem[Hefford and Wilson(2024)]{hefford_supermaps}
James Hefford and Matt Wilson.
\newblock {A Profunctorial Semantics for Quantum Supermaps}.
\newblock In \emph{Proceedings of the 39th Annual ACM/IEEE Symposium on Logic
  in Computer Science}, LICS '24, New York, NY, USA, 2024. Association for
  Computing Machinery.
\newblock \doi{10.1145/3661814.3662123}.

\bibitem[Hefford and Wilson(2025)]{hefford2025bvcategoryspacetimeinterventions}
James Hefford and Matt Wilson.
\newblock A bv-category of spacetime interventions, 2025.
\newblock URL \url{https://arxiv.org/abs/2502.19022}.

\bibitem[Hefford and Wilson(2026)]{hefford_decohere_causal}
James Hefford and Matt Wilson.
\newblock Decoherence to quantum theory from a causally indefinite post-quantum
  theory.
\newblock \emph{Phys. Rev. A}, 113:\penalty0 042433, Apr 2026.
\newblock \doi{10.1103/kmmy-3dy3}.
\newblock URL \url{https://link.aps.org/doi/10.1103/kmmy-3dy3}.

\bibitem[Hoffreumon and Oreshkov(2022)]{hoffreumon2022projective}
Timoth{\'e}e Hoffreumon and Ognyan Oreshkov.
\newblock Projective characterization of higher-order quantum transformations,
  2022.
\newblock URL \url{https://doi.org/10.48550/arXiv.2206.06206}.

\bibitem[Jamio{\l}kowski(1972)]{Jamiolkowski_1972}
Andrej Jamio{\l}kowski.
\newblock {Linear transformations which preserve trace and positive
  semidefiniteness of operators}.
\newblock \emph{Reports on Mathematical Physics}, 3\penalty0 (4):\penalty0
  275--278, 12 1972.
\newblock ISSN 00344877.
\newblock \doi{10.1016/0034-4877(72)90011-0}.
\newblock URL
  \url{http://linkinghub.elsevier.com/retrieve/pii/0034487772900110}.

\bibitem[Jen{\v c}ov{\'a}(2024)]{jencova2024structurehigherorderquantum}
Anna Jen{\v c}ov{\'a}.
\newblock On the structure of higher order quantum maps.
\newblock 2024.
\newblock URL \url{https://arxiv.org/abs/2411.09256}.

\bibitem[Joyal et~al.(1996)Joyal, Street, and Verity]{Joyal_Street_Verity_1996}
Andr{\'e} Joyal, Ross Street, and Dominic Verity.
\newblock Traced monoidal categories.
\newblock \emph{Mathematical Proceedings of the Cambridge Philosophical
  Society}, 119\penalty0 (3):\penalty0 447--468, 1996.
\newblock \doi{10.1017/S0305004100074338}.

\bibitem[Kissinger and Uijlen(2019)]{kissinger_caus}
Aleks Kissinger and Sander Uijlen.
\newblock {A categorical semantics for causal structure}.
\newblock \emph{Logical Methods in Computer Science}, 15, 2019.
\newblock ISSN 18605974.
\newblock \doi{10.23638/LMCS-15(3:15)2019}.

\bibitem[Lane(1998)]{maclane1998categories}
Saunders~Mac Lane.
\newblock \emph{Categories for the Working Mathematician}.
\newblock Graduate Texts in Mathematics. Springer Science+Business Media, New
  York, NY, 2 edition, September 1998.
\newblock ISBN 978-0-387-98403-2, 978-1-4419-3123-8, 978-1-4757-4721-8.
\newblock \doi{10.1007/978-1-4757-4721-8}.
\newblock URL \url{https://doi.org/10.1007/978-1-4757-4721-8}.
\newblock eBook package: Springer Book Archive; Copyright Springer
  Science+Business Media New York 1978.

\bibitem[Liu and Oreshkov(2025)]{liu2025parityerasurefoundationalprinciple}
Zixuan Liu and Ognyan Oreshkov.
\newblock Parity erasure: a foundational principle for indefinite causal order,
  2025.
\newblock URL \url{https://arxiv.org/abs/2512.08635}.

\bibitem[Loregian(2021)]{loregian_coend}
Fosco Loregian.
\newblock \emph{(Co)end Calculus}.
\newblock London Mathematical Society Lecture Note Series. Cambridge University
  Press, 2021.
\newblock \doi{10.1017/9781108778657}.

\bibitem[Mrini and Hardy(2024)]{mrini2024indefinitecausalstructurecausal}
Luke Mrini and Lucien Hardy.
\newblock Indefinite causal structure and causal inequalities with
  time-symmetry, 2024.
\newblock URL \url{https://arxiv.org/abs/2406.18489}.

\bibitem[Oreshkov et~al.(2012)Oreshkov, Costa, and Brukner]{oreshkov}
Ognyan Oreshkov, Fabio Costa, and {\v C}aslav Brukner.
\newblock Quantum correlations with no causal order.
\newblock \emph{Nature Communications}, 3\penalty0 (1092), 2012.
\newblock \doi{10.1038/ncomms2076}.

\bibitem[Pastro and Street(2008)]{pastro_street}
Craig Pastro and Ross Street.
\newblock Doubles for monoidal categories.
\newblock \emph{Theory and Applications of Categories}, 21\penalty0
  (4):\penalty0 61--75, 2008.

\bibitem[Paunkovic and Vojinovic(2023)]{paunkovic2023challenges}
Nikola Paunkovic and Marko Vojinovic.
\newblock Challenges for extensions of the process matrix formalism to quantum
  field theory, 2023.
\newblock URL \url{https://arxiv.org/abs/2310.04597}.

\bibitem[Pinzani et~al.(2019)Pinzani, Gogioso, and
  Coecke]{pinzani2019categoricalsemanticstimetravel}
Nicola Pinzani, Stefano Gogioso, and Bob Coecke.
\newblock Categorical semantics for time travel, 2019.
\newblock URL \url{https://arxiv.org/abs/1902.00032}.

\bibitem[Pollock et~al.(2018)Pollock, Rodr\'{\i}guez-Rosario, Frauenheim,
  Paternostro, and Modi]{process_tensor}
Felix~A. Pollock, C\'esar Rodr\'{\i}guez-Rosario, Thomas Frauenheim, Mauro
  Paternostro, and Kavan Modi.
\newblock Non-markovian quantum processes: Complete framework and efficient
  characterization.
\newblock \emph{Phys. Rev. A}, 97:\penalty0 012127, Jan 2018.
\newblock \doi{10.1103/PhysRevA.97.012127}.
\newblock URL \url{https://link.aps.org/doi/10.1103/PhysRevA.97.012127}.

\bibitem[Popescu and Rohrlich(1994)]{Popescu:1994aa}
Sandu Popescu and Daniel Rohrlich.
\newblock Quantum nonlocality as an axiom.
\newblock \emph{Foundations of Physics}, 24\penalty0 (3):\penalty0 379--385,
  1994.
\newblock \doi{10.1007/BF02058098}.
\newblock URL \url{https://doi.org/10.1007/BF02058098}.

\bibitem[Riley(2018)]{riley_optics}
Mitchell Riley.
\newblock Categories of optics.
\newblock 2018.
\newblock \doi{10.48550/arXiv.1809.00738}.
\newblock URL \url{https://doi.org/10.48550/arXiv.1809.00738}.

\bibitem[Rom{\'a}n(2020{\natexlab{a}})]{roman_comb}
Mario Rom{\'a}n.
\newblock Comb diagrams for discrete-time feedback, 2020{\natexlab{a}}.
\newblock URL \url{https://doi.org/10.48550/arXiv.2003.06214}.

\bibitem[Rom{\'a}n(2020{\natexlab{b}})]{roman_optics}
Mario Rom{\'a}n.
\newblock Profunctor optics and traversals, 2020{\natexlab{b}}.
\newblock URL \url{https://doi.org/10.48550/arXiv.2001.08045}.

\bibitem[Rom{\'a}n(2021)]{roman_coend}
Mario Rom{\'a}n.
\newblock Open diagrams via coend calculus.
\newblock In \emph{Proceedings ACT 2020}, volume 333, pages 65--78. Open
  Publishing Association, Feb 2021.
\newblock \doi{10.4204/eptcs.333.5}.

\bibitem[Rom{\'a}n(2023)]{roman_thesis}
Mario Rom{\'a}n.
\newblock \emph{Monoidal Context Theory}.
\newblock PhD thesis, Tallinn University of Technology, 2023.
\newblock URL \url{https://doi.org/10.23658/taltech.54/2023}.

\bibitem[Selby et~al.(2021)Selby, Scandolo, and
  Coecke]{Selby2021reconstructing}
John~H. Selby, Carlo~Maria Scandolo, and Bob Coecke.
\newblock Reconstructing quantum theory from diagrammatic postulates.
\newblock \emph{{Quantum}}, 5:\penalty0 445, April 2021.
\newblock ISSN 2521-327X.
\newblock \doi{10.22331/q-2021-04-28-445}.
\newblock URL \url{https://doi.org/10.22331/q-2021-04-28-445}.

\bibitem[Selinger(2008)]{selinger_idempotents}
Peter Selinger.
\newblock Idempotents in dagger categories: (extended abstract).
\newblock \emph{Electronic Notes in Theoretical Computer Science},
  210:\penalty0 107--122, 2008.
\newblock \doi{10.1016/j.entcs.2008.04.021}.

\bibitem[Sengupta(2024)]{sengupta2024achievingmaximalcausalindefiniteness}
Kuntal Sengupta.
\newblock Achieving maximal causal indefiniteness in a maximally nonlocal
  theory, 2024.
\newblock URL \url{https://arxiv.org/abs/2411.04201}.

\bibitem[Simmons and Kissinger(2022)]{SimmonsKissinger2022}
Will Simmons and Aleks Kissinger.
\newblock {Higher-Order Causal Theories Are Models of BV-Logic}.
\newblock In Stefan Szeider, Robert Ganian, and Alexandra Silva, editors,
  \emph{47th International Symposium on Mathematical Foundations of Computer
  Science (MFCS 2022)}, volume 241 of \emph{Leibniz International Proceedings
  in Informatics (LIPIcs)}, pages 80:1--80:14, Dagstuhl, Germany, 2022. Schloss
  Dagstuhl -- Leibniz-Zentrum f{\"u}r Informatik.
\newblock ISBN 978-3-95977-256-3.
\newblock \doi{10.4230/LIPIcs.MFCS.2022.80}.

\bibitem[Simmons and Kissinger(2024)]{simmons_completelogic}
Will Simmons and Aleks Kissinger.
\newblock A complete logic for causal consistency, 2024.

\bibitem[Tambara(2006)]{tambara}
Daisuke Tambara.
\newblock {Distributors on a tensor category}.
\newblock \emph{Hokkaido Mathematical Journal}, 35\penalty0 (2):\penalty0 379
  -- 425, 2006.
\newblock \doi{10.14492/hokmj/1285766362}.

\bibitem[Taranto et~al.(2025)Taranto, Milz, Murao, Quintino, and
  Modi]{taranto2025higherorderquantumoperations}
Philip Taranto, Simon Milz, Mio Murao, Marco~T{\'u}lio Quintino, and Kavan
  Modi.
\newblock Higher-order quantum operations, 2025.
\newblock URL \url{https://arxiv.org/abs/2503.09693}.

\bibitem[van~der Lugt et~al.(2023)van~der Lugt, Barrett, and
  Chiribella]{Lugt:2023aa}
Tein van~der Lugt, Jonathan Barrett, and Giulio Chiribella.
\newblock Device-independent certification of indefinite causal order in the
  quantum switch.
\newblock \emph{Nature Communications}, 14\penalty0 (1):\penalty0 5811, 2023.
\newblock \doi{10.1038/s41467-023-40162-8}.
\newblock URL \url{https://doi.org/10.1038/s41467-023-40162-8}.

\bibitem[Wechs et~al.(2021)Wechs, Dourdent, Abbott, and
  Branciard]{PRXQuantum.2.030335}
Julian Wechs, Hippolyte Dourdent, Alastair~A. Abbott, and Cyril Branciard.
\newblock Quantum circuits with classical versus quantum control of causal
  order.
\newblock \emph{PRX Quantum}, 2:\penalty0 030335, Aug 2021.
\newblock \doi{10.1103/PRXQuantum.2.030335}.
\newblock URL \url{https://link.aps.org/doi/10.1103/PRXQuantum.2.030335}.

\bibitem[Wilde(2013)]{Wilde_2013}
Mark~M. Wilde.
\newblock \emph{Quantum Information Theory}.
\newblock Cambridge University Press, 2013.

\bibitem[Wilson(2023)]{wilson2023compositional}
M.~Wilson.
\newblock \emph{Compositional Frameworks for Supermaps and Causality}.
\newblock Phd thesis, University of Oxford, 2023.

\bibitem[Wilson and Chiribella(2021)]{Wilson_causal}
Matt Wilson and Giulio Chiribella.
\newblock Causality in higher order process theories.
\newblock In \emph{Proceedings QPL 2021}, volume 343, pages 265--300. Open
  Publishing Association, 2021.
\newblock \doi{10.4204/eptcs.343.12}.
\newblock URL \url{http://dx.doi.org/10.4204/EPTCS.343.12}.

\bibitem[Wilson and Chiribella(2022)]{wilson_polycategories}
Matt Wilson and Giulio Chiribella.
\newblock Free polycategories for unitary supermaps of arbitrary dimension,
  2022.
\newblock URL \url{https://doi.org/10.48550/ARXIV.2207.09180}.

\bibitem[Wilson and Ormrod(2023)]{wilson2023originlinearityunitarityquantum}
Matt Wilson and Nick Ormrod.
\newblock On the origin of linearity and unitarity in quantum theory, 2023.
\newblock URL \url{https://arxiv.org/abs/2305.20063}.

\bibitem[Wilson et~al.(2022)Wilson, Chiribella, and Kissinger]{wilson_locality}
Matt Wilson, Giulio Chiribella, and Aleks Kissinger.
\newblock Quantum supermaps are characterized by locality, 2022.
\newblock URL \url{https://doi.org/10.48550/ARXIV.2205.09844}.

\bibitem[Ziman(2008)]{Ziman_2008}
M{\'{a}}rio Ziman.
\newblock Process positive-operator-valued measure: A mathematical framework
  for the description of process tomography experiments.
\newblock \emph{Physical Review A}, 77\penalty0 (6), 6 2008.
\newblock \doi{10.1103/physreva.77.062112}.
\newblock URL \url{https://doi.org/10.1103%2Fphysreva.77.062112}.

\end{thebibliography}
